\title{Subquadratic Algorithms for Algebraic Generalizations of 3SUM}
\author[1]{Luis Barba}
\author[2]{Jean Cardinal\thanks{Supported by the ``Action de Recherche Concert\'ee'' (ARC) COPHYMA, convention number 4.110.H.000023.}}
\author[3]{John Iacono\thanks{Research partially completed while on sabbatical at the
Algorithms Research Group of the
D\'{e}partement d'Informatique at the
Universit\'{e} Libre de Bruxelles with support from
a Fulbright Research Fellowship,
the Fonds de la Recherche Scientifique --- FNRS,
and NSF grants CNS-1229185, CCF-1319648, and CCF-1533564.}}
\author[2]{Stefan Langerman\thanks{Directeur de recherches du F.R.S.-FNRS}}
\author[2]{Aur\'elien Ooms\thanks{Supported by the Fund for Research Training in Industry and Agriculture (FRIA).}}
\author[4]{Noam Solomon\thanks{Supported by Grant 892/13 from the
Israel Science Foundation}}
\affil[1]{Department of Computer Science, ETH Zürich, Switzerland.\\
\texttt{luis.barba@inf.ethz.ch}}
\affil[2]{D\'epartement d'Informatique, Universit\'e libre de Bruxelles (ULB), Belgium.\\
\texttt{\{jcardin,slanger,aureooms\}@ulb.ac.be}}
\affil[3]{Department of Computer Science and Engineering, New York University (NYU), USA.\\
\texttt{socg2017@johniacono.com}}
\affil[4]{School of Computer Science, Tel Aviv University (TAU), Israel.\\
\texttt{noam.solom@gmail.com}}%
\authorrunning{L.\,Barba, J.\,Cardinal, J.\,Iacono, S.\,Langerman, A.\,Ooms and N.\,Solomon}
\subjclass{F.2.2 Nonnumerical Algorithms and Problems}
\keywords{3SUM, subquadratic algorithms, general position testing, range searching, dominance reporting, polynomial curves}
\begin{document}
\maketitle

\begin{abstract}
The 3SUM problem asks if an input $n$-set of real numbers contains a triple
whose sum is zero. We consider the 3POL problem, a natural generalization of
3SUM where we replace the sum function by a constant-degree polynomial in three
variables.
The motivations are threefold.
Raz, Sharir, and de Zeeuw gave an $O(n^{11/6})$ upper bound
on the number of solutions of
trivariate polynomial equations when the solutions are taken from the cartesian product
of three $n$-sets of real numbers. We give algorithms for the corresponding
problem of counting such solutions. Gr\o nlund and Pettie
recently designed subquadratic algorithms for 3SUM\@. We generalize their
results to 3POL\@. Finally, we shed light on the General
Position Testing~(GPT) problem: ``Given $n$ points in the plane, do three of
them lie on a line?'', a key problem in computational geometry.

We prove that there exist bounded-degree algebraic decision trees of depth
$O(n^{\frac{12}{7}+\varepsilon})$ that solve 3POL, and that 3POL can be solved
in $O(n^2 {(\log \log n)}^\frac{3}{2} / {(\log n)}^\frac{1}{2})$ time in the
real-RAM model.
Among the possible applications of those results, we show how to solve GPT in
subquadratic time when the input points lie on $o({(\log n)}^\frac{1}{6}/{(\log
\log n)}^\frac{1}{2})$ constant-degree polynomial curves.
This constitutes the first step towards closing the major open question of
whether GPT can be solved in subquadratic time.
To obtain these results, we generalize important tools --- such as batch range
searching and dominance reporting --- to a polynomial setting. We expect
these new tools to be useful in other applications.
\end{abstract}

\section{Introduction}

The 3SUM problem is defined as follows: given $n$ distinct real numbers, decide
whether any three of them sum to zero.
A popular conjecture is that no $O(n^{2-\delta})$-time algorithm for 3SUM
exists. This conjecture has been used to show conditional lower bounds for
problems in P, notably in computational geometry with problems such as
GeomBase, general position~\cite{GO95} and Polygonal Containment~\cite{BH01},
and more recently for string problems such as Local Alignment~\cite{AVW14} and
Jumbled Indexing~\cite{ACLL14}, as well as dynamic versions of graph
problems~\cite{P10,AV14}, triangle enumeration and Set
Disjointness~\cite{KPP16}.
For this reason, 3SUM is considered one of the key subjects of an
emerging theory of complexity-within-P, along with other problems such as
all-pairs shortest paths,
orthogonal vectors,
boolean matrix multiplication,
and conjectures such as the Strong Exponential Time
Hypothesis~\cite{AVY15,HKNS15,CGIMPS16}.

Because fixing two of the numbers $a$ and $b$ in a triple only allows for one
solution to the equation $a + b + x = 0$, an instance of 3SUM has at most
$n^2$ solution triples. An instance with a matching lower bound is for example
the set $\{\,\frac{1-n}{2},\ldots,\frac{n-1}{2}\,\}$ (for odd
$n$) with $\frac{3}{4} n^2 + \frac 14$ solution triples.
One might be tempted to think that the number of solutions to the problem
would lower bound the complexity of algorithms
for the decision version of the problem, as it is the case for restricted
models of computation~\cite{E99}.
This is a common misconception.
Indeed, Gr\o nlund and Pettie~\cite{GP14} recently proved that there exist
$\tilde{O}(n^{3/2})$-depth linear decision trees and $o(n^2)$-time real-RAM
algorithms for 3SUM.

A natural generalization of the 3SUM problem is to replace the sum function by
a constant-degree polynomial in three variables $F \in \mathbb{R}[x,y,z]$ and
ask to determine whether there exists any triple $(a,b,c)$ of input numbers
such that $F(a,b,c)=0$. We call this new problem the \emph{3POL problem}.

For the particular case $F(x,y,z) = f(x,y) - z$ where $f \in \mathbb{R}[x,y]$
is a constant-degree bivariate polynomial, Elekes and Rónyai~\cite{ER00} show
that the number of solutions to the 3POL problem is $o(n^2)$ unless $f$ is
\emph{special}. Special for $f$ means that $f$ has one of the two special forms
$f(u,v)=h(\varphi(u)+\psi(v))$ or $f(u,v)=h(\varphi(u)\cdot\psi(v))$, where
$h,\varphi,\psi$ are univariate polynomials of constant degree. Elekes and
Szabó~\cite{ES12} later generalized this result to a broader range of
functions $F$ using a wider
definition of specialness. Raz, Sharir and Solymosi~\cite{RSS14} and Raz,
Sharir and de Zeeuw~\cite{RSZ15} recently improved both bounds
on the number of solutions to
$O(n^{11/6})$.
They translated the problem into an incidence problem between points and
constant-degree algebraic curves. Then, they showed that unless $f$ (or $F$) is
special, these curves have low multiplicities. Finally, they applied a theorem
due to Pach and Sharir~\cite{PS98} bounding the number of incidences
between the points and the curves. Some of these ideas appear in our approach.

In computational geometry, it is customary to assume the real-RAM model can be
extended to allow the computation of roots of constant degree polynomials.
We distance ourselves from this practice and take particular care
of using the real-RAM model and the bounded-degree algebraic decision tree
model with only the four arithmetic operators.

\subsection{Our results}
We focus on the computational complexity of 3POL\@. Since 3POL contains 3SUM, an
interesting question is whether a generalization of Gr\o nlund and
Pettie's 3SUM algorithm exists for 3POL\@. If this is true, then we might wonder
whether we can beat the $O(n^{11/6}) = O(n^{1.833\ldots})$ combinatorial bound of Raz, Sharir and de
Zeeuw~\cite{RSZ15} with nonuniform algorithms. We give a positive answer to
both questions: we show there exist $O(n^2 (\log \log n)^\frac{3}{2} / \log
n)^\frac{1}{2})$-time real-RAM algorithms and
$O(n^{12/7+\varepsilon}) = O(n^{1.7143})$-depth bounded-degree algebraic
decision trees for 3POL\@.\footnote{Throughout this document, $\varepsilon$
denotes a positive real number that can be made as small as desired.}
To prove our main result, we present a fast algorithm for
the Polynomial Dominance Reporting (PDR) problem, a far reaching generalization of
the Dominance Reporting problem. As the algorithm for Dominance Reporting and
its analysis by Chan~\cite{Cha08} is used in fast algorithms for all-pairs
shortest paths, (min,+)-convolutions, and 3SUM, we expect this new algorithm
will have more applications.

Our results can be applied to many degeneracy testing problems, such
as the General Position Testing~(GPT) problem: ``Given $n$ points in the plane, do
three of them lie on a line?'' It is well known that GPT is 3SUM-hard,
and it is open whether GPT admits a subquadratic algorithm. Raz, Sharir
and de Zeeuw~\cite{RSZ15} give a combinatorial bound of $O(n^{11/6})$ on the
number of collinear triples when the input points are known to be lying on
a constant number of polynomial curves, provided those curves are neither
lines nor cubic curves. A corollary of our first result is that
GPT where the input points are constrained to lie on
$o((\log n)^\frac{1}{6}/(\log \log n)^\frac{1}{2})$
constant-degree polynomial curves (including lines and cubic curves)
admits a subquadratic real-RAM algorithm and
a strongly subquadratic bounded-degree algebraic decision tree.
Interestingly, both reductions from 3SUM to GPT on 3 lines (map $a$ to $(a,0)$,
$b$ to $(b,2)$, and $c$ to $(\frac c2, 1)$) and from 3SUM to GPT on a
cubic curve (map $a$ to $(a^3,a)$, $b$ to $(b^3,b)$, and $c$ to $(c^3,c)$)
construct such special instances of GPT\@.
This constitutes the first step towards closing the major open question of
whether GPT can be solved in subquadratic time.
This result is described in Appendix~\ref{sec:applications} where
we also explain how to apply our algorithms to the problems of counting
triples of points spanning unit circles or triangles.

\subsection{Definitions}

\subparagraph{3POL}
We look at two different generalizations of 3SUM\@. In the first one, which we
call 3POL, we replace the sum function by a trivariate polynomial of constant
degree.
\begin{problem}[3POL]
Let $F \in \mathbb{R}[x,y,z]$ be a trivariate polynomial of constant degree,
given three sets $A$, $B$, and $C$, each containing $n$ real numbers, decide
whether there exist $a \in A$, $b \in B$, and $c \in C$ such that
$F(a,b,c)=0$.
\end{problem}
The second one is a special case of 3POL where we restrict the trivariate
polynomial $F$ to have the form $F(a,b,c) = f(a,b) - c$.
\begin{problem}[explicit 3POL]
Let $f \in \mathbb{R}[x,y]$ be a bivariate polynomial of constant degree,
given three sets $A$, $B$, and $C$, each containing $n$ real numbers, decide
whether there exist $a \in A$, $b \in B$, and $c \in C$ such that $c=f(a,b)$.
\end{problem}
We look at both uniform and nonuniform algorithms for explicit 3POL and 3POL\@.
We begin with an $O(n^{\frac{12}{7}+\varepsilon})$-depth bounded-degree
algebraic decision tree for explicit 3POL in
\S\ref{sec:algo:explicit:nonuniform}.
In \S\ref{sec:algo:explicit:uniform}, we continue by giving a similar real-RAM
algorithm for explicit 3POL that achieves subquadratic running time.
In Appendix~\ref{sec:algo:implicit:nonuniform}, we go back to the
bounded-degree algebraic decision tree for explicit 3POL and generalize it to
work for 3POL\@.
Finally, in Appendix~\ref{sec:algo:implicit:uniform}, we give a similar real-RAM
algorithm for 3POL that runs in subquadratic time.

\subparagraph{Models of Computation}

Similarly to Gr\o nlund and Pettie~\cite{GP14}, we consider both nonuniform
and uniform models of computation.
For the nonuniform model, Gr\o nlund and Pettie consider linear
decision trees, where one is only allowed to manipulate the input numbers
through linear queries to an oracle. Each linear query has constant cost and
all other operations are free but cannot inspect the input.
In this paper, we consider
\emph{bounded-degree algebraic decision trees (ADT)}~\cite{R72,Y81,SY82},
a natural generalization of linear decision trees,
as the nonuniform model. In a bounded-degree algebraic decision tree, one
performs constant cost branching operations that amount to test the sign of
a constant-degree polynomial for a constant number of input numbers. Again,
operations not involving the input are free.
For the uniform model we consider the real-RAM model with only the four
arithmetic operators.

The problems we consider require our algorithms to manipulate polynomial
expressions and, potentially, their real roots. For that purpose, we will rely
on Collins cylindrical algebraic decomposition (CAD)~\cite{C75}.
To understand the power of this method, and why it is useful for us, we give some
background on the related concept of first-order theory of the reals.

\begin{definition}
	A Tarski formula $\phi \in \mathbb{T}$ is a grammatically correct formula consisting of real
	variables ($x \in \mathbb{X}$), universal and existential quantifiers on those real
	variables ($\forall,\exists\colon\,\mathbb{X}\times\mathbb{T}\to\mathbb{T}$),
	the boolean operators of conjunction and
	disjunction ($\land,\lor\colon\,\mathbb{T}^2\to\mathbb{T}$), the six
	comparison operators ($<,\le,=,\ge,>,\ne\colon\,\mathbb{R}^2\to\mathbb{T}$),
	the four arithmetic operators ($+,-,*,/\colon\,\mathbb{R}^2\to\mathbb{R}$),
	the usual parentheses that modify the priority of operators, and constant
	real numbers.
	A Tarski sentence is a fully quantified Tarski formula.
	The first-order theory of the reals (\FOTR{}) is
	the set of true Tarski sentences.
\end{definition}

Tarski~\cite{T51} and Seidenberg~\cite{Sei74} proved that \FOTR{} is decidable.
However, the algorithm resulting from their proof has nonelementary complexity.
This proof, as well as other known algorithms, are based on quantifier
elimination, that is, the translation of the input formula to a much longer
quantifier-free formula, whose validity can be checked.
There exists a family of formulas for which any method of quantifier elimination
produces a doubly exponential size quantifier-free formula~\cite{DH88}.
Collins CAD matches this doubly exponential complexity.
\begin{theorem}[Collins~\cite{C75}]
	\FOTR{} can be solved in $2^{2^{O(n)}}$ time.
\end{theorem}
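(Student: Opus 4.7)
The plan is to construct a \emph{cylindrical algebraic decomposition} (CAD) of $\mathbb{R}^k$ that is sign-invariant with respect to every polynomial appearing in the input Tarski sentence $\phi$, where $k$ is the number of variables and $n$ bounds the total input size. Once such a decomposition $\mathcal{D}$ is in hand, each cell $C \in \mathcal{D}$ carries, by construction, a representative sample point $p_C$ at which every atomic predicate of $\phi$ can be evaluated; moreover, the projection $\pi_i(\mathcal{D})$ onto the first $i$ coordinates is itself a cell decomposition of $\mathbb{R}^i$, and the cells of $\pi_{i-1}(\mathcal{D})$ partition each cell of $\pi_i(\mathcal{D})$ into finitely many "stacks." This cylindrical structure lets us evaluate quantifiers recursively from the innermost variable outward: an existential (resp.\ universal) quantifier on $x_i$ reduces to a disjunction (resp.\ conjunction) over the stack of cells lying over each cell of $\pi_{i-1}(\mathcal{D})$. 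Stripping quantifiers in this way yields the truth value of $\phi$.

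The CAD is built in two phases. The \emph{projection phase} repeatedly applies Collins' projection operator: starting from the family of polynomials $\mathcal{P}_k$ occurring in $\phi$, one forms $\mathcal{P}_{k-1}$ as the set of leading coefficients, discriminants, and pairwise resultants (with respect to $x_k$) of the polynomials in $\mathcal{P}_k$, and iterates down to a family $\mathcal{P}_1 \subset \mathbb{R}[x_1]$. The essential correctness lemma is that if $S \subset \mathbb{R}^{i-1}$ is a connected region on which every polynomial of $\mathcal{P}_{i-1}$ has constant sign, then the real roots of the polynomials in $\mathcal{P}_i$, viewed as continuous functions of $(x_1,\dots,x_{i-1}) \in S$, do not vanish, do not collide, and retain their multiplicities; hence they partition $S \times \mathbb{R}$ into finitely many sign-invariant sections and sectors. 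In the \emph{lifting phase} one then inductively constructs the cells of $\mathcal{D}$ by isolating the real roots of each polynomial in $\mathcal{P}_i$ above each sample point of $\pi_{i-1}(\mathcal{D})$, using standard root isolation on univariate polynomials with algebraic coefficients.

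The complexity analysis drives the double exponential bound. If $\mathcal{P}_i$ has $N_i$ polynomials of degree at most $d_i$ in $i$ variables, the projection operator produces $N_{i-1} = O(N_i^2)$ polynomials of degree $d_{i-1} = O(d_i^2)$, since resultants multiply degrees. After $k$ projection steps starting from $N_k, d_k \le n$ we obtain $N_1, d_1 \le n^{2^{O(k)}}$; the lifting step and the quantifier elimination over the resulting cell tree are polynomial in the total number of cells, which is again $n^{2^{O(k)}}$. Because $k \le n$, this yields the claimed $2^{2^{O(n)}}$ bound. Arithmetic with algebraic numbers appearing as sample points is handled symbolically by representing each as a pair (defining polynomial, isolating interval) and updating via resultants, which stays within the same complexity budget.

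The main obstacle is the correctness lemma for the projection operator: one must verify that resultants and discriminants are sufficient to detect every $(x_1,\dots,x_{i-1})$ at which the topological type of the family of roots in $x_i$ changes, so that sign-invariance genuinely lifts. This is a nontrivial piece of real algebraic geometry relying on the implicit function theorem applied to simple roots, together with a careful case analysis at multiple roots and at degenerations of the leading coefficient. Once this lemma is in place, the quantifier-elimination procedure and the double exponential complexity accounting follow by straightforward induction on $k$.
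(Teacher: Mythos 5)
The paper does not contain a proof of this theorem; it is stated as a direct citation to Collins~\cite{C75}, with the surrounding text merely noting that Collins' cylindrical algebraic decomposition attains the doubly exponential bound. Your sketch is a faithful and correct rendering of Collins' CAD argument (projection via leading coefficients, discriminants, and resultants; lifting with root isolation; recursive quantifier evaluation over the cell tree) together with the standard complexity accounting $N_{i-1}=O(N_i^2)$, $d_{i-1}=O(d_i^2)$ yielding $n^{2^{O(k)}}\le 2^{2^{O(n)}}$, which is precisely the approach of the reference the paper invokes.
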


See
Basu, Pollack, and Roy~\cite{BPR06} for additional details,
Basu, Pollack, and Roy~\cite{BPR96b} for a singly exponential algorithm when all
quantifiers are existential
(existential theory of the reals, \ETR{}),
Caviness and Johnson~\cite{CJ12} for an anthology of key papers on the subject,
and Mishra~\cite{M04} for a review of techniques to compute with roots of
polynomials.

Collins CAD solves any \emph{geometric} decision problem that does not involve
quantification over the integers in time doubly exponential in the problem
size. This does not harm our results as we exclusively use this algorithm to
solve constant size subproblems. Geometric is to be understood in the sense of Descartes and Fermat, that
is, the geometry of objects that can be expressed with polynomial equations. In
particular, it allows us to make the following computations in the real-RAM and
bounded-degree ADT models:
\begin{enumerate}
\setlength{\itemsep}{0pt}
\setlength{\parskip}{0pt}
\setlength{\parsep}{0pt}
\item Given a constant-degree univariate polynomial, count its real roots
	in $O(1)$ operations,
\item Given a constant number of univariate polynomials of constant degree,
compute the interleaving of their real roots in $O(1)$ operations,
\item Given a point in the plane and an arrangement of a constant number of
constant-degree polynomial planar curves, locate the point in the
arrangement in $O(1)$ operations.
\end{enumerate}

Instead of bounded-degree algebraic decision trees as the nonuniform model
we could consider decision trees in which
each decision involves a constant-size instance of the decision problem in the
first-order theory of the reals. The depth of a bounded-degree algebraic
decision tree simulating such a tree would only be blown up by a constant factor.

\subsection{Previous Results}
\subparagraph{3SUM}
For the sake of simplicity, we consider the following definition of 3SUM
\begin{problem}[3SUM]
Given 3 sets $A$, $B$, and $C$, each containing $n$ real numbers, decide
whether there exist $a \in A$, $b \in B$, and $c \in C$ such that $c=a+b$.
\end{problem}

A quadratic lower bound for solving 3SUM holds in a restricted model of
computation: the $3$-linear decision tree model. Erickson~\cite{E99}
and Ailon and Chazelle~\cite{AC05} showed
that in this model, where one is only allowed to test the sign of a linear
expression of up to three elements of the input, there are a quadratic number of
critical tuples to test.
\begin{theorem}[Erickson~\cite{E99}]
The depth of a \(3\)-linear decision tree for 3SUM is $\Omega(n^2)$.
\end{theorem}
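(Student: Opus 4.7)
The plan is to use Erickson's adversary/counting approach. I would fix the arithmetic-progression YES-instance $I^\star$ obtained from $A=B=C=\{0,1,\ldots,n-1\}$, which realizes $\Theta(n^2)$ zero-sum triples $(i,j,i+j)$ and in input space $\mathbb{R}^{3n}$ simultaneously meets the $\Theta(n^2)$ ``critical'' hyperplanes $H_{ij}:x_{a_i}+x_{b_j}-x_{c_{i+j}}=0$. For each critical triple I build a small perturbation pair $(I_{ij}^+,I_{ij}^-)$ of $I^\star$: $I_{ij}^-$ is pushed slightly off every critical hyperplane and is a NO-instance, while $I_{ij}^+$ is moved back onto $H_{ij}$ alone and is a YES-instance witnessed only by the triple $(i,j,i+j)$.

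Any correct 3-linear decision tree must distinguish $I_{ij}^+$ from $I_{ij}^-$, so at some node of the tree the 3-linear sign test must evaluate to opposite signs on these two inputs. The core lemma I would then prove is that any single 3-linear test $\alpha x_p+\beta x_q+\gamma x_r+\delta$ changes sign between $I_{ij}^+$ and $I_{ij}^-$ for only $O(1)$ critical pairs $(i,j)$. By designing the perturbations so that flipping the sign of a query requires the query's defining hyperplane to meet all three of the witnessing coordinates $x_{a_i},x_{b_j},x_{c_{i+j}}$, and since a 3-linear test touches only three input coordinates in total, the test can match in this way for at most a constant number of critical pairs. A union bound over the $\Theta(n^2)$ pairs that must be separated somewhere in the tree then forces some root-to-leaf path to have length $\Omega(n^2)$.

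The main obstacle is this $O(1)$-per-query lemma. A naive dot-product calculation between the query's gradient and a critical perturbation direction $e_{a_i}+e_{b_j}-e_{c_{i+j}}$ gives $O(n)$ critical triples with non-zero dot product (namely those sharing a single coordinate with the query), not $O(1)$, and this would only yield an $\Omega(n)$ lower bound. One therefore has to calibrate the magnitudes of the perturbations against the $3n$ degrees of freedom of a 3-linear query hyperplane so that a merely partial overlap in coordinates is insufficient to flip the sign; only when a query simultaneously sees the one $a$-index, the one $b$-index, and the one $c$-index witnessing a given critical triple can it separate the corresponding perturbation pair. This is the combinatorial heart of Erickson's proof and would require a careful adversarial construction, likely by choosing the slack offsets between $I_{ij}^+$ and $I_{ij}^-$ to be geometrically spaced so that the contribution along any two coordinates of a partial overlap is dominated by $\delta$. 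The remaining scaffolding --- the reduction from leaf distinguishability to tree depth, and the union bound --- is entirely standard.
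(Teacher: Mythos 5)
The paper does not prove this theorem—it cites it from Erickson~\cite{E99}—so there is no internal proof to compare against. Your high-level plan (fix a YES-instance lying on $\Theta(n^2)$ degeneracy hyperplanes, construct YES/NO perturbation pairs, show each $3$-linear test ``resolves'' only $O(1)$ of them, and count along a root-to-leaf path) is indeed the shape of Erickson's adversary argument, and you correctly identify the $O(1)$-per-query lemma as the crux and correctly observe that a naive gradient-overlap count gives only $O(n)$.

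The gap is that the mechanism you propose for closing the lemma does not work. You would like to fix perturbation pairs $(I^+_{ij}, I^-_{ij})$ in advance (``geometrically spaced'') so that a test can only separate the pair if it involves all three of $x_{a_i},x_{b_j},x_{c_{i+j}}$. But the affine constant $\delta$ in the query $\alpha x_p + \beta x_q + \gamma x_r + \delta$ is under the decision tree's control, and once $I^+_{ij}$ and $I^-_{ij}$ differ in even a single coordinate $x_{a_i}$, the one-variable query $x_{a_i} - v$ with $v$ strictly between the two values separates them—so partial overlap is always sufficient against a fixed scheme, no matter how the magnitudes are calibrated. More fundamentally, the obstruction is circular: for the tree to be correct, \emph{every} pair must be separated somewhere, yet if the adversary could make each query's overlap constraints impossible to meet with partial overlap, the tree could never distinguish any pair and would have to be incorrect—which tells you there is no static perturbation scheme achieving the lemma ``for all trees.'' Erickson's actual argument avoids this by being adaptive: the adversary maintains the input in a hierarchical (coarse/fine, geometric-scale) form and answers each query according to its highest-order nonzero contribution, committing to fine-scale information only when a query's coarse part vanishes, and it is this adaptive scale separation, not a pre-committed perturbation cloud, that enforces the $O(1)$ bound. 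Finally, the ``union bound over the $\Theta(n^2)$ pairs'' is not quite the right accounting either—different pairs may be resolved at different nodes of the tree, so one needs a potential argument tracking the number of unresolved critical triples along the single path followed by the hard instance, not a bound summed over all pairs in the whole tree.
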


While no evidence suggested that this lower bound could be extended to other
models of computation, it was eventually conjectured that 3SUM requires
$\Omega(n^2)$ time.

Baran et al.~\cite{BDP08} were the first to give concrete evidence
for doubting the conjecture.
They gave subquadratic Las Vegas algorithms for 3SUM, where input
numbers are restricted to be integer or rational, in the circuit RAM,
word RAM, external memory, and cache-oblivious models of computation. Their idea
is to exploit the parallelism of the models, using linear and
universal hashing.

Gr\o nlund~and~Pettie~\cite{GP14}, using a trick due to Fredman~\cite{F76},
recently showed that there exist subquadratic decision trees for 3SUM when the queries
are allowed to be $4$-linear.
\begin{theorem}[Gr\o nlund and Pettie~\cite{GP14}]
There is a $4$-linear decision tree of depth
$O(n^\frac{3}{2} \sqrt{\log n})$ for 3SUM\@.
\end{theorem}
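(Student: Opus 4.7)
The plan is to combine a blocking strategy with Fredman's algebraic trick~\cite{F76}. Fredman's observation is that a comparison of sums, $a + b \lessgtr a' + b'$, is equivalent to a comparison of differences, $a - a' \lessgtr b' - b$. Therefore, once a small collection of differences is sorted, the induced order on a much larger set of pairwise sums follows from transitivity, with no additional queries.

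First I would sort $A$, $B$, and $C$ using $O(n \log n)$ comparisons, fix a block size $g$ to be optimized later, and partition $A$ into $n/g$ consecutive blocks $A_1, \ldots, A_{n/g}$ of size $g$. For each block $A_i$, the task reduces to deciding whether $c - b \in A_i$ for some $b \in B$, $c \in C$, which is the same as merging the $n^2$ values $\{c - b : b \in B,\, c \in C\}$ into the sorted block $A_i$. Each comparison ``$c - b \lessgtr a$'' is a genuine 4-linear query $a + b \lessgtr c$; because $B$ and $C$ are sorted and $A_i$ is small, the comparisons are highly structured and Fredman's trick lets us resolve the whole merge with roughly $\tilde{O}(n + g^2)$ queries per block rather than the naive $O(n^2 \log g)$.

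Summed over the $n/g$ blocks, and added to the $O(n \log n)$ pre-sorting cost, the total 4-linear query count is $\tilde{O}(n^2/g + n g)$. Balancing the two terms at $g = \Theta(\sqrt{n \log n})$ yields the claimed depth of $O(n^{3/2}\sqrt{\log n})$.

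The main obstacle is making the per-block merge argument quantitative: Fredman's transitivity ``saving'' is only realized if one can amortize carefully over the $B \times C$ grid. I would do this by a monotone sweep of the differences $c - b$ that exploits the pre-sorted orders of $B$ and $C$, charging each actual 4-linear query to a transition across a gap of $A_i$; the number of such transitions is controlled by the block size, which is precisely what lets the $\sqrt{\log n}$ factor appear in the final bound. Getting the constants in this charging scheme right, while ensuring every comparison used is genuinely 4-linear (and not secretly higher-arity), is the delicate part of the argument.
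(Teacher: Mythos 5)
Your high-level instinct---Fredman's trick, a blocking parameter $g$, a $\sqrt{n\log n}$-style balance---points in the right direction, but the algorithm you sketch does not work, for two concrete reasons.

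First, the comparison you call ``$c - b \lessgtr a$'' is the sign test of $a + b - c$: it touches only three input numbers, so it is a \emph{3-linear} query, not a 4-linear one. By Erickson's lower bound (quoted earlier in this paper), any decision tree using only $3$-linear queries has depth $\Omega(n^2)$; no scheme whose queries all have the form ``$c - b \lessgtr a$'' can be subquadratic. The power that Gr\o nlund and Pettie exploit comes precisely from genuinely 4-linear queries $a - a' \lessgtr b' - b$, which compare two elements of $A$ against two elements of $B$ and never touch $C$.

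Second, and relatedly, you partition only $A$. Fredman's trick saves comparisons on sums $a + b$ versus $a' + b'$ by pre-sorting intra-block differences $a - a'$ and $b' - b$, which is only useful if both difference sets are small. Gr\o nlund and Pettie partition \emph{both} $A$ and $B$ into $n/g$ blocks of size $g$, so the relevant difference set $D = \bigcup_i (A_i - A_i) \cup \bigcup_j (B_j - B_j)$ has size $O(ng)$; sorting $D$ is the one place 4-linear queries appear, and once $D$ is sorted every $A_i + B_j$ can be sorted with zero further queries. Your scheme has no analogous control on $B$ or $C$, so the claimed $\tilde O(n + g^2)$ per-block cost is unsubstantiated: tracing the $g$ contours $c - b = a_{i,k}$ over the $B\times C$ grid costs $\Theta(ng)$ per block, i.e.\ $\Theta(n^2)$ overall, regardless of transitivity savings. (The stated balance is also off: with $g = \Theta(\sqrt{n\log n})$ the two terms $n^2/g$ and $ng$ do not balance; the correct trade-off is between the $O(ng)$-ish cost of sorting $D$ and the $O((n^2/g)\log g)$ cost of, for each $c\in C$, binary-searching in the $O(n/g)$ cells that the line $a+b=c$ crosses.) The missing structural idea---two-dimensional blocking plus the sort of $D$---is exactly what Section~\ref{sec:algo:explicit:nonuniform} generalizes to the polynomial setting.
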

They also gave deterministic and randomized
subquadratic real-RAM algorithms for 3SUM, refuting the conjecture.
Similarly to the subquadratic $4$-linear decision trees, these new results
use the power of $4$-linear queries.
These algorithms were later improved by Freund~\cite{F15} and
Gold~and~Sharir~\cite{GS15}.
\begin{theorem}[Gr\o nlund and Pettie~\cite{GP14}]
There is a
deterministic $O(n^2 {(\log \log n)}^{\sfrac{2}{3}} / {(\log n)}^{\sfrac{2}{3}})$-time
and
a randomized $O(n^2 {(\log \log n)}^2 / \log n )$-time
real-RAM algorithm for 3SUM\@.
\end{theorem}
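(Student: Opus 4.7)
The plan is to bootstrap the $O(n^{3/2}\sqrt{\log n})$-depth $4$-linear decision tree result just stated into a subquadratic real-RAM algorithm. The key observation is that while an individual $4$-linear query can be evaluated in $O(1)$ real-RAM operations, a naive simulation of the decision tree gives no efficient way to navigate it; we need an explicit algorithm whose branchings are driven by simple RAM operations and whose costly ``global'' comparisons are amortized via preprocessing.

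I would first sort $A$, $B$, $C$ in $O(n\log n)$ time, then partition $A$ and $C$ into $n/g$ contiguous sorted blocks of size $g$ each, for a parameter $g$ chosen below. For each pair of blocks $(A_i, C_k)$ the $g^2$ candidate differences $\{c-a : a \in A_i, c \in C_k\}$ form a Cartesian product, and by Fredman's trick any two such differences are compared by the $4$-term linear comparison $a+c'$ versus $a'+c$. Using this trick one can sort the $g^2$ differences within a block pair using $O(g^2)$ $4$-term comparisons rather than $O(g^2 \log g)$, and then merge them into $B$ in $O(n+g^2)$ additional time; each difference that coincides with an element of $B$ is detected.

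To save a sub-polylogarithmic factor on the real-RAM, I would precompute a lookup table indexed by the outcomes of all $4$-term comparisons on sub-blocks of size $s = \Theta(\sqrt{\log n})$: since there are $2^{O(s^2)} = n^{O(1)}$ outcome patterns, the table fits in subquadratic space, and each sub-block sort costs $O(1)$ time via a table lookup. Assembling sub-block sorts into block-pair sorts, then merging into $B$, gives a total cost of roughly $n^2/w$ plus lower-order terms, with $w = \Theta(\log n)$ the word size. A more delicate amortization, as in Gr\o nlund and Pettie, produces the specific exponent $2/3$ in the deterministic bound. The randomized variant dispenses with the table in favour of random pivots driving a batched merge.

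The main obstacle is to avoid paying the full $\Theta(g^2)$ sort cost per block pair irrespective of how many candidate solutions actually fall within each pair, which would wipe out the savings. This calls for a dominance-reporting style amortization of the merge step, the very technique we generalize to the polynomial setting later in the paper.
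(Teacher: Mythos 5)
The paper does not prove this theorem; it is quoted from Gr\o nlund and Pettie~\cite{GP14} as a prior result and serves only as background for the generalizations developed later. That said, your reconstruction of the GP argument has several concrete errors that would sink it as written.

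First, you misstate the direction of Fredman's trick. The trick does not let you sort the $g^2$ differences in a block pair using $O(g^2)$ \emph{new} $4$-linear comparisons; information-theoretically a sort of $g^2$ items needs $\Omega(g^2\log g)$ comparisons. What Fredman's trick buys is that after one \emph{global} sort of the set $D = (\cup_i[A_i - A_i]) \cup (\cup_k[C_k - C_k])$, which has $O(ng)$ elements and costs $O(ng\log n)$ comparisons, every pairwise comparison $c-a$ versus $c'-a'$ inside any block pair is resolved \emph{for free} because it reduces to comparing $a'-a$ with $c'-c$, both already ordered inside $D$. The savings are global, not per block pair. Second, your proposed ``merge into $B$ in $O(n+g^2)$ time per block pair'' is fatal: over $(n/g)^2$ block pairs this is $\Omega(n^3/g^2)$, which is never subquadratic for any admissible $g$. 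The actual algorithm instead, for each $c\in C$, visits only the $O(n/g)$ block pairs met by the line $a+b=c$ and binary-searches there, giving $O(n^2\log g/g)$ total search time. Third, the claim that the lookup-table variant gives ``roughly $n^2/w$ with $w=\Theta(\log n)$ the word size'' conflates word-RAM bit-parallelism (the Baran--Demaine--P\v{a}tra\c{s}cu approach for integer inputs) with the real-RAM model, where no such word-packing is available. In the real-RAM version, the lookup table is indexed not by raw comparison outcomes but by \emph{sorting certificates} of cells, $g$ is tuned to $\Theta(\sqrt{\log n/\log\log n})$ so that the $(g^2)!\,2^{O(g^2)}$ certificates yield a sub-polynomial table, and the preprocessing that assigns each cell to a certificate is a dominance-reporting computation --- this is exactly the structure \S\ref{sec:algo:explicit:uniform} and \S\ref{sec:algo:dominance} generalize. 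Finally, your sketch as written would at best track GP's \emph{simple} bound; the $(\log n)^{2/3}$ deterministic exponent comes from a more involved recursive scheme in GP that you would need to describe explicitly rather than gesture at with ``a more delicate amortization.''
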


Since then, the conjecture was eventually updated. This new conjecture is
considered an essential part of the theory of complexity-within-P.
\begin{conjecture}[3SUM Conjecture]\label{conj:3sum}
	There is no $O(n^{2-\delta})$-time algorithm for 3SUM\@.
\end{conjecture}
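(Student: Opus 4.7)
This final statement is the 3SUM Conjecture, a widely believed but open assertion; nothing in the paper or the current literature settles it. The honest proof proposal is therefore a menu of attack plans together with the obstacles that have so far blocked each one.

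The most direct plan is an information-theoretic adversary. One would fix a canonical hard distribution --- for instance the arithmetic progression $\{(1-n)/2,\dots,(n-1)/2\}$ used earlier in the excerpt, which already achieves $\Theta(n^2)$ solution triples --- and argue that any real-RAM algorithm must perform $\Omega(n^{2-\delta})$ sign tests on its input to distinguish a yes-instance from a no-instance obtained by a single coordinate perturbation. Erickson's $\Omega(n^2)$ lower bound against $3$-linear decision trees supplies a template, but the critical-triple argument breaks exactly because the Gr\o nlund--Pettie theorem quoted just above packs many triples into a single $4$-linear query via Fredman's trick~\cite{F76}. Any viable adversary must therefore defeat all constant-degree algebraic queries, not only linear ones, and must do so over the reals where the geometry of low-degree polynomial roots offers an algorithm substantial freedom. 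This is the main obstacle.

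A second plan is reduction from an unconditionally hard problem. No problem in $\mathrm{P}$ currently enjoys a super-linear unconditional lower bound in the real-RAM model, so this route presently has no source; to proceed one would first need a separation such as $\mathrm{P}\ne\mathrm{NP}$ transferred into fine-grained form, which is beyond present techniques. A third, more tractable plan is to harden the conditional picture: establish subquadratic-preserving equivalences between 3SUM, APSP, and Orthogonal Vectors so that a lower bound from any one of them under SETH suffices. I would pursue this by refining the reductions of~\cite{AVY15,HKNS15,CGIMPS16} and checking whether the $o(n^2)$ savings available from $4$-linear queries on 3SUM can be shown \emph{not} to transfer back from OV or APSP, which would at least rule out a common subquadratic barrier.

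Given the state of knowledge, the realistic deliverable is not a proof but this continued fine-grained web, which is precisely why the excerpt promotes Conjecture~\ref{conj:3sum} to an ``essential part of the theory of complexity-within-P\@.'' Unconditionally, one should expect progress only as a byproduct of a broader breakthrough in arithmetic circuit or real-algebraic lower bounds; the many 3SUM-hardness results surveyed in the introduction guarantee that any such proof would simultaneously resolve lower bounds for GPT, Polygonal Containment, Jumbled Indexing, and the cited dynamic graph problems, which is itself evidence that no short argument is to be expected.
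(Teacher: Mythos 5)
You correctly identify that this is an open conjecture, not a theorem: the paper does not prove it and merely records it as the (updated) working hypothesis of complexity-within-P, so there is nothing to compare against. Your survey of attack routes and obstacles is reasonable background, but the honest answer — which you give — is simply that no proof exists, and none is claimed in the paper.
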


\subparagraph{Elekes-Rónyai, Elekes-Szabó}
In a series of results spanning fifteen years,
Elekes and Rónyai~\cite{ER00},
Elekes and Szabó~\cite{ES12},
Raz, Sharir and Solymosi~\cite{RSS14}, and
Raz, Sharir and de Zeeuw~\cite{RSZ15}
give upper bounds on the number of solution triples to the 3POL problem.
The last and strongest result is the following
\begin{theorem}[Raz, Sharir and de Zeeuw~\cite{RSZ15}]
	Let $A$, $B$, $C$ be $n$-sets of real numbers and $F \in \mathbb{R}[x,y,z]$
	be a polynomial of constant degree, then the number of triples $(a,b,c) \in
	A \times B \times C$ such that $F(a,b,c)=0$ is $O(n^{11/6})$ unless $F$ has
	some group related form.\footnote{Because our results do not depend on the
	meaning of \emph{group related form}, we do not bother
	defining it here. We refer the reader to Raz, Sharir and de
	Zeeuw~\cite{RSZ15} for the exact definition.}
\end{theorem}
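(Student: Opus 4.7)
My plan would follow the three-step strategy outlined in the introduction: recast the counting question as a point--curve incidence problem, establish that the resulting curves have low multiplicity unless $F$ is special, and invoke the Pach--Sharir incidence bound. For the first step, fix $c \in C$ and let $\gamma_c := \{(x,y) \in \mathbb{R}^2 : F(x,y,c) = 0\}$; this is a plane algebraic curve of constant degree. Each triple $(a,b,c) \in A \times B \times C$ with $F(a,b,c) = 0$ corresponds to an incidence between the point $(a,b) \in P := A \times B$ and the curve $\gamma_c \in \Gamma := \{\gamma_c\}_{c \in C}$, so the quantity to bound is exactly the incidence count $I(P, \Gamma)$ between the $n^2$-point set $P$ and the $n$-curve family $\Gamma$.

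The heart of the argument, and its main obstacle, is controlling the multiplicity type of $\Gamma$, that is, the maximum number of curves that can pass through any pair of points of $P$. Given two points $(a_1, b_1), (a_2, b_2) \in P$, they lie on a common $\gamma_c$ iff $c$ is a common root of the univariate polynomials $F(a_1, b_1, \cdot)$ and $F(a_2, b_2, \cdot)$, which generically yields $O(1)$ values of $c$ by B\'ezout. However, this multiplicity can be unboundedly large on many pairs simultaneously, corresponding geometrically to the auxiliary variety
\[
V := \{(x_1, y_1, x_2, y_2) \in \mathbb{R}^4 : \exists z,\ F(x_1,y_1,z) = 0 = F(x_2,y_2,z)\}
\]
having anomalously high dimension. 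The structural dichotomy of Elekes--Szab\'o type, and the hardest step of the proof, would be to show that such pathological multiplicity can only occur when $F$ factors through a one-parameter algebraic group law (additive, multiplicative, or an elliptic-curve group), i.e., when $F$ has the exceptional ``group-related form'' of the statement. This is a delicate exercise in algebraic geometry, passing through the analysis of the generic fibers of $F$ and the classification of one-dimensional algebraic groups.

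Assuming $F$ is not group-related, the multiplicity of $\Gamma$ is $O(1)$. I would then combine this with a Cauchy--Schwarz bound: writing $I_c := |P \cap \gamma_c|$, we have
\[
\frac{I(P, \Gamma)^2}{n} \leq \sum_{c \in C} I_c^2 = I(P, \Gamma) + \sum_{c} I_c(I_c - 1) \leq I(P, \Gamma) + O(N_2),
\]
where $N_2$ counts ordered pairs in $P^2$ that lie simultaneously on a common curve of $\Gamma$. The low-multiplicity property, together with a Pach--Sharir incidence estimate applied to the (now low-dimensional) variety $V$ and its projections, yields $N_2 = O(n^{11/3})$. Substituting back into the displayed inequality produces the desired bound $I(P, \Gamma) = O(n^{11/6})$.
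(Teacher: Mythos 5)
The statement you are proving is cited in the paper from Raz, Sharir and de Zeeuw~\cite{RSZ15}; the paper does not prove it, but the introduction does sketch the strategy (translate to a point--curve incidence problem, show the curves have low multiplicity unless $F$ is special, apply Pach--Sharir). Your outline matches that strategy and correctly identifies its three ingredients: the incidence reformulation with curves $\gamma_c$ and points $(a,b)\in A\times B$, the Elekes--Szab\'o-type structural dichotomy controlling multiplicity, and the Cauchy--Schwarz/Pach--Sharir combination. You also rightly flag the dichotomy as the deep step you are not attempting to prove.

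There is, however, a concrete numerical error at the end. Plugging your stated bound $N_2 = O(n^{11/3})$ into the displayed inequality $I(P,\Gamma)^2/n \le I(P,\Gamma) + O(N_2)$ gives $I(P,\Gamma)^2 \le n\,I(P,\Gamma) + O(n^{14/3})$, hence $I(P,\Gamma) = O(n^{7/3})$, not $O(n^{11/6})$. The bound you need is $N_2 = O(n^{8/3})$. To get it, you should make the vague "Pach--Sharir applied to $V$ and its projections" step concrete: interpret $N_2$ as the number of incidences between the $n^2$ points $(a,a')\in A\times A$ and the $n^2$ resultant curves $\gamma_{b,b'} := \{(x,y)\,:\,\exists z,\ F(x,b,z)=F(y,b',z)=0\}$ (these are exactly the curves the paper defines in its algorithmic sections). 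The structural dichotomy shows this curve family has two degrees of freedom and bounded multiplicity, so Pach--Sharir gives $O\bigl((n^2)^{2/3}(n^2)^{2/3} + n^2\bigr) = O(n^{8/3})$, and then $I(P,\Gamma) = O\bigl(\sqrt{n\cdot n^{8/3}}\bigr) = O(n^{11/6})$ as desired.
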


Raz, Sharir and de Zeeuw~\cite{RSZ15} also look at the number of solution
triples for the General Position Testing problem when the input is restricted
to points lying on a constant number of constant-degree algebraic curves.
\begin{theorem}[Raz, Sharir and de Zeeuw~\cite{RSZ15}]\label{thm:rsz15:col}
Let $C_1, C_2, C_3$ be three (not necessarily distinct) irreducible algebraic curves
of degree at most $d$ in $\mathbb{C}^2$, and let $S_1 \subset C_1, S_2 \subset C_2, S_3
\subset C_3$ be finite subsets. Then the
number of proper collinear triples in $S_1 \times S_2 \times S_3$ is
\begin{displaymath}
	O_d( |S_1|^{1/2} |S_2|^{2/3} |S_3|^{2/3} + |S_1|^{1/2} (|S_1|^{1/2} + |S_2| +
|S_3| ) ),
\end{displaymath}
unless $C_1 \cup C_2 \cup C_3$ is a line or a cubic curve.
\end{theorem}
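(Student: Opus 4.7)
The plan is to convert the collinearity counting problem into an incidence counting problem between constant-degree algebraic curves and grid points in a parameter plane, apply a Pach--Sharir-type bound, and finally identify the exceptional configurations in which that bound fails as being precisely the line and cubic cases.

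First I would parameterize each $C_i$ rationally: since $C_i$ is an irreducible plane curve of constant degree, there is a rational map $\gamma_i\colon\mathbb{C}\to C_i$ of constant degree, so the subset $S_i\subset C_i$ pulls back to a finite parameter set $A_i$. The collinearity of $\gamma_1(a),\gamma_2(b),\gamma_3(c)$ is the vanishing of the $3\times 3$ determinant in their homogeneous coordinates, which expands into a single polynomial equation $F(a,b,c)=0$ of degree $O_d(1)$. Counting proper collinear triples in $S_1\times S_2\times S_3$ then reduces to counting zeros of $F$ on $A_1\times A_2\times A_3$, up to a bounded correction for degenerate triples in which two of the three points coincide.

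Second, I would fix the asymmetric role of $S_1$ and define, for each $a\in A_1$, the planar algebraic curve $\delta_a=\{(b,c)\in\mathbb{C}^2:F(a,b,c)=0\}$ of degree $O_d(1)$. Any two distinct such curves meet in $O_d(1)$ points, since a common incidence $(b,c)$ forces $\gamma_2(b),\gamma_3(c)$ to lie on the line joining $\gamma_1(a)$ and $\gamma_1(a')$, and this line meets $C_2$ and $C_3$ in only constantly many points each by B\'ezout. The collinear triples are now exactly the incidences between the $|S_1|$ curves $\{\delta_a\}$ and the $|S_2|\cdot|S_3|$ grid points of $A_2\times A_3$. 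A direct application of the Pach--Sharir incidence bound, combined with a dyadic decomposition that exploits the product structure of the point set and the constant number of points from each $C_i$ on any given line, yields the asserted asymmetric bound.

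The main obstacle is the case where Pach--Sharir breaks down, which happens when many of the curves $\delta_a$ share a common irreducible component, equivalently when $F$ satisfies the special \emph{group related} form in the sense of Elekes--Szab\'o. Here I would argue that, because $F$ encodes collinearity of three points on algebraic curves, any such degeneracy must come from the group law of a one-dimensional algebraic group acting compatibly with the parameterizations. The classification of such groups (additive, multiplicative, or elliptic) then translates into the geometric dichotomy that $C_1\cup C_2\cup C_3$ is either a line (in the additive and multiplicative cases) or an irreducible cubic (in the elliptic case, where three points on the cubic are collinear if and only if they sum to zero under the group law), which are precisely the exceptions appearing in the statement.
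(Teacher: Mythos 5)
This statement is not proved in the paper; it is imported verbatim from Raz, Sharir and de Zeeuw~\cite{RSZ15}, so there is no ``paper's own proof'' to compare against. Evaluating your proposal on its own terms, it captures the correct high-level philosophy --- reduce to point/curve incidences, apply a Pach--Sharir-type bound, and attribute failure of that bound to a one-dimensional algebraic group --- but it has two gaps, one of which is fatal to the stated exponent.

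The fatal gap is the incidence setup. You propose to count incidences between the $|S_1|$ curves $\delta_a \subset \mathbb{C}^2$ and the $|S_2|\cdot|S_3|$ grid points of $A_2 \times A_3$. For pseudoline-like families (two degrees of freedom), Pach--Sharir gives $O\bigl(m^{2/3}n^{2/3} + m + n\bigr)$ incidences; with $m = |S_2||S_3|$ points and $n = |S_1|$ curves this is $O\bigl(|S_1|^{2/3}(|S_2||S_3|)^{2/3} + |S_2||S_3| + |S_1|\bigr)$. When $|S_1|=|S_2|=|S_3|=n$ this is $O(n^2)$ --- the trivial bound --- not the claimed $O(n^{11/6})$, and the additive $|S_2||S_3|$ term alone already exceeds the theorem's error term $|S_1|^{1/2}(|S_1|^{1/2}+|S_2|+|S_3|)$. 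The asymmetric exponent $|S_1|^{1/2}$ cannot come from a direct application of Pach--Sharir; it comes from first applying Cauchy--Schwarz in the $S_1$ variable to pass to a ``quadruple count'' $Q = |\{(a,a',b,c): F(a,b,c)=F(a',b,c)=0\}|$, so that $M^2 \le |S_1| \cdot Q$, and then bounding $Q$ by an incidence count between the $|S_2|^2$ points $(b,b')$ and the $|S_3|^2$ curves $\gamma_{c,c'}$ (or a symmetric variant). This squaring step is precisely what converts the incidence exponent $2/3$ into the $1/2$ in front of $|S_1|$ and yields the nontrivial $n^{11/6}$. Your mention of ``a dyadic decomposition that exploits the product structure'' gestures in this direction but is not a substitute for the Cauchy--Schwarz reduction; without it the argument does not close.

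The second gap is the parameterization. You write ``since $C_i$ is an irreducible plane curve of constant degree, there is a rational map $\gamma_i\colon\mathbb{C}\to C_i$ of constant degree.'' This is false in general: only genus-zero curves are rational, and an irreducible degree-$d$ plane curve can have genus as large as $\binom{d-1}{2}$ (a smooth plane quartic is already non-rational). Raz--Sharir--de Zeeuw circumvent this by formulating their main result for a two-dimensional variety $Z \subset C_1 \times C_2 \times C_3$ and using local analytic parameterizations on small pieces of the curves, rather than a single global rational parameterization. Your final paragraph on the exceptional configurations (additive/multiplicative groups giving lines, elliptic groups giving cubics) is the right heuristic, but as written it is an assertion, not an argument; the actual RSZ work has to rule out the group-related form for the collinearity polynomial by a case analysis on degrees.

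Your intersection bound (any two curves $\delta_a, \delta_{a'}$ meet in $O_d(1)$ grid points, via B\'ezout applied to the line through $\gamma_1(a)$ and $\gamma_1(a')$) is correct and is indeed a useful ingredient for controlling multiplicities, but it does not by itself repair the exponent loss described above.
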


Recently, Nassajian Mojarrad, Pham, Valculescu and de Zeeuw~\cite{MPVd16} and
Raz, Sharir and de Zeeuw~\cite{RSZ16} proved bounds for versions of the
problem where $F$ is a $4$-variate polynomial.

\section{Nonuniform algorithm for explicit 3POL}
\label{sec:algo:explicit:nonuniform}

We begin with the description of a nonuniform algorithm for explicit 3POL which
we use later as a basis for other algorithms. We prove the following:
\begin{theorem}\label{thm:explicit:act}
	There is a bounded-degree ADT of depth
	$O(n^{\frac{12}{7}+\varepsilon})$
	for 3POL\@.
\end{theorem}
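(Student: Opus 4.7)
The plan is to adapt the subquadratic 4-linear decision tree of Gr\o nlund and Pettie for 3SUM. Their algorithm applies Fredman's trick ($a+b < a'+b'$ iff $a-a' < b'-b$) to split a sum comparison into two independent univariate ones; this decomposition is unavailable for a general constant-degree bivariate $f$, so the central task is to replace Fredman's trick with a polynomial range-searching primitive that the free preprocessing of the bounded-degree ADT model can exploit.

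First, sort $A$, $B$, $C$ (free in the ADT model) and partition $A$ into $n/g$ groups $A_1,\dots,A_{n/g}$ of $g$ consecutive elements, where $g$ is a parameter to be optimized at the end. The restriction of explicit 3POL to a single slab $A_i$ asks whether any of the $n^2$ query points of $B\times C$ is incident to one of the $g$ constant-degree planar curves $\gamma_a : z = f(a,y)$ for $a \in A_i$, in the $(y,z)$-plane. I would answer these batched incidence queries with a polynomial dominance reporting (PDR) structure: for each slab, build a hierarchical $(1/r)$-cutting of the arrangement of the $g$ curves, so that a query point $(b,c)$ can be located against the curves with a bounded number of polynomial sign tests, and share the point-location work across slabs by traversing $B$ and $C$ in sorted order.

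For the parameter analysis, the per-slab work decomposes into (i) a cutting construction and incidence-list extraction term of order $gn$, plus (ii) a point-location term whose amortized cost per query is $g^{\alpha}$ for a sub-linear exponent $\alpha$ derived from the Agarwal--Matou\v{s}ek partition theorems for semialgebraic curves of constant description complexity. Summing over the $n/g$ slabs and balancing $g$ together with the cutting parameter $r$ against both terms yields the claimed $O(n^{12/7+\varepsilon})$ depth, which arises precisely at the trade-off where the slab-construction term and the batched-query term become equal. Because preprocessing is free in the ADT model, the only operations counted are the polynomial sign tests actually needed along a single root-to-leaf path.

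The main obstacle is establishing the PDR primitive with the right exponent. The axis-aligned structure that Chan's dominance reporting exploits is lost once the $\gamma_a$ are arbitrary constant-degree algebraic curves, so we must replace halfspace cuttings by semialgebraic cuttings whose size bounds introduce only polylogarithmic overhead into the comparison count (hence the $n^{\varepsilon}$ loss in the final exponent). I also expect care to be needed so that when $f(\cdot, b)$ fails to be monotone, the $O(1)$ critical points of each fibre are handled by splitting into monotone pieces, so that the sorted orderings of the value sets $\{f(a,b) : a \in A_i\}$ remain free given the sortings of $A_i$ and $B$. Once the PDR structure is in place, the rest of the argument follows the Gr\o nlund--Pettie template of slab-based decomposition and batched search.
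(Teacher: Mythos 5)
Your proposal misses the central technical ingredient of the paper: the generalization of Fredman's trick. You write that ``this decomposition is unavailable for a general constant-degree bivariate $f$,'' but the paper's Lemma~\ref{lem:fredman} shows precisely that it \emph{is} available. For the curve $\gamma_{b,b'} = \{(x,y) : f(x,b) = f(y,b')\}$, the comparison $f(a,b) \lessgtr f(a',b')$ reduces to locating the point $(a,a')$ with respect to $\gamma_{b,b'}$. This decouples the comparison into a piece depending only on the pair $(a,a')$ and a piece depending only on the pair $(b,b')$, which is exactly what Fredman's trick does in the linear case. Without this decoupling, there is no way to sort the value sets cheaply enough, and the rest of the Gr\o nlund--Pettie template does not go through.

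Concretely, the paper partitions \emph{both} $A$ and $B$ into $n/g$ blocks, forming a 2D grid of $(n/g)^2$ cells $A_i \times B_j$. The free preprocessing sorts each set $f(A_i \times B_j)$, and the number of comparisons needed is controlled because every comparison is a point-curve location of some $(a,a') \in A_i \times A_i$ against some $\gamma_{b,b'}$ with $(b,b') \in B_j \times B_j$ --- there are only $N = ng$ such points and $N$ such curves, and this batch location problem costs $O(N^{4/3+\varepsilon})$ (Lemma~\ref{lem:batch}). The search phase then costs $O(n \cdot (n/g) \cdot \log g)$ because each curve $c = f(x,y)$ crosses only $O(n/g)$ cells of the 2D grid (a staircase argument, Lemma~\ref{lem:intersections}). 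Balancing $(ng)^{4/3}$ against $n^2 \log g / g$ gives $g = \Theta(n^{2/7-\varepsilon})$ and depth $O(n^{12/7+\varepsilon})$. Your 1D partitioning of $A$ alone provides neither the staircase bound on the search phase nor the reduction of sorting to an $O((ng)^{4/3+\varepsilon})$-size range-searching problem; indeed the per-slab term $O(gn)$ you mention, summed over $n/g$ slabs, is already $\Theta(n^2)$, so the sketch cannot be subquadratic as stated. Finally, note that the paper invokes Polynomial Dominance Reporting only for the \emph{uniform} algorithm (Theorem~\ref{thm:explicit:uniform}); the nonuniform algorithm uses Polynomial Batch Range Searching, a different primitive, and your proposal conflates the two.
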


\subparagraph{Idea}
The idea is to partition the sets $A$ and $B$ into small groups of consecutive
elements. That way, we can divide the $A\times B$ grid into cells with the
guarantee that each curve $c = f(x,y)$ in this grid intersects a small number
of cells. For each such curve and each cell it intersects, we
search $c$ among the values $f(a,b)$ for all $(a,b)$ in a given intersected
cell. We generalize Fredman's trick~\cite{F76} --- and how it is used in Gr\o
nlund and Pettie's paper~\cite{GP14} --- to quickly obtain a sorted order on
those values, which provides us a logarithmic search time for each cell.
Below is a sketch of the algorithm.
\begin{algorithm}[Nonuniform algorithm for explicit 3POL]\label{algo:ne}
\item[input] $A = \{\,a_1 < \cdots < a_n\,\},B = \{\,b_1<\cdots<b_n\,\},
    C = \{\,c_1<\cdots<c_n\,\}
    \subset \mathbb{R}$.
\item[output] \emph{accept} if $\exists\, (a,b,c) \in A \times B \times C$ such that $c
    = f(a,b)$, \emph{reject} otherwise.
\item[1.] Partition the intervals $[a_1,a_n]$ and $[b_1,b_n]$ into blocks
    $A_i^*$ and $B_j^*$ such that $A_i = A \cap A_i^*$ and $B_j = B
    \cap B_j^*$ have size $g$.
\item[2.] Sort the sets $f(A_i \times B_j) = \{\, f(a,b) \st (a,b) \in A_i
    \times B_j\,\}$ for all $A_i,B_j$. This is the only step that
    is nonuniform.
\item[3.] For each $c \in C$,
\item[3.1.] For each cell $A_i^* \times B_j^*$ intersected by the curve
$c=f(x,y)$,
\item[3.1.1.] Binary search for $c$ in the sorted set $f(A_i \times B_j)$.
If $c$ is found, \emph{accept} and halt.
\item[4.] \emph{reject} and halt.
\end{algorithm}
Note that it is easy to modify the algorithm to count or report the solutions.
In the latter case, the algorithm becomes output sensitive.
Like in Gr\o nlund and Pettie's $\tilde{O}(n^{\frac 32})$ decision tree for
3SUM~\cite{GP14}, the tricky part is to give an efficient implementation of
step~\textbf{2}.

\subparagraph{$A \times B$ grid partitioning}
Let $A = \{\,a_1<a_2<\cdots<a_n\,\}$ and $B = \{\,b_1<b_2<\cdots<b_n\,\}$.
For some positive integer $g$ to be determined later, partition the interval
\([a_1,a_n]\) into \(n/g\)
blocks $A_1^*,A_2^*,\ldots,A_{n/g}^*$ such that each block
contains \(g\) numbers in \(A\). Do the same for the interval
\([b_1,b_n]\) with the numbers in \(B\) and name the blocks of this
partition $B_1^*,B_2^*,\ldots,B_{n/g}^*$.
For the sake of simplicity, and without loss of generality, we assume
here that $g$ divides $n$.  We continue to make this assumption in the
following sections.
To each of the \({ ( n/g ) }^2\) pairs of blocks $A_i^*$ and $B_j^*$
corresponds a cell $A_i^* \times B_j^*$.
By definition, each cell contains $g^2$ pairs in \(A \times
B\).
For the sake of notation,
we define
$A_i = A \cap A_i^* = \{\,a_{i,1} < a_{i,2} < \cdots < a_{i,g}\,\}$
and
$B_j = B \cap B_j^* = \{\,b_{j,1} < b_{j,2} < \cdots < b_{j,g}\,\}$.
Figure~\ref{fig:partition} depicts this construction.

\begin{figure}
	\centering
    \begin{subfigure}[t]{0.5\textwidth}
	\includegraphics[width=\textwidth]{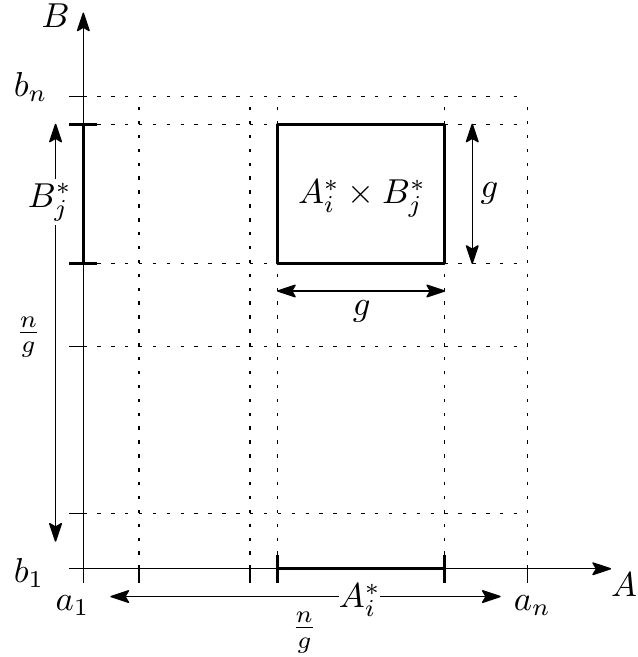}
	\caption{Partitioning \(A\) and \(B\).}\label{fig:partition}
    \end{subfigure}%
    \begin{subfigure}[t]{0.5\textwidth}
	\includegraphics[width=\textwidth]{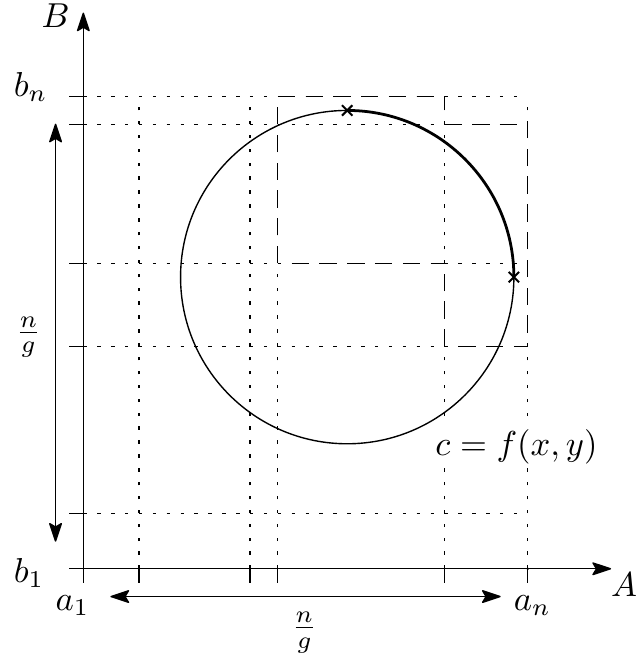}
	\caption{An $xy$-monotone arc of $c=f(x,y)$ intersects a staircase of at most
	$2 \frac ng - 1$ cells in the grid.}\label{fig:intersected-cells}
    \end{subfigure}
    \caption{Properties of the $A\times B$ grid.}\label{fig:axb}
\end{figure}

The following two lemmas result from this construction:
\begin{lemma}\label{lem:intersections}
    For a fixed value \(c \in C\), the curve $c=f(x,y)$ intersects $O(\frac ng)$
    cells. Moreover, those cells can be found in $O(\frac ng)$ time.
\end{lemma}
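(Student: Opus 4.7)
The plan is to decompose the level curve $V_c := \{(x,y) \in \mathbb{R}^2 : f(x,y) = c\}$ into a constant number of $xy$-monotone arcs, observe that each such arc visits a staircase of $O(n/g)$ cells, and then trace every arc cell by cell with $O(1)$ work per transition.

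\medskip

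For the combinatorial bound, note that $V_c$ fails to be $xy$-monotone only at its singular points and at the points where $\partial_x f$ or $\partial_y f$ vanishes on $V_c$. Each such set is defined by two polynomial equations of degree $O(1)$, so Bezout's theorem gives $O(1)$ critical points in total. Cutting $V_c$ at these points --- together with the $O(1)$ points where it meets the boundary of the grid $[a_1,a_n] \times [b_1,b_n]$ --- decomposes $V_c$ restricted to the grid into $O(1)$ arcs that are monotone in both coordinates. Along such an arc, each cell transition crosses exactly one interior grid line, and by monotonicity this crossing either increments the column index or the row index but never decrements either. Since there are $n/g - 1$ interior vertical grid lines and $n/g - 1$ interior horizontal ones, a single monotone arc visits at most $2n/g - 1$ cells, and summing over the $O(1)$ arcs gives the bound $O(n/g)$.

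\medskip

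For the algorithmic part, we first extract the $O(1)$ monotone arcs and their endpoints by solving a constant-size instance of \FOTR{}, which costs $O(1)$ time. For each arc, we locate its initial endpoint in the grid using primitive~3 from the catalogue of CAD-based operations, again $O(1)$ time. We then trace the arc incrementally: from the current cell, we determine which edge the arc exits through next by asking, for each of the at most two candidate edges (the right edge and the top edge, in a suitable orientation), whether the arc meets that edge before the other. Each such decision is a constant-size instance in \FOTR{} and costs $O(1)$. Advancing to the adjacent cell is then immediate, so the arc is traced in $O(1)$ work per cell visited. Summing over all $O(1)$ arcs and the $O(n/g)$ cells they sweep through gives a total running time of $O(n/g)$.

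\medskip

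The main obstacle is only conceptual: one must be confident that each local question --- identifying critical points, locating arc endpoints, deciding which grid edge an arc exits through next --- reduces to a constant-size decision problem that a Collins CAD subroutine settles in $O(1)$ time in the real-RAM and bounded-degree ADT models, even though the coordinates involved need not be representable by the four arithmetic operators alone. Once this is granted, the staircase count together with the incremental tracing delivers the lemma.
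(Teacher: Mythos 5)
Your combinatorial bound (decompose into $xy$-monotone arcs via Bézout, staircase argument giving $2n/g-1$ cells per arc) is essentially the paper's own argument. For the algorithmic part, however, you take a genuinely different route. You trace each monotone arc incrementally cell by cell, deciding at each step which edge the arc exits through. The paper instead identifies $O(n/g)$ \emph{starting cells} (all boundary cells of the grid that meet the curve, plus the cells that contain tangency or singular points) and then runs a flood-fill, repeatedly adding any of the eight neighbours of a live cell if that neighbour also meets the curve, checked by a constant-size Tarski sentence. The flood-fill is more robust: it never needs to parametrize or orient an arc, it handles isolated singular points of the curve for free, and it is oblivious to where one arc stops and another begins. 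Your arc-marching is tidier to analyse but needs two repairs. First, ``primitive 3'' locates a point in an arrangement of a \emph{constant} number of curves, whereas the grid has $\Theta(n/g)$ lines, so locating an arc endpoint requires a binary search over the sorted $A$- and $B$-partition boundaries, costing $O(\log(n/g))$ rather than $O(1)$; this is harmless for the final $O(n/g)$ bound but the constant-time claim as written is wrong. Second, the level set $f(x,y)=c$ may contain isolated singular points (zero-length components), which are not interior points of any monotone arc; you must explicitly treat these as degenerate arcs (as the paper does by including the cells containing singular points among the seeds), or your tracing will miss them. With those two points addressed, both the combinatorial count and the $O(n/g)$ running time go through.
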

\begin{proof}
    Since $f$ has constant degree, the curve $c=f(x,y)$ can be decomposed into
    a constant number of $xy$-monotone arcs. Split the curve into
    $x$-monotone pieces, then each $x$-monotone piece into $y$-monotone arcs.
    The endpoints of the $xy$-monotone arcs are the intersections of
    $f(x,y)=c$ with its derivatives $f'_x(x,y)=0$ and $f'_y(x,y)=0$. By
    B\'ezout's theorem, there are $O({\deg(f)}^2)$ such intersections and so
    $O({\deg(f)}^2)$ $xy$-monotone arcs. Figure~\ref{fig:intersected-cells} shows
    that each such arc intersects at most $2 \frac ng - 1$ cells since the
    cells intersected by a $xy$-monotone arc form a staircase in the grid. This proves
    the first part of the lemma.
    To prove the second part, notice that for each connected component of
    $c=f(x,y)$ intersecting at least one cell of the grid either: (1) it
    intersects a boundary cell of the grid, or (2) it is a (singular) point or
    contains vertical and horizontal tangency points. The cells intersected by
    $c=f(x,y)$ are computed by exploring the grid from $O(\frac ng)$ starting cells.
    Start with an empty set. Find and add all boundary cells containing a point of the
    curve. Finding those cells is achieved by solving
    the Tarski sentence
    $\exists (x,y)
    c=f(x,y)
    \land x \in A_i^*
    \land y \in B_j^*$,
    for each cell $A_i^* \times B_j^*$ on the boundary.
    This takes $O(\frac ng)$ time.
    Find and add the cells containing endpoints of $xy$-monotone arcs of
    $c=f(x,y)$. Finding those cells is achieved by first finding
    the constant number of vertical and horizontal slabs $A_i^* \times
    \mathbb{R}$ and $\mathbb{R} \times B_j^*$ containing such points:
    \begin{align*}
	\exists (x,y)
	c=f(x,y)
	\land ( f'_x(x,y)=0 \lor f'_y(x,y)=0)
	\land x \in A_i^*, \\
	\exists (x,y)
	c=f(x,y)
	\land ( f'_x(x,y)=0 \lor f'_y(x,y)=0)
	\land y \in B_j^*.
    \end{align*}
    This takes $O(\frac ng)$ time.
    Then for each pair of vertical and horizontal slab containing such a point,
    check that the cell at the intersection of the slab also contains such a
    point:
    \begin{displaymath}
	\exists (x,y)
	c=f(x,y)
	\land ( f'_x(x,y)=0 \lor f'_y(x,y)=0)
	\land x \in A_i^*
	\land y \in B_j^*.
    \end{displaymath}
    This takes $O(1)$ time.
Note that we can always assume the constant-degree polynomials we manipulate
are square-free, as making them square-free is trivial~\cite{Y76}: since
$\mathbb{R}[x]$ and $\mathbb{R}[y]$ are unique factorization domains, let $Q =
P/\text{gcd}(P,P'_x;x)$ and $\text{sf}(P) = Q/\text{gcd}(P,P'_y;y)$, where
$\text{gcd}(P,Q;z)$ is the greatest common divisor of $P$ and $Q$ when viewed
as polynomials in $R[z]$ where $R$ is a unique factorization domain and
$\text{sf}(P)$
is the square-free part of $P$.
    The set now contains, for each component of each
    type, at least one cell intersected by it. Initialize a list with the
    elements of the set. While the list is not empty, remove any cell from the
    list, add each of the eight neighbouring cells to the set and the list,
    if it contains a point of $c=f(x,y)$ --- this can be checked with the same
    sentences as in the boundary case --- and if
    it is not already in the set. This costs $O(1)$ per cell intersected.
    The set now contains all cells of the grid intersected by $c=f(x,y)$.
\end{proof}

\begin{lemma}\label{lem:preprocessing}
    If the sets $A,B,C$ can be preprocessed in $S_g(n)$ time so that,
    for any given cell $A_i^* \times B_j^*$ and any given $c \in C$, testing whether
    $c \in f(A_i \times B_j) = \{\,f(a,b)\st (a,b) \in A_i \times B_j\,\}$ can be done in
    $O(\log g)$ time, then, explicit 3POL can be solved in
    $S_g(n)+O(\frac{n^2}{g}\log g)$ time.
\end{lemma}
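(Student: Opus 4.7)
The plan is to instantiate Algorithm~\ref{algo:ne} with the hypothesised preprocessing as a black box and bound each of its steps. First, I would run the preprocessing, which by assumption costs $S_g(n)$, to obtain a data structure supporting $O(\log g)$-time membership queries of the form ``is $c \in f(A_i \times B_j)$?'' for any fixed cell and any $c \in C$. The partition of $[a_1,a_n]$ and $[b_1,b_n]$ into the blocks $A_i^*$ and $B_j^*$ described in the excerpt requires only a single linear pass through the sorted sets $A$ and $B$, so it is subsumed in $S_g(n)$ (or in the $O(n)$ term, which is dominated).

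Next, I would iterate over all $c \in C$. For each fixed $c$, Lemma~\ref{lem:intersections} both bounds the number of cells $A_i^* \times B_j^*$ intersected by the curve $c = f(x,y)$ by $O(n/g)$ and shows how to enumerate them in $O(n/g)$ time. For every such cell I invoke the preprocessed query to test whether $c \in f(A_i \times B_j)$, accepting if any test succeeds. The work per $c$ is $O((n/g) \log g)$, for a total query cost of $O((n^2/g)\log g)$ across all $n$ values in $C$. Summing with the preprocessing yields the claimed bound $S_g(n) + O((n^2/g)\log g)$.

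For correctness, one direction is immediate: if the algorithm accepts on some cell and some $c$, then there exist $a \in A_i$ and $b \in B_j$ with $f(a,b) = c$, so $(a,b,c) \in A \times B \times C$ witnesses a yes-instance. For the other direction, suppose a witness $(a,b,c)$ exists. Since the blocks $A_i^*$ and $B_j^*$ partition $[a_1,a_n]$ and $[b_1,b_n]$ respectively, there is a unique cell $A_i^* \times B_j^*$ containing $(a,b)$; because $c = f(a,b)$, the curve $c = f(x,y)$ passes through this cell, so it is enumerated in step~\textbf{3.1}, the query on this cell returns yes, and the algorithm accepts.

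I expect no genuine obstacle here: the statement is essentially a bookkeeping consequence of Algorithm~\ref{algo:ne} together with the already-proven Lemma~\ref{lem:intersections}. The only point that needs mild care is ensuring that the $O(n/g)$-time cell enumeration and the $O(\log g)$-time query compose additively rather than multiplicatively with the preprocessing, and that boundary conventions (the assumption that $g$ divides $n$, cells on the boundary of the grid) do not inflate the $O(n/g)$ bound; both follow from the construction already made explicit in the excerpt.
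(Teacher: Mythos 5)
Your proof is correct and follows essentially the same route as the paper: preprocess in $S_g(n)$ time, then for each of the $n$ values $c\in C$ enumerate the $O(n/g)$ intersected cells via Lemma~\ref{lem:intersections} and query each in $O(\log g)$ time. The added correctness argument is sound but was left implicit in the paper, which only tallies the running time.
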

\begin{proof}
    We need $S_g(n)$ preprocessing time plus the time required
    to search each of the $n$ numbers $c \in C$ in each of the $O(\frac ng)$
    cells intersected by $c=f(x,y)$. Each search costs $O(\log g)$ time.
    We can compute the cells intersected by $c=f(x,y)$ in $O(\frac ng)$ time
    by Lemma~\ref{lem:intersections}.
\end{proof}
\remark{} We do not give a $S_g(n)$-time real-RAM
algorithm for preprocessing the input, but only a $S_g(n)$-depth bounded-degree
ADT\@. In fact, this preprocessing step is the only
nonuniform part of Algorithm~\ref{algo:ne}. A real-RAM implementation of this
step is given in
\S\ref{sec:algo:explicit:uniform}.

\begin{figure}
    \centering
    \begin{subfigure}[t]{0.5\textwidth}
    \includegraphics[width=\textwidth]{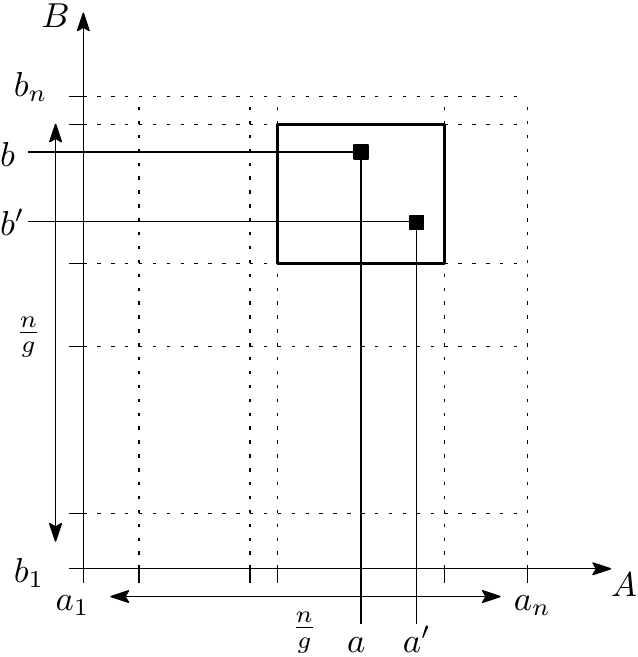}
    \caption{The pairs \((a,b), (a',b')\).}\label{fig:ab-pairs}
    \end{subfigure}%
    \begin{subfigure}[t]{0.5\textwidth}
    \includegraphics[width=\textwidth]{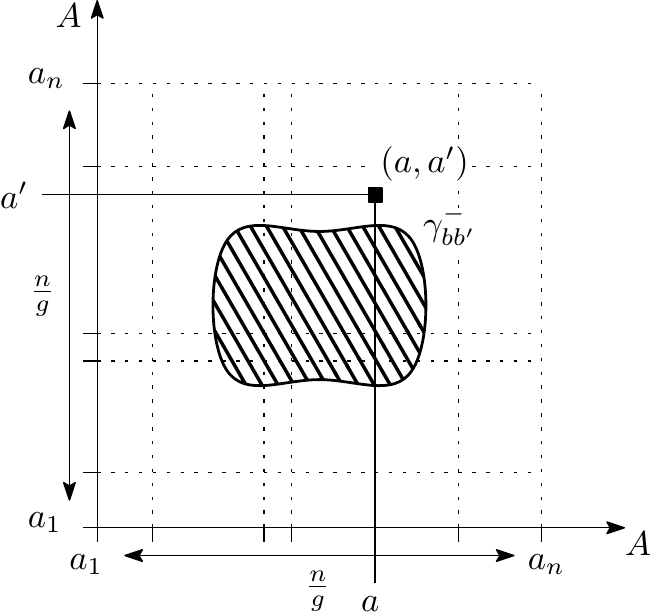}
    \caption{$(a,a') \in \gamma^+_{b,b'}$ implies $f(a,b) > f(a',b')$.}\label{fig:zero-set}
    \end{subfigure}
    \caption{Generalization of Fredman's trick (Lemma~\ref{lem:fredman}).}\label{fig:trick}
\end{figure}
\subparagraph{Preprocessing} All that is left to prove is that $S_g(n)$ is
subquadratic for some choice of $g$. To achieve this we sort the points
inside each cell using Fredman's trick~\cite{F76}. Gr\o nlund and
Pettie~\cite{GP14} use this trick to sort the sets
$A_i + B_j = \{\,a + b \st (a,b) \in A_i \times B_j \,\}$
with few comparisons: sort the set
$D = (\cup_i [A_i - A_i]) \cup (\cup_j [B_j - B_j])$,
where
$A_i - A_i = \{\,a - a' \st (a,a')\in A_i \times A_i \,\}$
and
$B_j - B_j = \{\,b - b' \st (b,b')\in B_j \times B_j \,\}$,
using $O(n\log n + |D|)$ comparisons, then testing whether
$a + b \le a' + b'$
can be done using the free (already computed) comparison
$a - a' \le b' - b$.
We use a generalization of this trick to sort the sets $f(A_i\times B_j)$.
For each $B_j$, for each pair \((b,b') \in B_j \times B_j\),
define the curve
$
	\gamma_{b,b'} = \{\,(x,y) \st f(x,b) = f(y,b')\,\}.
$
Define the sets
$
	\gamma^0_{b,b'} = \gamma_{b,b'},
	\gamma^-_{b,b'} = \{\,(x,y) \st f(x,b) < f(y,b')\,\},
	\gamma^+_{b,b'} = \{\,(x,y) \st f(x,b) > f(y,b')\,\}
$.
The following lemma --- illustrated by Figure~\ref{fig:trick} --- follows by
definition:
\begin{lemma}\label{lem:fredman}
Given a cell $A_i^* \times B_j^*$ and two pairs \((a,b), (a',b') \in A_i \times B_j\),
deciding whether \(f(a,b) < f(a',b')\) (respectively
$f(a,b) = f(a',b')$
and
$f(a,b) > f(a',b')$)
amounts to deciding whether the point
$(a,a')$ is contained in \(\gamma^-_{b,b'}\) (respectively
\(\gamma^0_{b,b'}\) and
\(\gamma^+_{b,b'}\)).
\end{lemma}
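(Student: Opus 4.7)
The plan is to observe that this lemma is essentially a direct unpacking of the definitions, so the proof amounts to little more than substitution. For the strict inequality case, the set $\gamma^-_{b,b'}$ is defined as $\{(x,y)\mid f(x,b) < f(y,b')\}$; therefore, setting $x=a$ and $y=a'$, the point $(a,a')$ lies in $\gamma^-_{b,b'}$ precisely when $f(a,b) < f(a',b')$. The hypothesis $(a,b),(a',b')\in A_i\times B_j$ guarantees that $a,a'\in A_i$, so that $(a,a')\in A_i\times A_i$ is a valid evaluation point; the values of the second coordinates $b,b'\in B_j$ serve only as the parameters that define the curve and the two associated half-planes in $(x,y)$-space.

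Next I would handle the equality case in exactly the same way: by definition $\gamma^0_{b,b'} = \gamma_{b,b'} = \{(x,y)\mid f(x,b) = f(y,b')\}$, so $(a,a')\in \gamma^0_{b,b'}$ iff $f(a,b) = f(a',b')$. The strict greater-than case is identical with $<$ replaced by $>$ throughout. Since for fixed $b,b'$ the three sets $\gamma^-_{b,b'}, \gamma^0_{b,b'}, \gamma^+_{b,b'}$ partition the $(x,y)$-plane into the corresponding level sets of the polynomial $f(x,b)-f(y,b')$, the three equivalences together are mutually consistent.

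There is no real obstacle here, since the lemma is stated as following ``by definition''; the only minor subtlety worth flagging is that the lemma is used later with a specific role assignment between $a,a'$ (placed in the first coordinate of the curve) and $b,b'$ (placed in the parameter slot), so the proof should make explicit which variable plays which role to avoid ambiguity when the trick is later invoked in the preprocessing step.
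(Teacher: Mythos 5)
Your proof is correct and matches the paper exactly: the paper gives no explicit proof because, as it states, the lemma "follows by definition," and your direct substitution of $(x,y)=(a,a')$ into the defining sets $\gamma^-_{b,b'}$, $\gamma^0_{b,b'}$, $\gamma^+_{b,b'}$ is precisely that unpacking.
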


There are $N \coloneqq \frac ng \cdot g^2 = ng$ pairs $(a,a') \in \cup_i [A_i \times
A_i]$ and there are $N$ pairs $(b,b') \in \cup_j [B_j \times B_j]$.  Sorting
the $f(A_i\times B_j)$ for all $(A_i, B_j)$ amounts to solving the
following problem:
\begin{problem}[Polynomial Batch Range Searching]
    Given $N$ points and $N$ polynomial curves in $\mathbb{R}^2$, locate
    each point with respect to each curve.
\end{problem}

We can now refine the description of step~\textbf{2} in Algorithm~\ref{algo:ne}
\begin{algorithm}[Sorting the $f(A_i \times B_j)$ with a nonuniform algorithm]\label{algo:sfaixbj}
\item[input] $A = \{\,a_1<a_2<\cdots<a_n\,\},B = \{\,b_1<b_2<\cdots<b_n\,\}
    \subset \mathbb{R}$
\item[output] The sets $f(A_i \times B_j)$, sorted.
\item[2.1.] Locate each point $(a,a') \in \cup_i [A_i \times A_i]$ w.r.t.\
    each curve $\gamma_{b,b'}, (b,b') \in \cup_j [B_j \times
    B_j]$.
\item[2.2.] Sort the sets $f(A_i \times B_j)$ using the
    information retrieved in step \textbf{2.1}.
\end{algorithm}
Note that this algorithm is nonuniform: step~\textbf{2.2} costs at least
quadratic time in the real-RAM model, however, this step does not need to
query the input at all, as all the information needed to sort is retrieved during
step~\textbf{2.1}.
Step~\textbf{2.2} incurs no cost in
our nonuniform model.

To implement step~\textbf{2.1}, we use a modified
version of the $N^{\frac{4}{3}} 2^{O(\log^* N)}$ algorithm of
Matou\v{s}ek~\cite{Ma93} for Hopcroft's problem.
%
%
%
%
In Appendix~\ref{sec:algo:point-curves-location},
we prove the following upper bound:
\begin{lemma}\label{lem:batch}
Polynomial Batch Range Searching can be solved in
$O(N^{\frac{4}{3}+\varepsilon})$ time in the real-RAM model when the input
curves are the $\gamma_{b,b'}$.
\end{lemma}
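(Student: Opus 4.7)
The plan is to adapt Matou\v{s}ek's hierarchical cutting algorithm for Hopcroft's problem (point-line incidence) to the setting of constant-degree algebraic curves in the plane. The crucial input to Matou\v{s}ek's argument is that, for $N$ lines, one can build a $(1/r)$-cutting of size $O(r^2)$ into trapezoidal cells each crossed by at most $N/r$ lines; one then descends each query object through a hierarchy of such cuttings, paying $O((N/r) \cdot r^2) = O(Nr)$ per level for construction and $O(\sqrt{r})$ cells per curve per level at query time, with the $N^{4/3}$ bound arising from balancing these terms. For our curves $\gamma_{b,b'}$, which are zero sets of the bivariate polynomial $f(x,b) - f(y,b')$ of constant degree in $(x,y)$, the analogous cutting theorem for collections of constant-degree algebraic curves of bounded description complexity has been established (Chazelle, Agarwal--Matou\v{s}ek, and others): a $(1/r)$-cutting of size $O(r^{2+\varepsilon})$ into cells of constant description complexity, each crossed by at most $N/r$ curves, is constructible in $O(N r^{1+\varepsilon})$ time. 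This single $\varepsilon$ slack propagates through the whole analysis to give the $O(N^{4/3+\varepsilon})$ bound.

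With such cuttings available, the algorithm proceeds in standard fashion. First I would build a hierarchical cutting of the arrangement of the $N$ curves $\gamma_{b,b'}$. For each of the $N$ query points $(a,a')$, I descend level by level, following the unique cell of the current cutting containing $(a,a')$, and at each level I spend $O(N/r)$ time testing the point against every curve that crosses its current cell but does not separate it from the inside of its parent cell. A curve that does not cross the current cell is resolved at that level by the position of any representative point of the cell. Because the description complexity of each cell is $O(1)$, testing a point against a cell boundary or against the $O(N/r)$ locally crossing curves fits in the real-RAM/ADT model via the Tarski-sentence primitives granted by Collins CAD on constant-size instances, so the per-cell cost is $O(1)$. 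Summed over all levels and all query points, balancing yields the claimed $O(N^{4/3+\varepsilon})$ time.

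Since the problem has $N$ points \emph{and} $N$ curves, I would invoke the Hopcroft-style two-sided batch version: alternate simplicial partitions on the point set (with crossing number $O(\sqrt{r})$ per algebraic curve, again via polynomial-partitioning/partition-tree results for bounded-degree curves) with cuttings on the curve set, so that every (point, curve) pair is classified, not just the incident ones. The three-way output $\gamma^-_{b,b'}$, $\gamma^0_{b,b'}$, $\gamma^+_{b,b'}$ required by Lemma~\ref{lem:fredman} is read directly off the sign of $f(x,b)-f(y,b')$ evaluated at one representative of the final cell containing $(a,a')$.

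The main obstacle is the algebraic-curve cutting lemma itself: verifying that the $\gamma_{b,b'}$ fall in the class of curves covered by the existing cutting theorems (bounded degree, bounded number of singular points and $xy$-extremal points, controllable arrangement complexity). This reduces to observing that $f(x,b)-f(y,b')$ has degree at most $\deg(f)$ in each variable, so by B\'ezout-type arguments the arrangement of $N$ such curves has complexity $O(N^2)$ and every pair intersects in $O(1)$ points, which is exactly the regime where the $O(r^{2+\varepsilon})$-size cutting construction applies. Everything else is bookkeeping against Matou\v{s}ek's template.
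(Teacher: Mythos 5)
Your proposal is correct in substance but reaches the bound by a genuinely different route from the paper. You invoke the standard Hopcroft-problem machinery adapted to algebraic curves: alternate $\frac{1}{r}$-cuttings on the curve set with simplicial partitions of low crossing number $O(\sqrt{r})$ on the point set, the latter coming from partition-tree or polynomial-partitioning results for bounded-degree algebraic ranges. The paper instead exploits a self-duality that is specific to the curves $\gamma_{b,b'}$: since $\gamma_{b,b'}=\{(x,y): f(x,b)=f(y,b')\}$, one can define $\hat{\gamma}_{a,a'}=\{(x,y): f(a,x)=f(a',y)\}$ and observe that locating the point $(a,a')$ against $\gamma_{b,b'}$ is the same as locating $(b,b')$ against $\hat{\gamma}_{a,a'}$ (Lemma~\ref{lem:dual}). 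This lets the paper alternate primal and dual \emph{cutting} steps, never needing a simplicial-partition theorem at all; an $\varepsilon$-net plus vertical decomposition gives the $O(r^2\log^2 r)$-size cutting, and two balanced primal/dual rounds give $T(M,N)\le c\,r^{4+\varepsilon}\,T(M/r^3,N/r^3)+O(M+N)$, whence $T(N)=O(N^{4/3+\varepsilon})$. Your approach is more general — it would apply to any family of bounded-degree curves, not only the $\gamma_{b,b'}$ — but it pulls in heavier machinery (crossing-number bounds for simplicial partitions under algebraic ranges), whereas the paper's duality keeps the proof self-contained and is precisely why the lemma is stated for the $\gamma_{b,b'}$ curves rather than arbitrary algebraic curves. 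Note also that your opening description of a one-sided point descent through a single hierarchy would not by itself give $N^{4/3}$; it is your closing paragraph (the two-sided alternation) that actually carries the argument, so the middle portion should be read as motivation rather than as the algorithm.
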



\subparagraph{Analysis}
Combining Lemma~\ref{lem:preprocessing} and Lemma~\ref{lem:batch} yields a $O(
{(ng)}^{4/3+\varepsilon} + n^2 g^{-1} \log g)$-depth bounded-degree ADT for
3POL\@. By optimizing over $g$, we get $g = \Theta(n^{2/7-\varepsilon})$, and
the previous expression simplifies to $O(n^{12/7+\varepsilon})$, proving
Theorem~\ref{thm:explicit:act}.

\section{Uniform algorithm for explicit 3POL}
\label{sec:algo:explicit:uniform}

We now build on the first algorithm and prove the following:
\begin{theorem}\label{thm:explicit:uniform}
	Explicit 3POL can be solved in
	$O(n^2 {(\log \log n)}^\frac{3}{2} / {(\log n)}^\frac{1}{2})$ time.
\end{theorem}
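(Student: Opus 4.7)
The plan is to retain the blocked-grid structure of Algorithm~\ref{algo:ne}---the partition of $A\times B$ into cells of size $g\times g$ and the staircase-based binary search of each $c \in C$---and to replace only the nonuniform preprocessing in step~\textbf{2} by an algorithm that runs on the real-RAM. Choosing $g = \Theta(\sqrt{\log n/\log\log n})$ makes the search phase of Algorithm~\ref{algo:ne} cost $O(n^2 g^{-1} \log g) = O(n^2 (\log\log n)^{3/2}/(\log n)^{1/2})$ by Lemma~\ref{lem:preprocessing}, which is exactly the target bound. The entire burden is therefore to produce enough of the sorted order of each $f(A_i \times B_j)$ to support the $O(\log g)$ binary searches, within the same time budget.

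The naive route---actually sorting each of the $(n/g)^2$ cells of size $g^2$---spends $\Omega(n^2 \log g)$ time just on data movement, a $\log\log n$ factor too much. Following Gr\o nlund and Pettie's uniform 3SUM algorithm, I would replace the explicit sorts by a single Polynomial Dominance Reporting (PDR) instance that resolves all intra-cell comparisons at once. By Lemma~\ref{lem:fredman}, a comparison of $f(a,b)$ with $f(a',b')$ for $(a,a') \in A_i \times A_i$ and $(b,b') \in B_j \times B_j$ amounts to locating the point $(a,a')$ with respect to the curve $\gamma_{b,b'}$; so the $N = ng$ points on the $A$-side and the $N = ng$ curves on the $B$-side form a single PDR instance whose output encodes the rank structure of every $f(A_i \times B_j)$. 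Feeding that output into a standard comparison-based sort makes each cell consume only the work proportional to the number of comparisons it requires, with all polynomial evaluations already paid for by the PDR call.

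Balancing the parameters is then routine: the PDR subroutine is to gain a $\sqrt{\log n / \log\log n}$-type factor (up to lower-order polylogarithms) over the trivial $N^2$ bound, so with $N = \Theta(n\sqrt{\log n/\log\log n})$ its total cost is $O(n^2 (\log\log n)^{3/2}/(\log n)^{1/2})$, matching the search phase. Added to the subsequent merging inside the cells, which contributes only lower-order terms for this choice of $g$, this yields Theorem~\ref{thm:explicit:uniform}.

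The hardest part will be designing the PDR subroutine and proving the required speedup. This entails lifting Chan's divide-and-conquer for dominance reporting to regions bounded by constant-degree algebraic curves rather than by axis-aligned coordinates---using cuttings for algebraic curves in place of coordinate splits, and relying on Collins CAD to resolve each constant-size geometric subproblem in $O(1)$ real-RAM operations. Once PDR is in hand, the staircase walk, the binary search inside each cell, and the parameter balancing are inherited essentially verbatim from the nonuniform algorithm of \S\ref{sec:algo:explicit:nonuniform}.
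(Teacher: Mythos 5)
Your high-level framing---keep the $A\times B$ grid, set $g=\Theta(\sqrt{\log n/\log\log n})$ so that the binary-search phase costs $O(n^2\log g/g)$, and replace the nonuniform sort of step~\textbf{2} by a uniform preprocessing subroutine---is exactly the paper's starting point. But the proposal is missing the central idea of the uniform algorithm and contains an arithmetic inconsistency, so as written it does not close the gap.

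The key trick in the paper is that each cell is \emph{never sorted explicitly}. Because $g$ is tiny, there are only $O\bigl((g^2)!\,2^{g^2}\bigr)=n^{o(1)}$ possible certificates (sorted orders with ties) for a cell. The paper iterates over all of them; for each certificate $\pi$ it issues one PDR instance consisting of $N=n/g$ blue $k$-tuples (one per block $A_i$) and $N=n/g$ red $k$-tuples (one per block $B_j$), with $k=g^2-1$, and this single call reports exactly the cells whose sorted order is $\pi$. Each reported cell then stores a constant-size pointer into the permutation lookup table; binary search at query time dereferences the pointer, looks up the midpoint index pair $(\pi_r(m),\pi_c(m))$, and evaluates $f(a_{i,\pi_r(m)},b_{j,\pi_c(m)})$ on the fly. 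No sorted array is ever materialized, so the preprocessing cost is bounded by $\text{(number of certificates)}\cdot 2^{O(k)}(n/g)^{2-c+\varepsilon}+O((n/g)^2)$ with a genuine polynomial savings $c=4/(\deg(f)^2+3\deg(f)+2)$, and the choice of $g$ makes the first factor $n^{o(1)}$. Your proposal instead does a single batch point-curve location with $N=ng$ points and $N=ng$ curves---which is the \emph{Polynomial Batch Range Searching} primitive of the nonuniform algorithm in \S\ref{sec:algo:explicit:nonuniform}, not the paper's PDR problem (PDR is inherently about $k$-tuples so that one call certifies an entire permutation at once)---and then feeds the comparison outcomes into ``a standard comparison-based sort.'' That sort still touches $\Theta(g^2\log g)$ items per cell, for $(n/g)^2$ cells, i.e.\ $\Theta(n^2\log g)$ work in total. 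You correctly flag this as ``a $\log\log n$ factor too much'' but then do not remove it; asserting that ``merging inside the cells\dots contributes only lower-order terms'' is false for any $g$ that makes the search phase subquadratic, and this is precisely the obstruction the permutation-lookup-table device is designed to evade.

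Separately, the quantitative claim about PDR is off in two directions. The paper's PDR bound (Lemma~\ref{lem:dominance}) is $2^{O(k)}N^{2-4/(\deg(f)^2+3\deg(f)+2)+\varepsilon}+O(\ell)$, i.e.\ a polynomial speedup, not a $\sqrt{\log N/\log\log N}$-type one. And even taking your claimed speedup at face value, plugging $N=\Theta(n\sqrt{\log n/\log\log n})$ into $N^2/\sqrt{\log N/\log\log N}$ yields roughly $n^2\sqrt{\log n/\log\log n}$, which is superquadratic rather than $O(n^2(\log\log n)^{3/2}/(\log n)^{1/2})$. To repair the proposal you would need to (i) replace the single batch range search by the per-certificate PDR calls on $k$-tuples, (ii) replace the explicit per-cell sort by pointers into a $\Theta(n^\varepsilon)$-size permutation table, and (iii) use the polynomial---not polylogarithmic---savings of PDR to absorb the $(g^2)!\,2^{O(g^2)}$ factor.
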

We generalize again Gr\o nlund and Pettie~\cite{GP14}. The
algorithm we present is derived from the first subquadratic algorithm in their
paper.

\subparagraph{Idea}
We want the implementation of step~\textbf{2} in Algorithm~\ref{algo:ne} to be
uniform, because then, the whole algorithm is.
We use the same partitioning scheme as before except we choose $g$ to be much
smaller.
This allows to
store all permutations on $g^2$ items in a lookup table, where
$g$ is chosen small enough to make the size of the lookup table $\Theta(n^\varepsilon)$.
The preprocessing part of the previous algorithm is replaced by $ g^2! $ calls to
an algorithm that determines for which cells a given permutation gives the
correct sorted
order. This preprocessing step stores a constant-size%
\footnote{In the real-RAM and word-RAM models.}
pointer from each cell to
the corresponding permutation in the lookup table.
Search can now be done efficiently: when searching a value $c$ in $f(A_i
\times B_j)$, retrieve the corresponding permutation on $g^2$ items from the
lookup table,
then perform binary search on the sorted order defined by that permutation.
The sketch of the algorithm is exactly Algorithm~\ref{algo:ne}. The only
differences with respect to \S\ref{sec:algo:explicit:nonuniform} are the choice
of $g$ and the implementation of step~\textbf{2}.

\subparagraph{$A \times B$ grid partitioning}
We use the same partitioning scheme as before, hence
Lemma~\ref{lem:intersections} and Lemma~\ref{lem:preprocessing} hold.
We just need to find a replacement for Lemma~\ref{lem:batch}.

\subparagraph{Preprocessing}
For their simple subquadratic 3SUM algorithm, Gr\o nlund and Pettie~\cite{GP14}
explain that for a permutation to give the correct sorted order for a cell, that
permutation must define a \emph{certificate} --- a set of inequalities --- that
the cell must verify. They cleverly note --- using Fredman's Trick~\cite{F76}
as in Chan~\cite{Cha08} and Bremner et al.~\cite{BCDEHILPT14}
--- that the verification of a single certificate by all cells amounts to
solving a red/blue point dominance reporting problem.
We generalize their method.
For each permutation $\pi\colon\,[g^2]\to{[g]}^2$, where $\pi =
(\pi_r,\pi_c)$ is decomposed into row and column functions
$\pi_r,\pi_c\colon\,[g^2]\to[g]$, we enumerate all cells $A_i^* \times B_j^*$
for which the following \emph{certificate} holds:
\begin{displaymath}
	f(a_{i,\pi_r(1)},b_{j,\pi_c(1)}) \le f(a_{i,\pi_r(2)},b_{j,\pi_c(2)}) \le \cdots
	\le f(a_{i,\pi_r(g^2)},b_{j,\pi_c(g^2)}).
\end{displaymath}

\remark{}
Since some entries may be equal, to make sure each cell corresponds to exactly
one certificate, we replace $\le$ symbols by choices of $g^2-1$ symbols in
$\{\,=,<\,\}$. Each permutation $\pi$ gets a certificate for each of
those choices.
This adds a $2^{g^2-1}$ factor to the number of certificates
to test, which will eventually be negligible.
Note that some of those $2^{g^2-1}$ certificates are equivalent. We
need to skip some of them, as otherwise we might output some cells more than
once, and then there will be no guarantee with respect to the output size. For
example, the certificate $f(a_{i,9},b_{j,5}) = f(a_{i,6},b_{j,7}) < \cdots <
f(a_{i,4},b_{j,4})$ is equivalent to the certificate $f(a_{i,6},b_{j,7}) =
f(a_{i,9},b_{j,5}) < \cdots < f(a_{i,4},b_{j,4})$. Among equivalent
certificates, we only consider the certificate whose permutation $\pi$ precedes
the others lexicographically. In the previous example,
$((6,7),(9,5),\ldots,(4,4)) \prec ((9,5),(6,7),\ldots,(4,4))$ hence we would
only process the second certificate. For the sake of simplicity,
we will write inequality when we mean strict inequality or equation, and
``$\le$'' when we mean ``$<$'' or ``$=$''.

\subparagraph{Fredman's Trick}
This is where Fredman's Trick comes into play.
By Lemma~\ref{lem:fredman}, each inequality
$f(a_{i,\pi_r(t)},b_{j,\pi_c(t)}) \le f(a_{i,\pi_r(t+1)},b_{j,\pi_c(t+1)})$
of a certificate can be checked
by computing the relative position of $(a_{i,\pi_r(t)},a_{i,\pi_r(t+1)})$ with
respect to $\gamma_{b_{j,\pi_c(t)},b_{j,\pi_c(t+1)}}$.
For a given certificate, for each $A_i$ and each $B_j$, define
\begin{align*}
p_i &= (
	(a_{i,\pi_r(1)},a_{i,\pi_r(2)}),
	(a_{i,\pi_r(2)},a_{i,\pi_r(3)}),
	\ldots,
	(a_{i,\pi_r(g^2-1)},a_{i,\pi_r(g^2)})
),\\
q_j &= \mleft(
f(x,b_{j,\pi_c(1)}) \le f(y,b_{j,\pi_c(2)}),
\ldots,
f(x,b_{j,\pi_c(g^2-1)}) \le f(y,b_{j,\pi_c(g^2)})
\mright).
\end{align*}
A certificate is verified by a cell $A_i \times B_j$ if and only if, for all
$t \in [g^2-1]$, the point $p_{i,t}$ verifies the inequality $q_{j,t}$.
Enumerating all cells $A_i\times B_j$ for which the
certificate holds therefore amounts to solving the following problem:
\begin{problem}[Polynomial Dominance Reporting (PDR)]
Given $N$ $k$-tuples $p_i$ of points in $\mathbb{R}^2$ and $N$ $k$-tuples $q_j$
of bivariate polynomial inequalities of degree at most $\deg(f)$,
enumerate all pairs $(p_i,q_j)$ where, for all $t \in [k]$,
the point $p_{i,t}$ verifies the inequality $q_{j,t}$.
\end{problem}
In the next section, we explain how to solve PDR efficiently
and prove the following lemma:
\begin{lemma}\label{lem:dominance}
    We can enumerate
    all $\ell$ such pairs in time
    $2^{O(k)} N^{2-\frac{4}{{\deg(f)}^2+3\deg(f)+2}+\varepsilon} + O(\ell)$.
\end{lemma}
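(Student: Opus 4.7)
The plan is to linearize the polynomial inequalities via the Veronese map, reducing the $k$-tuple constraint to a multilevel halfspace range reporting problem in a higher-dimensional space, and then handle it with a Chan-style recursive partition built on top of Matou\v{s}ek's partition tree machinery.

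First, set $d = \deg(f)$ and $D = (d+1)(d+2)/2$, and consider the monomial map $\phi : \mathbb{R}^2 \to \mathbb{R}^D$ sending $(x,y)$ to the vector of all monomials $x^{\alpha}y^{\beta}$ with $\alpha+\beta \le d$. Any bivariate polynomial $P$ of degree at most $d$ can be written as $P(x,y) = \langle c_P, \phi(x,y)\rangle$ for some coefficient vector $c_P \in \mathbb{R}^D$, so each inequality $q_{j,t}$ becomes a halfspace $H_{j,t} \subset \mathbb{R}^D$ whose containment is equivalent to the original inequality. The PDR instance then becomes: given $N$ $k$-tuples of points in $\mathbb{R}^D$ and $N$ $k$-tuples of halfspaces in $\mathbb{R}^D$, report every pair $(i,j)$ such that $\phi(p_{i,t}) \in H_{j,t}$ for all $t \in [k]$. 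Observe that $2 - \frac{4}{(d+1)(d+2)} = 2 - 2/D$, precisely the exponent delivered by Matou\v{s}ek's batched halfspace range reporting in $\mathbb{R}^D$; this is what motivates the choice of $D$.

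Second, for the single-coordinate case $k=1$, the problem is exactly batched halfspace range reporting, which Matou\v{s}ek's partition tree solves in total time $O(N^{2-2/D+\varepsilon} + \ell)$. For general $k$, I would use a Chan-style recursion: fix a parameter $r$ and build a simplicial partition of the lifted first-coordinate points into $\Theta(r)$ simplices, each containing $\Theta(N/r)$ points, such that any halfspace crosses at most $O(r^{1-1/D})$ of them. For each query, classify each simplex against the query's first halfspace: simplices fully inside are passed to the recursion on the remaining $k-1$ coordinates (restricted to the $\Theta(N/r)$ points in the simplex); simplices fully outside are discarded; the $O(r^{1-1/D})$ crossed simplices are recursed on with all $k$ coordinates still live. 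Iterating over the $k$ coordinates and tuning $r$, the contribution of the ``crossed'' branch at each level is a constant multiplicative factor, so the whole procedure runs in $2^{O(k)} N^{2-2/D+\varepsilon} + O(\ell)$.

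The main obstacle is the two-parameter recurrence: one must argue that the ``crossed'' recursion does not compound geometrically across the $k$ levels. This will follow from the decay factor $r^{-1/D} < 1$ combined with a standard packing argument showing that the crossed simplices can absorb only a constant overhead per coordinate, accumulating to $2^{O(k)}$ over the whole recursion. Two further subtleties are (i) charging the reported pairs to leaves of the recursion so that the output cost stays additive in $\ell$ rather than compounding multiplicatively, and (ii) absorbing the preprocessing cost for the partition trees into the query bound. After the Veronese lift, all geometric primitives are affine halfspaces in $\mathbb{R}^D$, so the dependence on $\deg(f)$ enters only through the dimension $D$ and is already captured in the exponent.
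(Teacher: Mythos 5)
Your high-level plan matches the paper's --- Veronese lift, then a Chan-style recursion over the $k$ constraints using geometric divide-and-conquer --- but there is a genuine gap in the partitioning step. A simplicial partition on the lifted points guarantees that each query halfspace crosses $O(r^{1-1/d})$ of the $\Theta(r)$ simplices (with $d$ the effective dimension), but it gives no bound at all on how many of the $N$ halfspaces cross any single simplex; a thin sliver can be crossed by all of them. So the ``crossed'' branch of your recursion shrinks the number of points per subproblem to $N/r$ but need not shrink the number of queries, the two-parameter recurrence does not close, and a purely primal partition-tree approach yields only $\tilde{O}(N^{2-1/d})$ rather than the claimed $O(N^{2-2/(d+1)})$.

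The paper instead constructs a hierarchical cutting of the $M$ defining hyperplanes, which guarantees each cell is crossed by at most $M/r$ of them and so shrinks both parameters ($N/r^d$ points, $M/r$ hyperplanes per cell), and then --- this is the essential step your proposal omits --- invokes point-hyperplane duality to run a second cutting step in the dual, balancing $N$ and $M$ so that both shrink by $r^{d+1}$ per round; that balancing is precisely what delivers the exponent $\frac{2d}{d+1}=2-\frac{2}{d+1}$. Moreover, the assertion that the ``crossed'' branch compounds to only $2^{O(k)}$ needs an explicit argument removing $k$ from the recurrence; the paper does this via a change of variable in Appendix~\ref{app:apdr}, whereas you state the conclusion without the argument. (A minor side note: since the lifted points lie on an affine hyperplane of $\mathbb{R}^D$, the crossing exponent should involve $D-1$, not $D$; your final exponent $2-2/D$ is nevertheless correct precisely because $D-1$ equals the paper's $d$.)
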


We can now give a uniform implementation of step~\textbf{2} in
Algorithm~\ref{algo:ne}:
\begin{algorithm}[Sorting the $f(A_i \times B_j)$ with a uniform algorithm]\label{algo:sfaixbj}
\item[input] $A = \{\,a_1<a_2<\cdots<a_n\,\},B = \{\,b_1<b_2<\cdots<b_n\,\}
    \subset \mathbb{R}$
\item[output] The sets $f(A_i \times B_j)$, sorted.
\item[2.1.] Initialize a lookup table that will contain all $O(2^{g^2-1}(g^2!))$ certificates on $g^2$ elements.
\item[2.2.] For each permutation $\pi\colon\,[g^2]\to{[g]}^2$,
\item[2.2.1.] For each choice of $g^2-1$ symbols in $\{\,=,<\,\}$,
\item[2.2.1.1.] If there is any ``$=$'' symbol that corresponds to a
    lexicographically decreasing
    pair of tuples of indices in $\pi$, skip this choice of symbols.
\item[2.2.1.2.] Append the certificate associated to $\Pi$
    and the choice of symbols to the table.
\item[2.2.1.3.] Solve the PDR instance associated
    to $A,B,\Pi$ and the choice of symbols.
\item[2.2.1.4.] For each output pair $(i,j)$, store a pointer
    from $(i,j)$ to the last entry in the table.
\end{algorithm}

\subparagraph{Analysis} Plugging in $k=g^2-1$, $N=\frac ng$, iterating over all
permutations ($\sum_{\pi} \ell = {(n/g)}^2$), and adding the binary search step we
get that
explicit 3POL can be solved in time
\begin{displaymath}
    (g^2!)2^{g^2}2^{O(g^2)}
    {(n/g)}^{2-\frac{4}{{\deg(f)}^2+3\deg(f)+2}+\varepsilon} + O({(n/g)}^2) +
    O(n^2 \log g / g).
\end{displaymath}
The first two terms correspond to the complexity of step~\textbf{2} in
Algorithm~\ref{algo:ne}, and the last term corresponds to the complexity of
step~\textbf{3} in Algorithm~\ref{algo:ne}.
To get subquadratic time we can set
$g = c_{\deg(f)}\sqrt{\log n/\log \log n}$,
because then
for some appropriate choice of the constant factor $c_{\deg(f)}$,
$(g^2)!2^{g^2}2^{O(g^2)} = n^{\delta}$ where $\delta<
4/({\deg(f)}^2+3\deg(f)+2) - \varepsilon$, making the first term negligible.
The complexity of the algorithm is dominated by $O(n^2 \log g / g) = O(n^2
{(\log \log n)}^{\frac 32} / {(\log n)}^{\frac 12} )$.
This proves Theorem~\ref{thm:explicit:uniform}.


\section{Polynomial Dominance Reporting}
\label{sec:algo:dominance}


In this section, we combine a standard dominance reporting
algorithm~\cite{PS85} with Matou\v{s}ek's algorithm~\cite{Ma93} to prove
Lemma~\ref{lem:dominance}.
We say a pair of blue and red points in $\mathbb{R}^k$ is dominating if for all
indices $i\in[k]$ the $i$th coordinate of the blue point is greater or
equal to the $i$th coordinate of the red point.
The standard algorithm~\cite{PS85} solves the following problem:
\begin{problem}
	Given $N$ blue
	and $M$ red points in $\mathbb{R}^k$, report all
	bichromatic dominating pairs.
\end{problem}
Our problem is significantly more complicated and general. Instead of blue points we
have blue $k$-tuples $p_i$ of $2$-dimensional points, instead of red points we have
red $k$-tuples $q_j$ of bivariate polynomial inequalities, and we want to report
all bichromatic pairs $(p_i,q_j)$ such that, for all $t \in [k]$,
the point $p_{i,t}$ verifies the inequality $q_{j,t}$.
The standard algorithm essentially works by a combination of divide and conquer
and prune and search, using a one-dimensional cutting (median selection) to
split a problem into subproblems. We generalize the standard algorithm by using
higher dimensional cuttings, in a way similar to Matou\v{s}ek's
algorithm~\cite{Ma93}. For the analysis, we generalize Chan's analysis
of the standard algorithm when $k$ is not constant~\cite{Cha08}.

\begin{proof}[Proof of Lemma~\ref{lem:dominance}]

We use the Veronese embedding~\cite{Har77,Har13}.
Since the polynomials have constant degree, we can trade polynomial
inequalities for linear inequalities by lifting everything to a space of
higher --- but constant --- dimension. The degree of each polynomial is at most
$\deg(f)$. There are exactly $d = \binom{\deg(f)+2}{2} - 1$ different bivariate
monomials of degree at most $\deg(f)$%
\footnote{Not including the independent monomial, namely, $1$.}.
To each monomial we associate a variable
in $\mathbb{R}^d$. By this association, points in the plane are mapped to
points in $\mathbb{R}^d$ and bivariate polynomial inequalities are mapped to
$d$-variate linear inequalities.

By abuse of notation, let $p_i$ denote the tuple $p_i$ where each
$2$-dimensional point has been replaced by its $d$-dimensional counterpart, and
let $q_i$ denote the tuple $q_i$ where each bivariate polynomial inequality
has been replaced by its $d$-variate linear counterpart.
We have $N$ $k$-tuples $p_i$ and $M$ $k$-tuples $q_j$.
The algorithm checks each of the $k$ components of the tuples in turn and
can be described recursively as follows for some positive integer $r > 1$:
\begin{algorithm}[Polynomial Dominance Reporting]\label{algo:pdr}
\item[input] $N$ $k$-tuples $p_i$ of $d$-dimensional points, $M$ $k$-tuples
	$q_j$ of $d$-variate linear inequalities.
\item[output] All $(p_i,q_j)$ pairs such that, for all $t \in [k]$, the point
	$p_{i,t}$ verifies the inequality $q_{j,t}$.
\item[1.] If $k=0$, then output all pairs $(p_i,q_j)$ and halt.
\item[2.] If $N < r^d$ or $M < r$, solve the problem by brute force
	in $O((N+M) k)$ time.
\item[3.]
We now only consider the $k$th component of each input $k$-tuple and call these
\emph{active} components.
To each active $d$-variate linear inequality corresponds a defining
hyperplane in $\mathbb{R}^d$.
Construct, as in~\cite{Ma93}, a hierarchical cutting of $\mathbb{R}^d$ using $O(r^d)$
simplicial cells such that each simplicial cell
is intersected by at most $\frac{M}{r}$ of the defining hyperplanes.
This construction also gives us for each simplical cell of the cutting the list
of defining hyperplanes intersecting it.
This takes $O(Mr^{d-1})$ time.
Locate each active point inside the hierarchical cutting in time $O(N \log r)$.
Let $S$ be a simplicial cell of the hierarchical cutting.
Denote by $\Pi_S$ the set of active points in $S$. Partition each $\Pi_S$ into
$\left\lceil \frac{\lvert \Pi_S \rvert}{N r^{-2}} \right\rceil$ disjoint subsets
of size at most $\frac{N}{r^d}$.
For each simplicial cell, find the active inequalities whose corresponding
geometric object (hyperplane, closed or open half-space) contains the cell.
This takes $O(Mr^d)$ time.
The whole step takes $O(N\log r+Mr^d)$ time.
\item[4.] For each of the $O(r^d)$ simplicial cells, recurse on the at most
	$\frac{N}{r^d}$ $k$-tuples $p_i$ whose active point is inside the
	simplicial cell and the at most $\frac{M}{r}$ $k$-tuples $q_j$ whose active
	inequality's defining hyperplane intersects the simplicial cell.
\item[5.] For each of the $O(r^d)$ simplicial cells, recurse on the at most
	$\frac{N}{r^d}$ ($k-1$)-prefixes of $k$-tuples $p_i$ whose active point is
	inside the simplicial cell and the ($k-1$)-prefixes of $k$-tuples $q_j$
	whose active inequality's corresponding geometric object contains the simplicial cell.
\end{algorithm}

\subparagraph{Correctness}
In each recursive call, either $k$ is decremented or $M$ and $N$ are divided by
some constant, hence, one of the conditions in steps \textbf{1} and \textbf{2} is
met in each of the paths of the recursion tree and the algorithm always terminates.
Step \textbf{5} is correct because it only recurses on $(p_i,q_j)$ pairs whose
suffix pairs are dominating.
The base case in step \textbf{1} is correct because the only way for a pair
$(p_i,q_j)$ to reach this point is to have had all $k$ components checked in
step \textbf{5}.
The base case in step \textbf{2} is correct by definition.
Each dominating pair is output exactly once because the recursive calls of
step \textbf{4} and \textbf{5} partition the set of pairs $(p_i,q_j)$ that can
still claim to be candidate dominating pairs.

\subparagraph{Analysis} For $k,N,M\ge0$, the total complexity $T_k(N,M)$ of
computing the inclusions for the first $k$ components, excluding the output
cost (steps~\textbf{1}~and~\textbf{2}), is bounded by
\begin{align*}
	T_k(N,M)
	&\le
	\underbrace{O(r^d)\,T_{k-1}(N,M)}_{\text{Step}~\textbf{5}}
	+
	\underbrace{O(r^d)\,T_k\mleft(\frac{N}{r^d},\frac{M}{r}\mright)}_{\text{Step}~\textbf{4}}
	+
	\underbrace{O(N+M)}_{\text{Step}~\textbf{3}},\\
	T_0(N,M) &= 0, 
	\,\,\,\,\,T_k(N,M) = O(Nk)~\text{if $M < r$}, 
	\,\,\,\,\,T_k(N,M) = O(Mk)~\text{if $N < r^d$}.
\end{align*}
By point-hyperplane duality, $T_k(N,M) = T_k(M,N)$, hence, we can execute
step \textbf{4} on dual linear inequalities and dual points to balance the
recurrence. For some constant $c_1 \ge 1$,
\begin{displaymath}
	T_k(N,M)
	\le
	c_1 r^{2d} T_{k-1}(N,M)
	+
	c_1 r^{2d} T_k\mleft(\frac{N}{r^{d+1}},\frac{M}{r^{d+1}}\mright)
	+
	c_1 (N+M).
\end{displaymath}
For simplicity, we ignore some problem-size
reductions occuring in this balancing step.

Let $T_k(N) = T_k(N,N)$ denote the complexity of solving the problem when
$M=N$, excluding the output cost. Hence,
\begin{align*}
	T_k(N)
	&\le
	c_1 r^{2d} T_{k-1}(N)
	+
	c_1 r^{2d} T_k\mleft(\frac{N}{r^{d+1}}\mright)
	+
	c_1 N \addtocounter{equation}{1}\tag{\theequation} \label{eq:tkn},\\
	T_0(N) &= 0, 
	\,\,\,\,\,T_k(N) = O(k)~\text{if $N < r^{d+1}$}.
\end{align*}
\begin{filecontents}{03-algorithms-04-dominance-reporting-analysis.tex}
\section{Analysis of Polynomial Dominance Reporting}\label{app:apdr}
To get rid of the parameter $k$ and progress into the analysis of the
recurrence, Chan makes an ingenious change of variable~\cite{Cha08}.
With hindsight,
choose $b = r^{d+1}$ and let
\begin{equation}\label{eq:tn'}
	T(N') = \max_{b^k N\le N'} T_k(N),
\end{equation}
where the maximum is taken over all integers $k\ge0,N\ge1$.
By combining~(\ref{eq:tkn})~and~(\ref{eq:tn'}) we obtain
\begin{displaymath}
	T(N') = \max_{b^k N\le N'} T_k(N)
	\le
	\max_{b^k N\le N'} \mleft[ c_1r^{2d} T_{k-1}\mleft(N\mright) + c_1r^{2d}
	T_k\mleft(\frac{N}{r^{d+1}}\mright) + c_1 N \mright].
\end{displaymath}
The maximum of a sum is always bounded by the sum of the maxima of its terms,
hence,
\begin{displaymath}
	T(N') \le
	\max_{b^k N\le N'} \mleft[ c_1 r^{2d} T_{k-1} \mleft(N\mright)\mright]
	+
	\max_{b^k N\le N'} \mleft[ c_1 r^{2d} T_k \mleft(\frac{N}{r^{d+1}}\mright)\mright]
	+
	\max_{b^k N\le N'} \mleft[ c_1 N \mright].
\end{displaymath}
By definition of $T(N')$, we have
\begin{align*}
\max_{b^k N\le N'} T_{k-1}(N)
&=
\max_{b^{k-1} N\le \frac{N'}{b}} T_{k-1}(N)
=
T\mleft(\frac{N'}{b}\mright)
=
T\mleft(\frac{N'}{r^{d+1}}\mright),\\
\max_{b^k N\le N'} T_{k}\mleft(\frac{N}{r^{d+1}}\mright)
&=
\max_{b^k \frac{N}{r^{d+1}}\le \frac{N'}{r^{d+1}}} T_{k}\mleft(\frac{N}{r^{d+1}}\mright)
=
T\mleft(\frac{N'}{r^{d+1}}\mright),\\
\max_{b^k N\le N'} N
&=
\max_{N\le \frac{N'}{b^k}} N
=
\frac{N'}{b^k} \le N',
\end{align*}
which, when combined, produce the following recurrence
\begin{displaymath}
	T(N') \le 2 c_1 r^{2d} T\mleft(\frac{N'}{r^{d+1}}\mright) + c_1 N'.
\end{displaymath}

\subparagraph{Powers of $r^{d+1}$}
We claim that if $N'$ is a power of $r^{d+1}$, then $T(N') \le c_2[{N'}^\alpha - N']$
for some constants $\alpha > 1$ and $c_2 \ge 1$. We prove by
induction that this guess is indeed correct.
For $N' = 1$, we have
\begin{align*}
	T(1)
	=
	\max_{b^{k}N \le 1} T_{k} (N)
	=
	T_0(1)
	=
	0
	\le
	c_2[1^{\alpha} - 1].
\end{align*}
For $N' \ge r^{d+1}$ a power of $r^{d+1}$,
assuming the claim holds for all smaller powers of $r^{d+1}$
\begin{align*}
	T(N')
	&\le
	2 c_1r^{2d}c_2\mleft[\mleft(\frac{N'}{r^{d+1}}\mright)^\alpha
	-\frac{N'}{r^{d+1}}\mright]
	+
	c_1 N'\\
	&\le c_2 {N'}^\alpha
	\mleft[\frac{2c_1r^{2d}}{(r^{d+1})^\alpha}\mright]
	- c_2 N' \mleft[2c_1r^{d-1} -
	\frac{c_1}{c_2}\mright]
\end{align*}
We want
\begin{equation*}
\frac{c_1r^{2d}}{(r^{d+1})^\alpha} \le \frac 12
\qquad
\text{and}
\qquad
2c_1r^{d-1} - \frac{c_1}{c_2} \ge 1.
\end{equation*}
For the first inequality, we can set the left hand side to be equal to
$c_1 r^{-\varepsilon'} = \frac{1}{2}$
with some small $\varepsilon' = \frac{1 + \log c_1}{\log r}$.
Hence,
$2d - \alpha(d+1) = -\varepsilon'$, and for $\varepsilon = \frac
{\varepsilon'} {d+1}$, we get $\alpha = \frac{2d}{d+1} + \varepsilon$.
The second inequality is equivalent to
$2r^{d-1} \ge \frac{1}{c_1} + \frac{1}{c_2}$,
which always holds since
$r \ge 2, d \ge 1, c_1 \ge 1, c_2 \ge 1$.

We now have
\begin{displaymath}
	T(N') = O({N'}^{\frac{2d}{d+1}+\varepsilon}),
\end{displaymath}
where
$\varepsilon = \frac{1+\log c_1}{(d+1)\log r}$
can be chosen arbitrarily small by picking
$r = (2c_1)^{\sfrac{1}{\varepsilon (d+1)}}$
arbitrarily large.

\remark{}
The choice $b=r^{d+1}$ gives a simpler analysis. Although giving more
freedom to the value of $b$ --- as in Chan's paper --- yields a slightly better
relation between $\varepsilon$ and $r$, namely
$r>c_1^{\sfrac{1}{\varepsilon (d+1)}}$, it does not get rid of the dependency
of $\varepsilon$ in $r$, unless $c_1=1$.

\subparagraph{General case}
When $N' \ge 2$ is not a power of $r^{d+1}$, we use the fact that $T(N') \le
T(N'+1)$ by definition,
\begin{align*}
	T(N')
	&=
	T\mleft({(r^{d+1})}^{\log_{r^{d+1}} N'}\mright)
	\le
	T\mleft({(r^{d+1})}^{\lfloor\log_{r^{d+1}} N'\rfloor + 1}\mright)
	=
	O\mleft({(r^{d+1})}^{(\lfloor\log_{r^{d+1}} N'\rfloor + 1
	)(\frac{2d}{d+1}+\varepsilon)}\mright)
	\\
	&=
	O\mleft({(r^{d+1})}^{\frac{2d}{d+1}+\varepsilon}
	{{(r^{d+1})}^{\lfloor\log_{r^{d+1}} N'\rfloor}}^{\frac{2d}{d+1}+\varepsilon}\mright)
	=
	O\mleft({{(r^{d+1})}^{\lfloor\log_{r^{d+1}}
	N'\rfloor}}^{\frac{2d}{d+1}+\varepsilon}\mright)
	\\
	&=
	O\mleft({{(r^{d+1})}^{\log_{r^{d+1}}
	N'}}^{\frac{2d}{d+1}+\varepsilon}\mright)
	=
	O\mleft({N'}^{\frac{2d}{d+1}+\varepsilon}\mright)
\end{align*}
\subparagraph{Finally} We can now bound $T_k(N)$ using the upper bound for
$T(N')$,
\begin{displaymath}
	T_{k}(N)
	\le
	\max_{b^{k_i} N_i \le b^{k} N} T_{k_i}(N_i)
	=
	T(b^{k} N)
	=
	O((b^{k} N)^{\frac{2d}{d+1}+\varepsilon})
	=
	2^{O(k)}{N}^{\frac{2d}{d+1}+\varepsilon}.
\end{displaymath}
\end{filecontents}
Solving the recurrence\footnote{See Appendix~\ref{app:apdr}.} gives
$T_k(N) = 2^{O(k)}N^{\frac{2d}{d+1}+\varepsilon_r}$,
and since $d=\binom{\deg(f)+2}{2}-1$, we have
\begin{displaymath}
	T_k(N) = 2^{O(k)}N^{2-\frac{4}{{\deg(f)}^2+3\deg(f)+2}+\varepsilon_r}.
\end{displaymath}
To that complexity we add a constant time unit for each output
pair in steps \textbf{1} and \textbf{2}.
\end{proof}


\section{3POL}

Extending the previous techniques
to work for the (implicit) 3POL problem is nontrivial:
\begin{enumerate}
\setlength{\itemsep}{0pt}
\setlength{\parskip}{0pt}
\setlength{\parsep}{0pt}
\item Instead of sorting the sets $f(A_i \times B_j)$ we need to sort the
real roots of the
$F(A_i \times B_j, z)$,
\item The $\gamma_{b,b'}$ curves must be redefined. The redefined curve
$\gamma_{b,b'}$ is still the zero-set of some constant-degree bivariate
polynomial $P(x,y)$. However, retrieving the information we need for sorting
becomes more challenging than just computing the sign of the $P(A_i \times A_i)$,
\item The implementation of the certificates for the uniform algorithm gets
much more convoluted: each certificate checks
the validity of a conjunction of Tarski sentences.
\end{enumerate}

Those extensions are explained in detail in
Appendix~\ref{sec:algo:implicit:nonuniform}~and~\ref{sec:algo:implicit:uniform}
where we show
\begin{theorem}\label{thm:act-3POL-bis}
	There is a bounded-degree ADT of depth
	$O(n^{\frac{12}{7}+\varepsilon})$ for 3POL\@.
\end{theorem}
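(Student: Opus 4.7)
The plan is to follow the structure of the proof of Theorem~\ref{thm:explicit:act}, keeping the grid-partitioning skeleton and the cost breakdown (batch search plus cell-wise binary search) unchanged, while adapting each of the three ingredients flagged at the start of this section.

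For the search step, I would replace the curve $c = f(x,y)$ by the algebraic curve $F(x,y,c) = 0$ in the $(x,y)$-plane. Since $F$ has constant degree, this is still a constant-degree algebraic curve, and the proof of Lemma~\ref{lem:intersections} goes through mutatis mutandis: decompose into $xy$-monotone arcs using $F'_x(x,y,c) = 0$ and $F'_y(x,y,c) = 0$, bound the number of arcs by B\'ezout, and enumerate the intersected cells by solving constant-size Tarski sentences via Collins CAD. This contributes $O(n/g)$ cells per $c \in C$. For the per-cell lookup, I would replace the sorted sets $f(A_i \times B_j)$ by the sorted multisets of real roots of $\{\,F(a,b,z) \st (a,b) \in A_i \times B_j\,\}$; since $F$ has constant degree in $z$, this is still $O(g^2)$ values and binary search of a candidate $c$ still costs $O(\log g)$, so Lemma~\ref{lem:preprocessing} carries over.

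The core new ingredient is the generalized Fredman's trick. I would define $\gamma_{b,b'}$ as the zero-set of the resultant $\text{Res}_z(F(x,b,z), F(y,b',z))$, a constant-degree bivariate polynomial in $(x,y)$ whose zero set is precisely the locus where $F(x,b,z)$ and $F(y,b',z)$ share a root. On each connected face of the arrangement of $\gamma_{b,b'}$ --- augmented by the discriminant loci of $F(x,b,\cdot)$ and $F(y,b',\cdot)$, each still a constant-degree univariate polynomial giving vertical or horizontal critical lines --- both the number of real roots and their combinatorial ordering are invariant. The subtlety noted in the excerpt, namely that the sign of the defining polynomial does not directly encode the ordering, is addressed by a constant-size Collins CAD computation per pair $(b,b')$ that tabulates the actual ordering at one representative point of each face in $O(1)$ time in the bounded-degree ADT model; a subsequent point-location in this augmented arrangement is then translated into the desired pairwise root comparison by an $O(1)$ table lookup.

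For the preprocessing step I would apply Polynomial Batch Range Searching to locate all $N = ng$ points $(a,a') \in \bigcup_i A_i \times A_i$ against the $N$ resultant curves (together with the $O(n)$ discriminant curves, whose contribution is negligible). Lemma~\ref{lem:batch} applies because its proof in Appendix~\ref{sec:algo:point-curves-location} only uses that the input curves are zero sets of constant-degree bivariate polynomials, via the Veronese lift to a hyperplane location problem. This yields $O(N^{4/3+\varepsilon})$ depth; combined with the $O(n^2 \log g / g)$ cost of the cell-wise binary searches and optimized at $g = \Theta(n^{2/7-\varepsilon})$, the total depth is $O(n^{12/7+\varepsilon})$, matching Theorem~\ref{thm:explicit:act}. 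The main obstacle is making the augmented Fredman's trick rigorous: verifying that only constantly many auxiliary curves are needed per pair, that Collins CAD indeed outputs the per-face orderings in $O(1)$, and that the richer algebraic structure does not inflate the degree parameters driving Lemma~\ref{lem:batch}.
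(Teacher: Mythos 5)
Your proposal matches the paper's proof in both structure and every essential ingredient: replace $c=f(x,y)$ by $F(x,y,c)=0$ for cell enumeration, replace the sorted sets $f(A_i\times B_j)$ by the sorted real roots of $\{F(a,b,z):(a,b)\in A_i\times B_j\}$, redefine $\gamma_{b,b'}$ as the zero set of $\mathrm{Res}_z\bigl(F(x,b,z),F(y,b',z)\bigr)$, augment the arrangement with axis-parallel lines at the critical parameter values where the number of real roots of $F(\cdot,b,z)$ or $F(\cdot,b',z)$ changes (the paper phrases these as vertical-tangency and singular points of $\delta_b=\{(x,z):F(x,b,z)=0\}$, which is exactly your discriminant locus), tabulate the root interleaving on each face via a constant-size CAD, then run Polynomial Batch Range Searching on $N=ng$ points versus $N$ resultant curves plus the $O(n)$ auxiliary lines, and optimize $g=\Theta(n^{2/7-\varepsilon})$. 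The one point the paper is more explicit about and you should not gloss over when making this rigorous is that the relevant cells of the arrangement $\mathcal A(\Gamma_{b,b'})$ include the lower-dimensional faces (arcs of $\gamma_{b,b'}$, segments of the axis-parallel lines, and vertices), each of which also carries an invariant interleaving that must be tabulated; and that degree drops of $F(x,b,z)$ in $z$ (leading coefficient vanishing, or $F(a,b,z)\equiv 0$) are a further source of root-count changes that the naive discriminant misses unless you take the full resultant $\mathrm{Res}_z(F,F'_z)$ with formal degree.
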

\begin{theorem}\label{thm:implicit:uniform-bis}
	3POL can be solved in
	$O(n^2 {(\log \log n)}^\frac{3}{2} / {(\log n)}^\frac{1}{2})$ time.
\end{theorem}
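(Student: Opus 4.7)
The plan is to mimic the uniform algorithm for explicit 3POL of Section~\ref{sec:algo:explicit:uniform}, using the same partitioning of $A$ and $B$ into blocks of size $g = c_{\deg F}\sqrt{\log n/\log\log n}$ and the same three-step outline (partition, preprocess by certificates, search). The analysis is identical once each piece survives the switch from $z=f(x,y)$ to $F(x,y,z)=0$, so the task reduces to re-establishing Lemmas~\ref{lem:intersections},~\ref{lem:preprocessing},~\ref{lem:fredman}, and~\ref{lem:dominance} in the implicit setting.

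For the search step, for every $c\in C$ we must enumerate the cells $A_i^*\times B_j^*$ intersected by the plane curve $\{(x,y):F(x,y,c)=0\}$. Since $F$ has constant degree, this is a constant-degree algebraic curve, so the B\'ezout plus $xy$-monotone-arcs argument of Lemma~\ref{lem:intersections} still gives $O(n/g)$ intersected cells, enumerable in $O(n/g)$ time using Tarski sentences of constant size. Inside each such cell we must decide whether $c\in R_{i,j}:=\{\,z\in\mathbb{R}:F(a,b,z)=0,\ (a,b)\in A_i\times B_j\,\}$; because $F(a,b,z)$ has constant $z$-degree for fixed $(a,b)$, the set $R_{i,j}$ has size $O(g^2)$, and a sorted order on it lets us answer each query in $O(\log g)$ time, reproducing Lemma~\ref{lem:preprocessing}.

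The main obstacle is generalizing Fredman's trick so as to actually obtain those sorted orders within the allotted preprocessing budget. For a pair $(b,b')\in B_j\times B_j$ and indices $\mu,\nu$ ranging over the (constantly many) real root slots of $F(\cdot,b,\cdot)$ and $F(\cdot,b',\cdot)$, the locus of $(a,a')$ where the $\mu$th real root of $F(a,b,z)$ equals the $\nu$th real root of $F(a',b',z)$ is contained in the zero-set of the resultant $\mathrm{Res}_z(F(a,b,z),F(a',b',z))$, a polynomial in $(a,a')$ of constant degree; the correct branch is singled out by a constant number of additional sign conditions from Sturm-like sequences, each again a constant-degree polynomial. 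This replaces each $\gamma_{b,b'}$ by a bounded family of constant-degree curves $\gamma_{b,b'}^{\mu,\nu}$, so Lemma~\ref{lem:batch} still applies to their disjoint union up to a $2^{O(1)}$ blow-up. Each certificate for a cell is then a conjunction of $O(g^2)$ constant-size Tarski sentences, each of which by the Veronese lift used in Lemma~\ref{lem:dominance} reduces to placing a lifted point against a constant number of constant-degree hypersurfaces; PDR therefore handles the preprocessing with the same $2^{O(k)} N^{2-4/(d^2+3d+2)+\varepsilon}$ bound, where $d$ now depends on the degree of $F$.

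With these four re-established lemmas plugged into the cost expression
\[
(g^2)!\,2^{O(g^2)}\,(n/g)^{2-\Omega(1)+\varepsilon} + O((n/g)^2) + O(n^2\log g/g),
\]
the choice $g=c_{\deg F}\sqrt{\log n/\log\log n}$ makes the first two terms $o(n^2/\mathrm{polylog}\,n)$ and leaves the binary-search term $O(n^2 (\log\log n)^{3/2}/(\log n)^{1/2})$ dominant, yielding Theorem~\ref{thm:implicit:uniform-bis}. The hardest part of the proof (and the reason it is relegated to the appendix) will be the bookkeeping of root-ordering conditions inside the certificates: one must enumerate the $2^{O(1)}$ possible interleavings of the real roots of $F(a,b,z)$ and $F(a',b',z)$ and verify that each triple $(a,b,c)$ with $F(a,b,c)=0$ is caught by the PDR instance associated with exactly one certificate, so that completeness for the decision problem is preserved without inflating the output size beyond what the analysis absorbs.
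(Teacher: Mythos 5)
Your outline follows the same route the paper takes in Appendices~\ref{sec:algo:implicit:nonuniform} and~\ref{sec:algo:implicit:uniform}: redefine $\gamma_{b,b'}$ via $\mathrm{Res}_z(F(x,b,z),F(y,b',z))$, run batched point location against those curves, and hand generalized certificates to the Polynomial Dominance Reporting machinery, with the final cost expression and choice of $g$ unchanged. But there is a genuine gap in the middle step. You speak of the ``$\mu$th real root slot'' of $F(a,b,z)$ as if that index were well defined across an $A_i^*\times B_j^*$ cell, and you cut the $(a,a')$-plane only by the resultant curve. The resultant vanishes when the two polynomials share a root, so it catches the paper's events of type (1)--(2) (roots merging or splitting), but it is blind to events (3)--(4): a real root of $F(a,b,z)$ appearing or disappearing as $a$ crosses a discriminant or leading-coefficient locus. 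Across such a crossing the cardinality changes and any fixed interleaving label $(\mu,\nu)$ silently shifts, so the certificate read off a sample point would be wrong on part of the cell. The paper repairs this by adding the $\delta_b$-lines (vertical lines at the $x$-coordinates of vertical tangency, singular, and degenerate points of $\delta_b=\{(x,z):F(x,b,z)=0\}$, and the symmetric horizontal $\delta_{b'}$-lines) to $\gamma_{b,b'}$, proving the resulting $(b,b')$-cells have the ``invariant property,'' and correspondingly splitting each certificate into a cardinality part $\Psi_\rho$ and an interleaving part $\Phi_{\rho,\pi}$ indexed by a root-count function $\rho$ in addition to the permutation $\pi$. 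Your appeal to ``Sturm-like sequences'' gestures in this direction but does not supply these extra cuts or the two-layer certificate, and without them the PDR instance is not sound.

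A second, smaller omission: you never treat the case where $F(a,b,z)$ is the zero polynomial for some $(a,b)\in A\times B$, in which every $c$ is a solution and the root-sorting framework does not even apply. The paper screens for this separately via the certificate $\Omega$ in $O(n\log n)$ time before running the main loop; some such check is required for correctness, even though it does not affect the asymptotics.
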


{\small%
\bibliography{3pol}}
\appendix
\section{Polynomial Batch Range Searching}
\label{sec:algo:point-curves-location}

In this section we present a uniform algorithm that computes the relative
position of \(M\) points with respect to \(N\) $\gamma_{b,b'}$ curves. We call
such a problem an \((M,N)\)-problem. When \(M=N\) the complexity of the
algorithm is \(O(N^{\frac{4}{3}+\varepsilon})\).
The algorithm gives the output in ``concise form'':
it outputs a set of $(\Pi_\alpha, \Gamma_\beta, \sigma)$ triples where
$\Pi_\alpha$ is a subset of input points, $\Gamma_\beta$ is a subset of input curves, and $\sigma \in
\{\,-,0,+\,\}$ indicates the relative position of all points in $\Pi_\alpha$ with
respect to all curves in $\Gamma_\beta$.
Note that if one is only interested in incident point-curve pairs, the
algorithm can explicitely report all of them in
$O(N^{\frac{4}{3}+\varepsilon})$ time, because there are at most $O(N^{\frac
43})$ such pairs and because they can easily be filtered from the output.

\subparagraph{Tools} The proof of Lemma~\ref{lem:batch} involves stantard
computational geometry tools: vertical decomposition of an
arrangement of polynomial curves, $\varepsilon$-nets, cuttings and
derandomization. For the construction of the vertical
decomposition of an arrangment of polynomial curves, we refer the reader to
Pach and Sharir~\cite{Alcala}, Chazelle et al.~\cite{CEGS91}, and Edelsbrunner
et al.~\cite{EGPPSS92}. For cuttings, $\varepsilon$-nets and derandomization, we
refer the reader to Matou\v{s}ek~\cite{M95,M96}, Chazelle and
Matou\v{s}ek~\cite{CM96} and Brönnimann et al.~\cite{BCM99}.

\begin{proof}[Proof of Lemma~\ref{lem:batch}]

Fix some constant $r > 1$.
If $M < r^2$ or $N < r$, solve by brute-force in $O(M+N)$ time. Otherwise,
consider the range space defined by $\gamma_{b,b'}$ curves and $y$-axis aligned
trapezoidal patches whose top and bottom sides are pieces of $\gamma_{b,b'}$
curves. This range space has constant VC-dimension.
Compute an $\frac 1r$-net of size $O(r \log r)$ for the input curves with respect to
this range space. Compute the vertical decomposition \(\Xi\) of the arrangement
of this $\frac 1r$-net. This decomposition is a $\frac 1r$-cutting: it partitions
$\mathbb{R}^2$ into $O(r^2 \log^2 r)$ cells of constant complexity each of
which intersects at most \(\frac{N}{r}\) input curves. Denote by \(\Pi_C\) the
set of points contained in the cell \(C \in \Xi\). Partition each \(\Pi_C\)
into \(\left\lceil \frac{\lvert \Pi_C \rvert}{M r^{-2}} \right\rceil\) disjoint subsets of
size at most \(\frac{M}{r^2}\). All of this can be done in \(O(M+N)\) time.
The last step consists of solving \(O(r^2 \log^2 r)\)
\((\frac{M}{r^2},\frac{N}{r})\)-problems, that is, solving the problem
recursively for the points and curves intersecting each cell.
The recursive call will be done by swapping the role of the points and curves
using a form of duality to be described below.
%

\subparagraph{Correctness} We want to locate each point with respect to each
curve. When considering a curve-cell pair, there are two cases: (1) either the
curve intersects the cell, or (2) it does not. For the first case we locate
each point in the cell with respect to the curve in one of the recursive steps.
For the second step, the relative position of all points in the cell with
respect to the curve is the same, it suffices thus to locate one of those point
with respect to the curve to get the location of all the points in
$O(1)$ time.
Each recursive call divides $M$ and
$N$ by some constant, hence, the base case is reached
in each of the paths of the recursion tree and the algorithm always terminates.

\subparagraph{Analysis} For $c_1$ some constant and bounding $c_1 r^{2} \log^2 r$
above by $c_2 r^{2+\varepsilon}$ for some large enough constant $c_2$, the
complexity $T(M,N)$ of an $(M,N)$-problem is thus
\begin{align*}
    T(M,N) &\le c_2(r^{2+\varepsilon})\,T\mleft(\frac{M}{r^{2}},\frac{N}{r}\mright) + O(M+N).
\end{align*}
The complexity $T(N,M)$ of a \((N,M)\)-problem is the same as the
complexity $T(M,N)$ of an \((M,N)\)-problem by the following point-curve
duality result whose proof is straightforward
\begin{lemma}\label{lem:dual}
Define
\begin{displaymath}
    \hat{\gamma}_{a,a'} = \{\, (x,y) \st f(a,x)=f(a',y)\,\},
\end{displaymath}
then, locating $(a,a')$ with respect to $\gamma_{b,b'}$ amounts to
locating $(b,b')$ with respect to $\hat{\gamma}_{a,a'}$.
\end{lemma}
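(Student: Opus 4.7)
The plan is to unwrap both sides of the claimed equivalence to the sign of one and the same real expression, namely $f(a,b)-f(a',b')$, at which point the lemma becomes a tautology.

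First I would recall that, by the definitions from Section~\ref{sec:algo:explicit:nonuniform}, the curve $\gamma_{b,b'}$ and its open half-regions $\gamma^-_{b,b'},\gamma^+_{b,b'}$ are exactly the zero set and the two sign components of the bivariate polynomial $P_{b,b'}(x,y) := f(x,b)-f(y,b')$. Consequently, "locating $(a,a')$ with respect to $\gamma_{b,b'}$" is, by Lemma~\ref{lem:fredman}, nothing more than returning the element of $\{-,0,+\}$ equal to the sign of $P_{b,b'}(a,a')=f(a,b)-f(a',b')$.

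Next I would perform the symmetric computation on the dual side: the curve $\hat{\gamma}_{a,a'}$ is the zero set of $Q_{a,a'}(x,y):=f(a,x)-f(a',y)$, and "locating $(b,b')$ with respect to $\hat{\gamma}_{a,a'}$" means returning the sign of $Q_{a,a'}(b,b')=f(a,b)-f(a',b')$. Since the two polynomials evaluate to the very same real number at the relevant query points, the two location queries produce identical answers, which is exactly the content of the lemma.

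There is essentially no obstacle here; the only thing that requires a moment of care is to make sure that the three-valued answer returned by a point-curve location query is precisely the sign of the defining polynomial evaluated at the point, with the same sign convention on both sides of the duality — once that bookkeeping is fixed, the proof is just a renaming of variables. This dual viewpoint is what licenses swapping the roles of points and curves in the recursion at the end of the proof of Lemma~\ref{lem:batch}.
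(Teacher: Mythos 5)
Your proposal is correct and is exactly the argument the paper has in mind: the authors state only that the proof "is straightforward" and omit it, and the intended straightforward argument is precisely the one you give, namely that both location queries reduce to the sign of $f(a,b) - f(a',b')$, so the two answers coincide by definition.
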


By doing alternately one step in the primal with the points $(a,a')$ and the
curves $\gamma_{b,b'}$, then a second step with the dual points $(b,b')$ and the
dual curves $\hat{\gamma}_{a,a'}$, we get the following recurrence
\begin{align*}
    T(M,N) &\le c_2^2(r^{4+\varepsilon})\,T\mleft(\frac{M}{r^{3}},\frac{N}{r^{3}}\mright) +
	c_2(r^{2+\varepsilon})\,O\mleft(\frac{M}{r^2} + \frac{N}{r}\mright) +
	O(M+N)\\
	&\le c_2^2(r^{4+\varepsilon})\,T\mleft(\frac{M}{r^{3}},\frac{N}{r^{3}}\mright) +
	O(M+N)
\end{align*}
Hence, for some large enough constant $c_3$ (using the Master Theorem),
\begin{align*}
T(N,N) = T(N) &\le c_3 (r^{4+\varepsilon})\,T\mleft(\frac{N}{r^{3}}\mright) + O(N)\\
	&\le O(N^{\log_{r^{3}} c_3  r^{4+\varepsilon}})\\
	&\le O(N^{\frac{4}{3}+\varepsilon}).
\end{align*}
\end{proof}

Let us recapitulate the whole algorithm,
\begin{algorithm}[Polynomial Batch Range Searching]\label{algo:pbrs}
\item[input] A set $\Pi$ of $M$ points $(a,a')$, A set $\Gamma$ of $N$ curves $\gamma_{b,b'}$.
\item[output] A set of triples $(\Pi_\alpha,\Gamma_\beta,\sigma)$ covering $\Pi \times \Gamma$
    such that for any triple $(\Pi_\alpha,\Gamma_\beta,\sigma)$,
    for all point $(a,a')$ in $\Pi_\alpha$ and all curve $\gamma$ in
    $\Gamma_\beta$,
    $(a,a') \in \gamma^\sigma$.
\item[0.] If $M < r^2$ or $N < r$, solve the problem by brute force in $O(M+N)$ time.
\item[1.] Compute an $\frac 1r$-net of size $O(r \log r)$ for the input curves.
\item[2.] Compute the vertical decomposition \(\Xi\) of the arrangement of this
    $\frac 1r$-net.
\item[3.] Denote by \(\Pi_C\) the set of points contained in the cell \(C \in \Xi\).
    Partition each \(\Pi_C\) into \(\left\lceil \frac{\lvert \Pi_C \rvert}{M
    r^{-2}} \right\rceil\) disjoint subsets
    $\Pi_{C,i}$ of size at most \(\frac{M}{r^2}\).
\item[4.] For each cell $C$ of the vertical decomposition,
\item[4.1.] For each subset $\Pi_{C,i}$ of points contained in that cell,
\item[4.1.1.] Solve an $(\frac{N}{r},\frac{M}{r^2})$-problem on the curves
    intersecting that cell and the points in $\Pi_{C,i}$, swapping the roles of
    lines and curves via duality.
\item[4.2.] For each curve $\gamma$ not intersecting $C$,
\item[4.2.1.] Compute the location $\sigma_{C,\gamma}$ of any point in $C$ with
    respect to $\gamma$.
\item[4.3.] Output $(\{\,\gamma \st \sigma_{C,\gamma} = -\,\},\Pi_C,
    -)$.\footnote{Note that $\Pi_C$ is implemented as a pointer to the input
    points in $C$.}
\item[4.4.] Output $(\{\,\gamma \st \sigma_{C,\gamma} = +\,\},\Pi_C, +)$.
\item[4.5.] Output $(\{\,\gamma \st \sigma_{C,\gamma} = 0\,\},\Pi_C,
    0)$.\footnote{Some cells of the vertical decomposition could be degenerate trapezoidal
    patches. The decomposition could contain vertices, line segments, and curve
    segments as cells, each of which could contain input points and be contained
    by an input curve.}
\end{algorithm}




\input{03-algorithms-04-dominance-reporting-analysis}
\section{Nonuniform algorithm for 3POL}
\label{sec:algo:implicit:nonuniform}

In this section, we extend the nonuniform algorithm given for explicit 3POL in
\S\ref{sec:algo:explicit:nonuniform} to work for the more general 3POL
problem. We prove the following
\begin{theorem}\label{thm:act-3POL}
    There is a bounded-degree ADT  of depth
    $O(n^{\frac{12}{7}+\varepsilon})$ for 3POL\@.
\end{theorem}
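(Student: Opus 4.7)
The plan is to follow the explicit 3POL algorithm of Section~\ref{sec:algo:explicit:nonuniform} essentially line by line, replacing values of $f$ by real roots of $F$ viewed as a polynomial in $z$. Partition $A$ and $B$ into $n/g$ consecutive blocks each, exactly as before. For each cell $A_i^* \times B_j^*$ I would precompute a sorted list of the multi-set
\begin{displaymath}
Z_{i,j} \;=\; \bigcup_{(a,b)\in A_i\times B_j}\{\,z\in\mathbb{R}\st F(a,b,z)=0\,\},
\end{displaymath}
which has size $O(g^2)$ since $\deg_z(F)=O(1)$. For each $c\in C$, the planar curve $F(x,y,c)=0$ has constant degree, so Lemma~\ref{lem:intersections} applies verbatim with $f(x,y)-c$ replaced by $F(x,y,c)$: the curve splits into $O(1)$ $xy$-monotone arcs and meets $O(n/g)$ cells, which are found in $O(n/g)$ time. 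For each intersected cell we binary-search $c$ in $Z_{i,j}$ in $O(\log g)$ time, giving total search cost $O((n^2/g)\log g)$ after preprocessing.

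The main obstacle is sorting the $Z_{i,j}$, which is the only nonuniform step. To extend Fredman's trick, I would fix $(b,b')\in B_j\times B_j$ and, for each pair of root indices $(k,\ell)\in\{1,\dots,\deg_z(F)\}^2$, define the curve
\begin{displaymath}
\gamma_{b,b',k,\ell} \;=\; \{\,(x,y)\st\text{the $k$-th real root of } F(x,b,z) \text{ equals the $\ell$-th real root of } F(y,b',z)\,\}.
\end{displaymath}
This is a semi-algebraic set whose Zariski closure is contained in the zero set of a factor of the resultant $\mathrm{Res}_z(F(x,b,z),F(y,b',z))$, which is a bivariate polynomial of constant degree. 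Between two adjacent discriminant loci the relative order of a fixed pair of roots cannot change, so locating a point $(a,a')$ with respect to the $O(1)$ curves attached to $(b,b')$ determines the complete relative order of all $\deg_z(F)^2$ pairs of roots coming from $F(a,b,z)$ and $F(a',b',z)$. This is exactly the comparison information needed to sort $Z_{i,j}$, generalizing Lemma~\ref{lem:fredman}.

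Sorting the $Z_{i,j}$ then reduces to a Polynomial Batch Range Searching instance on $N=ng$ points and $O(N)$ algebraic curves of constant degree. I would invoke Lemma~\ref{lem:batch}: its proof in Appendix~\ref{sec:algo:point-curves-location} only relies on the constant VC-dimension of the range space, vertical decomposition of arrangements of constant-degree curves, $\varepsilon$-nets, and point-curve duality of the form $\mathrm{Res}_z(F(x,b,z),F(y,b',z))=0$ versus $\mathrm{Res}_z(F(a,x,z),F(a',y,z))=0$, all of which go through for this broader family. This yields preprocessing depth $O(N^{4/3+\varepsilon})=O((ng)^{4/3+\varepsilon})$. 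Balancing $O((ng)^{4/3+\varepsilon})+O((n^2/g)\log g)$ with $g=\Theta(n^{2/7-\varepsilon})$ produces the target $O(n^{12/7+\varepsilon})$ depth, exactly as in the explicit case. The only genuinely new technical point is the encoding of the root-comparison relation by a constant number of bounded-degree algebraic curves per block pair; everything downstream, including the Batch Range Searching machinery, is unchanged.
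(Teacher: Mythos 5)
Your proposal tracks the paper's own proof quite closely in structure: same cell decomposition, same reduction of preprocessing to locating points against $O(1)$ resultant-derived curves per block pair, and the same Batch Range Searching and balancing step at the end. However, there is a genuine gap in how you encode the root-comparison information.

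The sets $\gamma_{b,b',k,\ell}$ you define (``the $k$-th real root of $F(x,b,z)$ equals the $\ell$-th real root of $F(y,b',z)$'') are \emph{semi-algebraic}, not algebraic curves. Batch Range Searching locates points with respect to genuine algebraic curves via vertical decompositions of arrangements of zero sets; it cannot directly locate against a semi-algebraic set whose defining condition involves the \emph{order index} of a real root. Replacing them by their Zariski closure (a component of $\mathrm{Res}_z(F(x,b,z),F(y,b',z))$) does not suffice, because the resultant locus only captures events where some root of one polynomial coincides with some root of the other. It does \emph{not} capture the events where the \emph{number} of real roots of $F(x,b,z)$ or $F(y,b',z)$ changes. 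When a real root appears or disappears, the identity of ``the $k$-th real root'' shifts discontinuously, and the interleaving changes without $(x,y)$ crossing the resultant curve at all. Your sentence ``between two adjacent discriminant loci the relative order of a fixed pair of roots cannot change'' gestures at this, but you never say which curves these loci contribute to your batch-range-searching instance, and your explicitly defined family is the resultant alone.

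The paper closes exactly this gap: in addition to the single algebraic curve $\gamma_{b,b'}=\{\,(x,y)\st \mathrm{Res}_z(F(x,b,z),F(y,b',z))=0\,\}$, it adds the vertical $\delta_b$-lines and horizontal $\delta_{b'}$-lines in the $(x,y)$-plane, which are the $x$- (resp.\ $y$-) coordinates of the vertical tangency and singular points of the curves $F(x,b,z)=0$ and $F(y,b',z)=0$ in the $xz$- (resp.\ $zy$-) plane. These are precisely the loci where real roots appear or disappear. Within a cell of the arrangement of $\gamma_{b,b'}$ and these lines, both the number of real roots and their interleaving are invariant (the paper states and proves this as the ``invariant property''), so a single sample point per cell determines the interleaving. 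You would also need to observe the latter step: merely locating $(a,a')$ with respect to the curves is not the comparison information itself; it identifies a cell, and the interleaving per cell must be computed from one representative. Without the $\delta_b$- and $\delta_{b'}$-lines and the per-cell sampling, the preprocessing step cannot correctly reconstruct the sorted order of $Z_{i,j}$, so the binary-search phase is not sound.
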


\subparagraph{Idea}
The idea is the same as for explicit 3POL\@. Partition the plane
into $A^*_i \times B^*_j$ cells.
Note that for a fixed $c \in C$, the curve $F(x,y,c)$ intersects $O(\frac ng)$
cells $A^*_i \times B^*_j$. The algorithm is the following: (1) for each cell
$A^*_i \times B^*_j$, sort the real roots of the $F(a,b,z)
\in \mathbb{R}[z]$ taking the union over all $(a,b) \in A_i \times B_j$,
(2) for each $c \in C$,
for each cell $A^*_i \times B^*_j$ intersected by $F(x,y,c)$, binary search on
the sorted order computed in step (1) to find $c$. Step (2) costs $O(n^2
g^{-1}\log g)$. It only remains to implement step (1) efficiently.

\subparagraph{$A \times B$ partition}
We use the same partitioning scheme as before. Hence, counterparts of
Lemma~\ref{lem:intersections}
and
Lemma~\ref{lem:preprocessing}
hold
\begin{lemma}\label{lem:intersections2}
    For a fixed \(c \in C\), the curve $F(x,y,c)=0$ intersects $O(\frac ng)$ cells.
    Moreover, those cells can be computed in $O(\frac ng)$ time.
\end{lemma}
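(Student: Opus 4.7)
The plan is to adapt the proof of Lemma~\ref{lem:intersections} to the implicit setting, replacing the explicit curve $c = f(x,y)$ by the zero set of $F(x,y,c)$ for fixed $c \in C$. The overall structure --- decompose into $xy$-monotone arcs, show each arc crosses a staircase of cells, then locate the cells algorithmically --- carries over, but a few ingredients must be reinterpreted since the curve is no longer the graph of a function.

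First, I would assume $F$ is square-free (using the same argument as in the proof of Lemma~\ref{lem:intersections}, applied to $F(x,y,z) \in \mathbb{R}[x,y,z]$ with the role of the variable of factorization chosen appropriately), so that for each fixed $c$ the curve $\Gamma_c = \{(x,y) \st F(x,y,c) = 0\}$ is a proper algebraic curve of degree at most $\deg(F)$ with no $2$-dimensional components. Decompose $\Gamma_c$ into $xy$-monotone arcs by cutting at the points where the tangent is vertical or horizontal, i.e., at solutions of the systems $\{F(x,y,c)=0,\ F'_x(x,y,c)=0\}$ and $\{F(x,y,c)=0,\ F'_y(x,y,c)=0\}$, as well as at singular points where both partials vanish. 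By B\'ezout's theorem applied to these pairs of constant-degree polynomials in $(x,y)$, each such system has $O(\deg(F)^2)$ solutions, so $\Gamma_c$ decomposes into $O(1)$ $xy$-monotone arcs. Exactly as in Figure~\ref{fig:intersected-cells}, the cells of the $A^* \times B^*$ grid crossed by a single $xy$-monotone arc form a staircase, hence at most $2n/g - 1$ cells are crossed per arc, for a total of $O(n/g)$ cells.

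For the algorithmic second half of the statement, I would mirror the Tarski-sentence approach of Lemma~\ref{lem:intersections} verbatim, only substituting $F(x,y,c)=0$ for $c = f(x,y)$. Concretely, using Collins CAD on constant-size instances, for each of the $O(n/g)$ boundary cells $A_i^* \times B_j^*$ one tests in $O(1)$ the sentence
\[
\exists\, (x,y)\ F(x,y,c) = 0 \,\land\, x \in A_i^* \,\land\, y \in B_j^*,
\]
and for each vertical slab $A_i^* \times \mathbb{R}$ and horizontal slab $\mathbb{R} \times B_j^*$ one tests the analogous sentences augmented with $F'_x(x,y,c)=0 \lor F'_y(x,y,c)=0$, localizing the $O(1)$ critical points to $O(n/g)$ slabs in total. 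Intersecting the horizontal and vertical slabs that contain critical points yields $O(1)$ candidate cells, each confirmed in $O(1)$ time. Starting from this seed set of $O(n/g)$ cells, a standard flood-fill over $8$-neighbors of visited cells --- each neighbour check being a constant-size Tarski sentence of the same shape --- recovers every cell crossed by $\Gamma_c$; because the staircase property bounds the total count by $O(n/g)$, the flood-fill does $O(1)$ work per visited cell, for $O(n/g)$ total time.

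The only real subtlety I anticipate is justifying that the decomposition into monotone arcs is well-defined and finite even when $\Gamma_c$ has singular points or isolated points; this is where square-freeness and the fact that the Tarski sentence for ``this cell contains a point of $\Gamma_c$'' correctly detects isolated singular points matter. Everything else is a direct translation of the proof of Lemma~\ref{lem:intersections} from the graph of $f$ to the zero set of $F(\cdot,\cdot,c)$, so I expect no new obstacle beyond careful bookkeeping of constants depending on $\deg(F)$.
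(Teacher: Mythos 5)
Your proof is correct and follows exactly the route the paper intends: the paper states Lemma~\ref{lem:intersections2} without a separate argument, remarking only that ``counterparts of Lemma~\ref{lem:intersections} and Lemma~\ref{lem:preprocessing} hold'' under the same partitioning scheme, and your proposal simply spells out that translation (B\'ezout-bounded $xy$-monotone decomposition, staircase bound, Tarski-sentence flood fill) with $F(x,y,c)=0$ in place of $c=f(x,y)$. The subtlety you flag about singular and isolated points is handled identically in the explicit case, so no new ideas are needed.
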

\begin{lemma}\label{lem:preprocessing2}
    If the sets $A,B,C$ can be preprocessed in $S_g(n)$ time so that,
    for any given cell $A^*_i \times B^*_j$ and any given $c \in C$, testing whether
    $c \in \{\,z \st \exists\, (a,b) \in A_i \times B_j
    \text{ such that } F(a,b,z) = 0\,\}$ can be done in
    $O(\log g)$ time, then, 3POL can be solved in
    $S_g(n)+O(\frac{n^2}{g}\log g)$ time.
\end{lemma}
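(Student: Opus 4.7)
The plan is to mirror the proof of Lemma~\ref{lem:preprocessing} almost verbatim, replacing the explicit search for $c$ in the set $f(A_i \times B_j)$ by the black-box test provided by the hypothesis. After spending $S_g(n)$ time on preprocessing, the algorithm iterates over each $c \in C$ and, for each such $c$, determines the set of cells of the $A^* \times B^*$ grid that are relevant to that particular value of $c$.

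The key geometric observation is the exact analogue of the one used in the explicit case: if there exists $(a,b) \in A_i \times B_j$ with $F(a,b,c) = 0$, then $c$ is by definition a real root of the univariate polynomial $F(a,b,z) \in \mathbb{R}[z]$, and simultaneously $(a,b)$ is a point of the planar curve $\{\,(x,y)\st F(x,y,c)=0\,\}$. Consequently, the only cells that can ever produce a solution for this particular $c$ are those intersected by this curve. By Lemma~\ref{lem:intersections2}, there are only $O(n/g)$ such cells, and they can be listed in $O(n/g)$ time.

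For each of those $O(n/g)$ intersected cells, the algorithm invokes the assumed oracle to decide, in $O(\log g)$ time, whether $c$ belongs to $\{\,z \st \exists (a,b)\in A_i\times B_j,\, F(a,b,z)=0\,\}$, and accepts if any test succeeds. Summed over the $n$ values of $c$, this gives a total cost of $S_g(n) + n \cdot O(n/g) \cdot O(\log g) = S_g(n) + O(n^2 \log g / g)$, which is the claimed bound.

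I do not expect any real obstacle in this lemma itself: once one has both Lemma~\ref{lem:intersections2} (which handles the curve-vs-grid combinatorics) and the oracle assumed in the hypothesis, the argument is a routine cell-by-cell enumeration. The actual difficulty lies entirely in the parts \emph{not} covered by this statement, namely in designing a suitable $S_g(n)$-time preprocessing that supports the $O(\log g)$-time root-membership queries; that is where the generalizations of Fredman's trick and of polynomial batch range searching to the implicit setting (items~1--3 of the list preceding Theorem~\ref{thm:act-3POL-bis}) will need to be carried out.
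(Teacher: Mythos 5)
Your argument is exactly the one the paper intends: the paper gives no separate proof of Lemma~\ref{lem:preprocessing2} but simply notes that the counterpart of Lemma~\ref{lem:preprocessing} holds, whose proof is the same cell-by-cell enumeration over the $O(n/g)$ cells hit by the curve $F(x,y,c)=0$, with the oracle replacing the explicit binary search. Your proposal is correct and matches that route, including the correct reliance on Lemma~\ref{lem:intersections2} for the curve-vs-grid combinatorics.
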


\subparagraph{Interleavings}
Let $\mathcal{P} = (P_1,P_2,\ldots,P_m)$ be a tuple of $m$ univariate polynomials.
Let $\{\,p_{i,1} < p_{i,2} < \cdots < p_{i,\Delta_i}\,\}$
be the set of real roots of $p_i$.
Let $I = ((i_1,j_1),(i_2,j_2),\ldots,(i_{\Delta},j_\Delta))$
be a tuple of pairs of positive integers.
We say that $\mathcal{P}$ realizes $I$ if and only if
$I$ is a permutation of
$\{\,(i,j) \st i \in [m], j \in [\Delta_i]\,\}$, and
for all $t \in [\Delta-1]$, $p_{i_t,j_t} \le p_{i_{t+1},j_{t+1}}$.
When used in this context, we call $I$ an \emph{interleaving}.
Note that (1) the first condition implies $\Delta=\sum_{i=1}^{m}\Delta_i$,
(2) a tuple of polynomials realizes at least one interleaving,
(3) a tuple of polynomials realizes more than one interleaving if some of the
polynomials have common real roots.
We denote by $\mathcal{I}(\mathcal{P})$ the set of interleavings realized by
$\mathcal{P}$.

\subparagraph{$A \times A$ $(b,b')$-partitions}
For a fixed pair $(b,b') \in B\times B$, we partition $\mathbb{R}^2$ into
$(b,b')$-cells so that each cell $\mathcal{C}$ is mapped to a unique
interleaving $I$, and if we take any two points $(a_1,a_1')$ and
$(a_2,a_2')$ inside $\mathcal{C}$, both
$(F(a_1,b,z),F(a_1',b',z))$ and $(F(a_2,b,z),F(a_2',b',z))$ realize
$I$.
Identifying the interleaving associated with each cell of each
$(b,b')$-partition, then locating each $(a,a')$ inside each $(b,b')$-partition
provides the answers to all questions of the form ``Is the $k$th real root of
$F(a,b,z)$ greater than the $\ell$th real root of $F(a',b',z)$, for some $(a,b),
(a',b')\in A_i\times B_j$?''. Those answers are all we need to binary
search for $c$ in the union of the real roots of the $F(a,b,z) \in
\mathbb{R}[z]$ in time $O(\log g)$. Note again that in the nonuniform setting,
we do not sort the roots explicitly, but we must be able to recover the order
from the previous computation steps.

\subparagraph{$\gamma_{b,b'}$ and $\delta_{b}$ curves}
We consider the set of interleavings $\mathcal{I}$ realized by
$(F(x,b,z),F(y,b',z))$ where $z$ is a variable, and $x$ and $y$ are
parameters.
We identify four types of event that can happen when the parameters $x$ and $y$ vary
continuously:
(1) two distinct real roots become common,
(2) a common real root splits into two distinct ones,
(3) a real root appears in one of the polynomials,
and (4) a real root disappears in one of the polynomials.
Note that many of those events can happen concurrently.
By definition of an interleaving, those events are the only ones that can cause
$\mathcal{I}$ to change.

To handle events of the types (1) and (2), we redefine the curves $\gamma_{b,b'}$ introduced
in \S\ref{sec:algo:explicit:nonuniform}%
\footnote{Note that Raz, Sharir, and de Zeeuw~\cite{RSZ15} use the same points
and curves.}
\begin{displaymath}
	\gamma_{b,b'} = \{\, (x,y) \st \exists z\ \text{such that}\
	F(x,b,z)=F(y,b',z)=0 \,\},
\end{displaymath}
that is, $(a,a') \in \gamma_{b,b'}$ if and only if $F(a,b,z)$ and $F(a',b',z)$
have at least one common root. Note that this curve is the curve defined
by the equation
$\res(F(x,b,z),F(y,b,z);z) = 0$,
that is, the set of pairs $(x,y)$ for which the resultant (in $z$) of
$F(x,b,z)$ and $F(y,b,z)$ vanishes. This resultant is a polynomial $\in
\mathbb{R}[x,y]$ of degree at most the square (up to a constant factor) of the
degree of $F$ and can be computed in constant time~\cite{CLO07}.
The following lemma follows by continuity of the manipulated curve
\begin{lemma}\label{lem:cont}
    Let $(a_1,a'_1)$ and $(a_2,a'_2)$ be two points in the plane such
    that there does not exist an interleaving realized by both
    $(F(a_1,b,z),F(a'_1,b',z))$ and
    $(F(a_2,b,z),F(a'_2,b',z))$. Moreover, suppose that those two
    points belong to a connected surface in the plane such that for any point
    $(a,a')$ in that surface, the number of real roots of $F(a,b,z)$ and
    $F(a',b',z)$ is fixed.
    Then the interior of any continuous path from $(a_1,a'_1)$ to $(a_2,a'_2)$ lying in this
    connected surface must intersect $\gamma_{b,b'}$.
\end{lemma}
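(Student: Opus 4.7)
The plan is to prove the contrapositive: if there exists a continuous path $\rho\colon [0,1] \to \mathbb{R}^2$ from $(a_1,a'_1)$ to $(a_2,a'_2)$ lying in the connected surface whose interior avoids $\gamma_{b,b'}$, then $(F(a_1,b,z),F(a'_1,b',z))$ and $(F(a_2,b,z),F(a'_2,b',z))$ realize at least one common interleaving. This is natural because the only mechanisms by which $\mathcal{I}$ can change along a continuous path are exactly the four events listed before the lemma statement, and the hypotheses of the lemma rule all of them out.

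More concretely, I would first observe that events of type (3) and (4), namely real roots appearing or disappearing, are forbidden on the surface by assumption, since the number of real roots of $F(a,b,z)$ and of $F(a',b',z)$ is fixed as $(a,a')$ ranges over the surface. Next, I would argue that an event of type (1) or (2) at some parameter $(a,a')$, that is, a collision between a root of $F(a,b,z)$ and a root of $F(a',b',z)$, forces $\res(F(a,b,z),F(a',b',z);z) = 0$ and is therefore equivalent to $(a,a')$ lying on $\gamma_{b,b'}$. Along $\rho$ we have ruled out such collisions in the interior by hypothesis.

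The core step is then a continuity argument. Write $\alpha_1(t) < \cdots < \alpha_p(t)$ and $\alpha'_1(t) < \cdots < \alpha'_q(t)$ for the real roots of $F(\rho(t)_1,b,z)$ and $F(\rho(t)_2,b',z)$; since their multiset cardinalities are constant on the surface, a standard continuity-of-roots argument (see e.g.\ any standard treatment, relying on the constancy of the number of real roots to avoid bifurcations) shows that each $\alpha_i$ and $\alpha'_j$ is a continuous function of $t \in [0,1]$. Define a merging permutation $I(t)$ recording the order in which the values $\alpha_i(t)$ and $\alpha'_j(t)$ appear on the real line, breaking ties by some fixed rule. Because no $\alpha_i(t)$ ever equals any $\alpha'_j(t)$ for $t \in (0,1)$ — that would force $\rho(t) \in \gamma_{b,b'}$ — the relative order between the two groups of roots cannot change on the open interval; within each group the order is already fixed by the strict inequalities. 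Hence $I(t)$ is constant on $(0,1)$, and by taking one-sided limits at the endpoints, the common value $I$ is realized both by the tuple at $t=0$ and by the tuple at $t=1$, giving the desired common interleaving.

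The main obstacle will be the tie-breaking at the endpoints $(a_1,a'_1)$ and $(a_2,a'_2)$, which are allowed to lie on $\gamma_{b,b'}$ and may therefore exhibit collisions between an $\alpha_i$ and an $\alpha'_j$: at such points several interleavings are realized simultaneously, and I need to check that the constant interior interleaving $I$ is among them. This is handled by noting that any interleaving realized at a limit point is obtained from a nearby generic interleaving by selecting a compatible tie-breaking of the coinciding values, so the limit of $I(t)$ from the interior is necessarily one of the interleavings in $\mathcal{I}(F(a_k,b,z),F(a'_k,b',z))$ for $k \in \{1,2\}$, yielding the contradiction with the hypothesis.
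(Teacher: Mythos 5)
Your proof is correct and is essentially the contrapositive of the paper's argument, relying on the same two ingredients: (i) the fixed number of real roots on the surface makes each individual root a continuous function of the path parameter, and (ii) a change of relative order between a root of $F(\cdot,b,z)$ and a root of $F(\cdot,b',z)$ forces a collision, which is exactly membership in $\gamma_{b,b'}$. The paper argues directly: it picks arbitrary interleavings $I_1, I_2$ at the two endpoints, invokes a short combinatorial claim that some ``swap'' is common to every choice of $I_1,I_2$ (justified only by ``by contradiction''), then applies the intermediate value theorem to that single pair of root indices. You instead track the full interleaving function $I(t)$ on the open interval, observe that it is constant because neither intra-group nor inter-group reorderings can occur, and pass to one-sided limits at the endpoints; since interleavings are defined with $\le$, the limit interleaving is realized at both endpoints, contradicting the hypothesis. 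This sidesteps the paper's ``common swap'' lemma, which is the one step the paper leaves terse, and it handles the endpoint tie-breaking (endpoints possibly on $\gamma_{b,b'}$) more explicitly than the paper's ``this point cannot be an endpoint'' remark. Both arguments are sound; yours is slightly more self-contained at the cost of a bit more bookkeeping about continuity of root functions.
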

\begin{proof}
    Let $I_1$ be an interleaving realized by
    $(F(a_1,b,z),F(a'_1,b',z))$ and let $I_2$ be an interleaving realized
    by $(F(a_2,b,z),F(a'_2,b',z))$.
    Because the number of real roots of the polynomials $F(x,b,z)$ and
    $F(y,b',z)$ is fixed for any point $(x,y)$ lying in the connected surface,
    $I_1$ and $I_2$ differ by a nonzero number of swaps.
    Moreover, by contradiction,
    there is a swap that is common to every choice of $I_1$
    and $I_2$.
    Since there is a common swap,
    for some $i,j \in [\deg(F)]$ and without loss of generality,
    the $i$th root of
    $F(a_1,b,z)$ is smaller than the $j$th root of $F(a'_1,b',z)$ whereas the
    $i$th root of $F(a_2,b,z)$ is larger than the $j$th root of
    $F(a'_2,b',z)$. By continuity, on any continuous path from
    $(a_1,a'_1)$ and $(a_2,a'_2)$ there is a point $(a,a')$ such that the
    $i$th root of $F(a,b,z)$ is equal to the $j$th root of $F(a',b',z)$.
    This point cannot be an endpoint of the path, hence,
    the interior of the path intersects $\gamma_{b,b'}$.
\end{proof}
The contrapositive states that, if there exists a continuous path from
$(a_1,a'_1)$ to $(a_2,a'_2)$ whose interior does not intersect the curve
$\gamma_{b,b'}$, then there exists an interleaving realized by both
$(F(a_1,b,z),F(a'_1,b',z))$ and $(F(a_2,b,z),F(a'_2,b',z))$.

To handle events of the types (3) and (4), we define the curve
\begin{displaymath}
	\delta_b = \{\,(x,z) \st F(x,b,z) = 0\,\},
\end{displaymath}
which lies in the $xz$-plane.
\begin{lemma}
    We can partition the $x$ axis of the $xz$-plane into a constant number of
    intervals so that for each interval the number of real roots of
    $F(a,b,z)$ is fixed for all $a$ in this interval.
\end{lemma}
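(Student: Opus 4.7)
The plan is to show that the desired partition is defined by the zeros of a single univariate polynomial of constant degree. Write $F(x,b,z) = \sum_{i=0}^{d} c_i(x)\, z^i$ with coefficients $c_i \in \mathbb{R}[x]$ of constant degree (here $d$ is the maximal power of $z$ actually appearing in $F(x,b,z)$). Assume as in the preceding sections that $F(x,b,z)$ is square-free as a polynomial in $z$; if not, replace it by its square-free part, which does not change the set of real roots of $F(a,b,z)$ for any $a$.

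Define the leading coefficient $L(x) = c_d(x)$ and the $z$-discriminant $D(x) = \operatorname{disc}_z F(x,b,z) = \operatorname{res}_z\!\bigl(F(x,b,z), \partial_z F(x,b,z)\bigr)/c_d(x)$. Both $L$ and $D$ are univariate polynomials in $x$ of constant degree (the discriminant is a constant-size polynomial expression in the coefficients $c_i$). Let $X \subset \mathbb{R}$ be the (finite) set of real roots of $L(x)\cdot D(x)$; there are $O(1)$ of them. Partition the $x$-axis into the open intervals delimited by the points of $X$, together with the points of $X$ themselves. This gives a partition into $O(1)$ intervals; it remains to verify that on each open interval the number of real roots of $F(a,b,z)$ is constant.

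Fix one such open interval $I$. On $I$, $L(a)$ does not vanish, so $F(a,b,z)$ has exactly $d$ complex roots (with multiplicity) for every $a \in I$. Since $D(a) \ne 0$ on $I$, these roots are simple. By continuity of the roots of a polynomial with respect to its coefficients, the roots vary continuously on $I$; a real root can become complex (or conversely) only by colliding with another root, which would force $D(a)=0$, and this cannot happen on $I$. A root cannot escape to infinity on $I$ either, because the leading coefficient $L(a)$ stays nonzero there. Hence the number of real roots is locally constant on $I$, and since $I$ is connected, it is globally constant on $I$. At the points of $X$ themselves we can declare the count to be whatever it is at that isolated $a$ (we are allowed singleton intervals).

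The only subtlety I expect is handling the edge cases where $F(x,b,z)$ may degenerate for special $b$: for instance $L(x)$ or $D(x)$ could vanish identically, or $F(x,b,z)$ could be identically zero in $z$ for some $x$. In the first case ($L \equiv 0$), the effective degree is smaller than $d$ and one simply re-runs the argument with the next non-vanishing coefficient; in the second ($D \equiv 0$), square-freeness in $z$ fails and one must first extract the square-free part as discussed earlier in the paper; and if $F(a,b,z) \equiv 0$ for some $a$, then $a$ is a common zero of all the $c_i(x)$, hence lies in a finite set (treated as isolated intervals). In each case the partition remains of constant size, completing the proof.
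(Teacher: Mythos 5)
Your proof is correct and takes an algebraic route where the paper's is geometric, but the two are tightly parallel. The paper works with the curve $\delta_b = \{(x,z) : F(x,b,z) = 0\}$ in the $xz$-plane and takes as breaking points the $x$-coordinates of its vertical tangency points, its singular points, and (per the figure caption) its degenerate vertical lines; the projections of these onto the $x$-axis give the partition. Your breaking points are the real roots of $L(x)\cdot D(x)$, where $L$ is the leading $z$-coefficient and $D$ the $z$-discriminant. These two descriptions nearly coincide: a vertical tangency or a finite singular point $(a_0,z_0)$ of $\delta_b$ satisfies $F(a_0,b,z_0)=\partial_z F(a_0,b,z_0)=0$, hence $D(a_0)=0$, and the degenerate-line $x$-values satisfy $L(a_0)=0$ (indeed, all coefficients vanish there). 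Your version buys a cleaner continuity argument --- roots can appear or disappear only by colliding (discriminant) or escaping to infinity (leading coefficient), both of which your set of breakpoints rules out on the open intervals --- whereas the paper's prose only lists tangency and singular points, and leaves the escape-to-infinity case (where $L(a_0)=0$ but $F(a_0,b,\cdot)\not\equiv 0$, e.g.\ $F=xz-1$ at $x=0$) to be inferred from the figure's ``degenerate lines.'' So on this one point your write-up is actually more explicit than the paper's. The one thing worth tightening in yours is the claim that taking the square-free part of $F(x,b,z)$ (as a polynomial in $z$ over $\mathbb{R}[x]$) guarantees $D\not\equiv 0$: this needs $F(x,b,z)$ to be square-free over the field $\mathbb{R}(x)$, and one should also dispose of the degenerate case where $F(x,b,z)$ has $z$-degree $0$ --- both are routine for a constant-degree polynomial, but deserve a word.
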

\begin{proof}
    We partition the $xz$-plane into a constant number of vertical
    slabs and lines.
    The $x$ coordinates of vertical tangency points and
    singular points of $\delta_b$ are the values $a$ for which
    a real root of $F(a,b,z)=0$ appears or disappears.
    The number of singular and vertical tangency points of $\delta_b$ is quadratic in $\deg(F)$.
    %
    For each of those points, draw a vertical line that contains the point.
    Those vertical lines partition the $xz$-plane into slabs and lines.
    The number of
    vertical lines we draw is constant because the degree of $F$ is
    constant.
    Figure~\ref{fig:deltab} depicts this drawing.
    The projection of the vertical lines on the $x$ axis produce the desired
    partition (with roughly half of the intervals being singletons).
    Let us name those lines $\delta_b$-lines for further reference.
\end{proof}
We can do a symmetric construction for $F(y,b',z)$ in the $zy$-plane and
get horizontal $\delta_{b'}$-lines.
\begin{lemma}
    We can partition the $y$ axis of the $zy$-plane into a constant number of
    intervals so that for each interval the number of real roots of
    $F(a',b',z)$ is fixed for all $a'$ in this interval.
\end{lemma}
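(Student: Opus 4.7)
The proof will proceed by mirroring the construction from the previous lemma, simply swapping the roles of $x$ and $y$ (and working in the $zy$-plane instead of the $xz$-plane). The plan is as follows.

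First, I will consider the curve $\delta_{b'} = \{\,(z,y) \st F(y,b',z) = 0\,\}$, which lives in the $zy$-plane (with $z$ as horizontal coordinate and $y$ as vertical coordinate). For any fixed value $y = a'$, the slice of $\delta_{b'}$ by the horizontal line $y = a'$ is precisely the set of real roots of $F(a',b',z) \in \mathbb{R}[z]$. Hence, as $a'$ varies continuously along the $y$-axis, the number of real roots of $F(a',b',z)$ changes exactly at those values of $a'$ where a horizontal line is either tangent to $\delta_{b'}$ or passes through a singular point of $\delta_{b'}$.

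Next, I will bound the number of such critical values. Singular points of $\delta_{b'}$ are common zeros of $F(y,b',z)$, $F'_y(y,b',z)$, and $F'_z(y,b',z)$; horizontal tangency points are common zeros of $F(y,b',z)$ and $F'_z(y,b',z)$. By B\'ezout's theorem, the number of such points is $O({\deg(F)}^2)$, which is constant since $F$ has constant degree. I will draw a horizontal line through each of these points, obtaining a constant number of \emph{horizontal $\delta_{b'}$-lines} that partition the $zy$-plane into horizontal slabs and isolated lines.

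Finally, I will project this partition onto the $y$-axis, which yields a partition of the $y$-axis into a constant number of open intervals (corresponding to slabs) and singletons (corresponding to the projected lines). By construction, within each such interval the number of real roots of $F(a',b',z)$ does not change. The routine part of this argument is the application of B\'ezout's bound and the symmetric construction; the only point requiring care is the orientation convention (horizontal versus vertical tangency), but this is immediate once one recalls that $y$ now plays the role that $x$ played previously.
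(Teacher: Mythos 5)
Your proposal is correct and matches the paper's intent exactly: the paper gives no separate proof for this lemma, merely noting that it follows by a ``symmetric construction'' to the preceding lemma (swapping $x$ for $y$ and using horizontal rather than vertical tangency/singular points of the analogously defined $\delta_{b'}$ curve), which is precisely what you spell out.
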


\begin{figure}
	\centering
	\begin{subfigure}[t]{0.5\textwidth}
	\includegraphics[width=\textwidth,trim={-0.11in 0 -0.11in 0},clip]{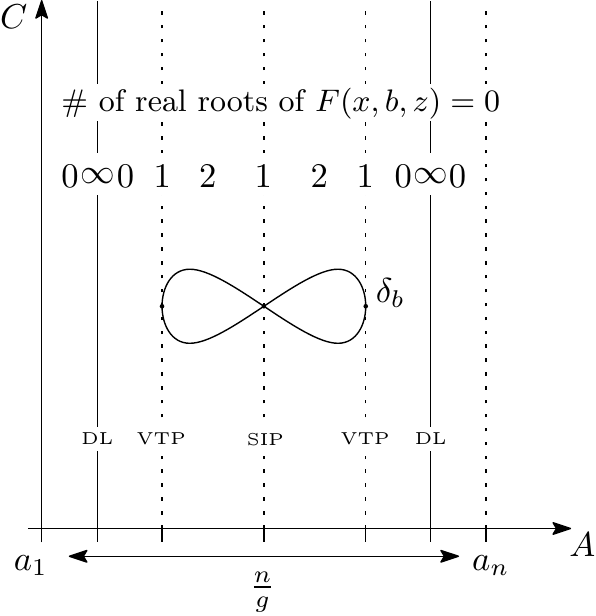}
	\caption{The vertical tangency points (VTP), self-intersection points
	(SIP) and degenerates lines (DL) of $\delta_b$ partition the $A$
	axis into intervals. For all $x$ of the same interval, the
	polynomial $F(x,b,z) \in \mathbb{R}[z]$ has a fixed number of real
	roots.}\label{fig:deltab}
	\end{subfigure}%
	\begin{subfigure}[t]{0.5\textwidth}
	\includegraphics[width=\textwidth]{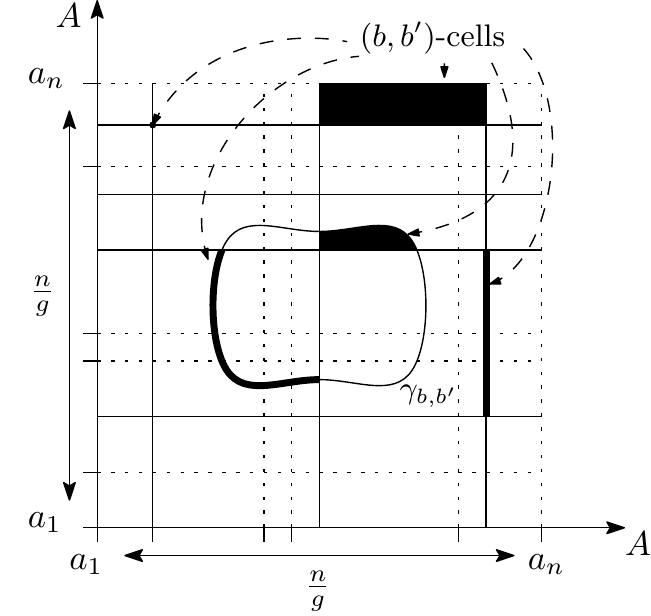}
	\caption{Cells obtained after partitioning the plane using
	the curve $\gamma_{b,b'}$ and the
	$\delta_b$ and $\delta_b'$-lines. The arrows highlight examples of
	$(b,b')$-cells}\label{fig:cells}
	\end{subfigure}
	\caption{Constructions using the $\gamma_{b,b'}$ and $\delta_b$ curves.}\label{fig:delta}
\end{figure}

\subparagraph{Cells of the $(b,b')$-partition}
For a given $(b,b') \in B^2$, let $\Gamma_{b,b'}$ be the set containing
the curve $\gamma_{b,b'}$, the vertical
$\delta_b$-lines and the horizontal $\delta_{b'}$-lines.
The arrangement $\mathcal{A}(\Gamma_{b,b'})$ of those
constant-degree polynomial curves partitions $\mathbb{R}^2$ into a
constant-size set $\mathcal{C}(\Gamma_{b,b'})$ of \emph{$(b,b')$-cells}.
Let $P = \cup_{\gamma \in \Gamma_{b,b'}} \gamma$ and
$\mathcal{C} = \emptyset$. Add all
vertices of $\mathcal{A}(\Gamma_{b,b'})$ to
$\mathcal{C}$. Add each connected component of $P \setminus \mathcal{C}$ to
$\mathcal{C}$. Add each connected component of $\mathbb{R}^2 \setminus P$ to
$\mathcal{C}$. Finally $\mathcal{C}(\Gamma_{b,b'}) := \mathcal{C}$.
Those cells
can be connected surfaces,
pieces of the curve $\gamma_{b,b'}$, pieces of the $\delta_b$- and
$\delta_{b'}$-lines (vertical and horizontal line segments), and
intersections and self-intersection of those curves (vertices).
By construction, all $(b,b')$-cells have the following \emph{invariant property}
\begin{definition}
    A $(b,b')$-cell has the invariant property if, for all points $(a,a')$ in
    that cell, (1) the number of real roots of $F(a,b,z)$ is fixed, (2)
    the number of real roots of $F(a',b',z)$ is fixed,
    and (3) the sorted order of the real roots of $F(a,b,z)$ and $F(a',b',z)$
    is fixed, that is,
    $\mathcal{I}((F(a,b,z),F(a',b',z)))$ is fixed.
\end{definition}
\begin{lemma}
    All $(b,b')$-cells have the invariant property.
\end{lemma}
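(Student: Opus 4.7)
The plan is to verify the three parts of the invariant property separately, using the three families of curves that define $\Gamma_{b,b'}$. Conditions (1) and (2) will follow almost directly from the $\delta_b$- and $\delta_{b'}$-line construction, while condition (3) will be the main obstacle, handled via Lemma~\ref{lem:cont}.

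For condition~(1), recall that the $\delta_b$-lines are precisely the vertical lines through $x$-coordinates at which a real root of $F(x,b,z)\in\mathbb{R}[z]$ can appear or disappear (vertical tangency points, singular points, and degenerate vertical lines of $\delta_b$). They partition the $x$-axis into a constant number of intervals on which the number of real roots of $F(a,b,z)$ is constant. Since every $(b,b')$-cell $\mathcal{C}$ is a connected piece of the arrangement $\mathcal{A}(\Gamma_{b,b'})$, its projection onto the $x$-axis either lies entirely inside one of these open intervals or coincides with a single such dividing $x$-value (when $\mathcal{C}$ is contained in a $\delta_b$-line or is a vertex). In either case the number of real roots of $F(a,b,z)$ is fixed as $(a,a')$ varies over $\mathcal{C}$. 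Condition~(2) is symmetric, using the horizontal $\delta_{b'}$-lines and the projection onto the $y$-axis.

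For condition~(3), take any two points $(a_1,a'_1),(a_2,a'_2)\in\mathcal{C}$. By (1) and (2), both polynomials $F(a_i,b,z)$ and $F(a'_i,b',z)$ have a fixed number of real roots for $(a_i,a'_i)\in\mathcal{C}$, so $\mathcal{C}$ is a connected surface on which the hypotheses of Lemma~\ref{lem:cont} are met. Suppose, for contradiction, that no interleaving is realized by both $(F(a_1,b,z),F(a'_1,b',z))$ and $(F(a_2,b,z),F(a'_2,b',z))$. Pick a continuous path within $\mathcal{C}$ connecting the two points (which exists by connectedness of the cell). By Lemma~\ref{lem:cont}, the interior of this path must intersect $\gamma_{b,b'}$. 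But $\gamma_{b,b'}\in\Gamma_{b,b'}$ is part of the arrangement defining $\mathcal{C}$, and by construction the interior of $\mathcal{C}$ (as a connected component of $P\setminus\mathcal{C}$ or $\mathbb{R}^2\setminus P$) is disjoint from $\gamma_{b,b'}$ unless $\mathcal{C}$ itself is contained in $\gamma_{b,b'}$. This yields the desired contradiction, so $\mathcal{I}((F(a,b,z),F(a',b',z)))$ is constant on $\mathcal{C}$.

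The main obstacle is dealing with the lower-dimensional cells, namely cells that are pieces of $\gamma_{b,b'}$, of the $\delta$-lines, or vertices of $\mathcal{A}(\Gamma_{b,b'})$. For such a cell, ``interior of the path'' has to be read within the cell itself: any continuous motion inside the cell stays on the same curve or vertex, so the common roots prescribed by $\gamma_{b,b'}$ (or the degeneracies prescribed by the $\delta$-lines) persist throughout, and the realized interleaving — possibly containing ties — is the same at every point. Thus the invariant property holds on every cell, regardless of its dimension.
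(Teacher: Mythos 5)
Your overall plan is the right one and matches the paper's structure: conditions~(1) and~(2) come from the $\delta_b$- and $\delta_{b'}$-lines, and condition~(3) is handled via Lemma~\ref{lem:cont}. For cells not contained in $\gamma_{b,b'}$ (vertices, $\delta$-line segments, open surfaces) your reasoning is essentially the paper's and is correct.

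There is, however, a genuine gap in the case you yourself flag as ``the main obstacle,'' namely cells that are pieces of $\gamma_{b,b'}$. For those cells Lemma~\ref{lem:cont} is useless as stated: its conclusion is that the interior of a path must intersect $\gamma_{b,b'}$, which is vacuously true when the entire path lies in $\gamma_{b,b'}$, so no contradiction is produced. Your proposal replaces the argument with the assertion that ``the common roots prescribed by $\gamma_{b,b'}$ persist throughout, and the realized interleaving is the same at every point,'' but this is precisely what needs to be proved. Nothing you say rules out, for example, the coincidence migrating from $r_1=s_1$ at one end of the piece to $r_2=s_2$ at the other, or two non-coincident roots swapping order while the single coincidence is maintained; both would change $\mathcal{I}$ without the path ever leaving $\gamma_{b,b'}$. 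The paper closes this gap by a different idea: it takes the two connected open surfaces adjacent to the $\gamma_{b,b'}$ piece (which have just been shown to satisfy the invariant property), observes that the union of the piece with these two neighbours is a connected surface on which the root counts are fixed, and then applies Lemma~\ref{lem:cont} to \emph{that} larger surface, choosing a path through a neighbouring surface whose interior avoids $\gamma_{b,b'}$, to conclude the interleaving cannot change along the piece. This neighbouring-cell device is the essential step missing from your sketch; without it (or some substitute, e.g.\ an explicit argument that any change of interleaving along the piece forces a singular or multiple point of $\gamma_{b,b'}$, which would be a vertex of the arrangement and hence not in the interior of the piece), the proof does not go through.
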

\begin{proof}
    First, observe that a $(b,b')$-cell that is not a piece of $\gamma_{b,b'}$
    has the invariant property. If that $(b,b')$-cell is a vertex, it contains
    a single point and has thus the invariant property.
    Note that, by construction of $\mathcal{C}(\Gamma_{b,b'})$, a $(b,b')$-cell
    cannot be intersected by a curve of $\Gamma_{b,b'}$ that does not contain
    it. If that $(b,b')$-cell is a piece of a vertical $\delta_b$-line,
    (1) holds because the
    $\delta_b$-line fixes $a$ to some constant, and (2) holds
    because this line segment cannot intersect any
    of the $\delta_{b'}$ lines by construction.
    Assuming this cell is not contained in $\gamma_{b,b'}$, (3)
    holds, and so this cell has the invariant property. A symmetric
    argument settles the case for pieces of horizontal $\delta_{b'}$-lines not
    contained in $\gamma_{b,b'}$.
    Similarly, if that
    $(b,b')$-cell is a connected surface, then it has the invariant property
    because it does not intersect any of the curves in $\Gamma_{b,b'}$.

    Finally, if a $(b,b')$-cell is a piece of $\gamma_{b,b'}$, we make two
    observations. First, this cell cannot intersect any curve of $\Gamma_{b,b'}$ that
    it does not lie in. Hence, similarly to the pieces of a vertical
    $\delta_b$-line, the pieces of a horizontal $\delta_{b'}$-line, and the
    connected surfaces, (1) and (2) hold.
    Second, this cell has two distinct neighbouring connected
    surfaces lying on each of it sides. We just showed that those two
    neighbouring cells have the
    invariant property. Hence the union of our piece of $\gamma_{b,b'}$ with
    those two neighbouring cells is a connected surface as in
    Lemma~\ref{lem:cont}.
    Hence, the ordering of any two real roots
    cannot swap along the piece of
    $\gamma_{b,b'}$. Suppose it would, then this would
    contradict Lemma~\ref{lem:cont}. Hence, (3) holds for pieces of
    $\gamma_{b,b'}$. Hence, those cells have the invariant property.
\end{proof}
This means that once we have computed in which $(b,b')$-cell each
$(a,a')$ point lies, we only need to probe a single point per $(b,b')$-cell
to solve the problem. This partitioning scheme is depicted in
figure~\ref{fig:cells}.

\subparagraph{Preprocessing}
Locate all points $(a,a') \in A_i \times A_i$ for all $A_i$ with respect to all
$\gamma_{b,b'}$ curves, all vertical lines derived from $\delta_b$ and all
horizontal lines derived from $\delta_b'$ for all $(b,b') \in B_j \times B_j$
for all $B_j$ in a single batch using the algorithm described in
Appendix~\ref{sec:algo:point-curves-location}
and the following generalization of Lemma~\ref{lem:dual}:
\begin{lemma}\label{lem:dual:implicit}
Define
\begin{align*}
    \hat{\gamma}_{a,a'} &= \{\, (x,y) \st \res(F(a,x,z),F(a',y,z);z) = 0 \,\},\\
    \hat{\delta}^x_{a} &= \{\, (x,y) \st \res(F(a,x,z),F'_x(a,x,z);z) = 0 \,\},\\
    \hat{\delta}^y_{a'} &= \{\, (x,y) \st \res(F(a',y,z),F'_y(a',y,z);z) = 0 \,\},\\
    \hat{\Gamma}_{a,a'} &= \hat{\gamma}_{a,a'} \cup \hat{\delta}^x_a \cup \hat{\beta}^y_{a'},
\end{align*}
then, locating $(a,a')$ with respect to $\Gamma_{b,b'}$ amounts to
locating $(b,b')$ with respect to $\hat{\Gamma}_{a,a'}$.
\end{lemma}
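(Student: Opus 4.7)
The plan is to verify the duality constituent by constituent within $\Gamma_{b,b'}$: for each of the three kinds of defining polynomials I would show that its sign at $(a,a')$ in the primal agrees with the sign of a corresponding polynomial in $\hat{\Gamma}_{a,a'}$ evaluated at $(b,b')$. Since the cell of an arrangement of constant-degree algebraic curves containing a point is determined by the sign vector of the defining polynomials at that point, matching sign vectors imply a bijection between the cell of $\mathcal{A}(\Gamma_{b,b'})$ containing $(a,a')$ and the cell of $\mathcal{A}(\hat{\Gamma}_{a,a'})$ containing $(b,b')$, reducing the primal location problem to the dual one.

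For the resultant curves $\gamma_{b,b'}$ and $\hat{\gamma}_{a,a'}$ the argument is immediate from the definitions. Both $\res(F(x,b,z),F(y,b',z);z)$ evaluated at $(x,y)=(a,a')$ and $\res(F(a,x,z),F(a',y,z);z)$ evaluated at $(x,y)=(b,b')$ are literally the same polynomial expression in $a,a',b,b'$ up to renaming of formal variables, so their signs at the respective points agree. In particular $(a,a')\in\gamma_{b,b'}$ iff $(b,b')\in\hat{\gamma}_{a,a'}$, and the two signs match when nonzero.

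For the $\delta$-lines the argument is symmetric but needs slightly more care. A vertical $\delta_b$-line has $x$-coordinate equal to a value for which the number of real roots of $F(x,b,\cdot)\in\mathbb{R}[z]$ changes, which happens exactly when $F$ shares a root in $z$ with the appropriate partial derivative of $F$, i.e., when a $z$-resultant of $F$ and that derivative vanishes. That resultant is a bivariate polynomial condition on $(a,b)$ whose two arguments play structurally symmetric roles: with $b$ fixed it cuts out the primal vertical $\delta_b$-line in the $(a,a')$-plane, while with $a$ fixed it cuts out the dual curve $\hat{\delta}^x_a$ in the $(b,b')$-plane. The same reasoning, applied with the roles of the unprimed and primed variables swapped, produces the correspondence between horizontal $\delta_{b'}$-lines and $\hat{\delta}^y_{a'}$.

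The main obstacle is to ensure that the correspondence extends from incidence with a single curve to the full combinatorial location in the arrangement, because $\mathcal{A}(\Gamma_{b,b'})$ and $\mathcal{A}(\hat{\Gamma}_{a,a'})$ each contain cells of every dimension: vertices, arcs on the curves, and open full-dimensional regions. The pointwise sign-vector argument above handles this coherently, since the tuple of signs of all defining polynomials of $\Gamma_{b,b'}$ at $(a,a')$ equals the tuple of signs of the corresponding defining polynomials of $\hat{\Gamma}_{a,a'}$ at $(b,b')$, and in an arrangement of bounded-degree algebraic curves this sign vector uniquely identifies the containing cell.
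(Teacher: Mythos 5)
Your core argument is right and is what the paper has in mind (it gives no proof, presenting this as a straightforward generalization of Lemma~\ref{lem:dual}). The crucial observation is that each curve of $\Gamma_{b,b'}$ is the zero set of a single bivariate polynomial, and each such polynomial, evaluated at $(a,a')$, is syntactically the same resultant expression in $a,a',b,b'$ as the defining polynomial of the corresponding curve of $\hat{\Gamma}_{a,a'}$ evaluated at $(b,b')$. So the sign of $(a,a')$ relative to each primal curve equals the sign of $(b,b')$ relative to its dual partner. That is precisely what ``locating'' means in this context: the lemma is invoked, just like Lemma~\ref{lem:dual}, to justify swapping the roles of points and curves in the alternating primal--dual recursion of the Polynomial Batch Range Searching algorithm, where the output of a location query is a sign $\sigma\in\{-,0,+\}$ per point--curve pair.

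The place where you overreach is the final paragraph. The claim that ``in an arrangement of bounded-degree algebraic curves this sign vector uniquely identifies the containing cell'' is false in general: cells are connected components, and a single curve can split a region of fixed sign into several components (e.g., the hyperbola $xy=1$ has a positive region with two components and a zero set with two branches). Moreover, the $\delta_b$-lines are a finite \emph{union} of vertical lines carved out by one resultant polynomial; the sign of that resultant at $(a,a')$ does not tell you which vertical slab $a$ lies in, so the sign-to-cell map fails even at the level of a single curve of $\Gamma_{b,b'}$. Fortunately, none of that is needed: the lemma only asserts equality of per-curve sign queries under duality, not a bijection of arrangement cells, and the cell assignment in Appendix~\ref{sec:algo:implicit:nonuniform} is recovered by other means (probing a sample point per $(b,b')$-cell). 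Dropping the final paragraph, and replacing ``the primal vertical $\delta_b$-line'' (singular) with ``the union of primal $\delta_b$-lines'', leaves a correct proof that matches the paper's intent.
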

This takes time
$O({(ng)}^{4/3+\varepsilon})$. For each $(b,b')$, we now know in which
$(b,b')$-cell
each $(a,a')$ point lies and hence the sorted permutation associated with each
cell of the $A \times B$ partition.
We now have the information needed for the binary search in step (2).
The complexity is asymptotically the same as in the explicit case.
This proves Theorem~\ref{thm:act-3POL}.

\section{Uniform algorithm for 3POL}
\label{sec:algo:implicit:uniform}

In this section, we combine the uniform algorithm for explicit 3POL given in
\S\ref{sec:algo:explicit:uniform} with the nonuniform algorithm for
3POL given in Appendix~\ref{sec:algo:implicit:nonuniform} to obtain a uniform
subquadratic algorithm for 3POL\@.
We prove the following
\begin{theorem}\label{thm:implicit:uniform}
	3POL can be solved in
	$O(n^2 {(\log \log n)}^\frac{3}{2} / {(\log n)}^\frac{1}{2})$ time.
\end{theorem}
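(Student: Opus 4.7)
The plan is to combine the uniform framework of \S\ref{sec:algo:explicit:uniform} with the cell-invariant analysis of Appendix~\ref{sec:algo:implicit:nonuniform}. I would reuse the $A \times B$ partition with block size $g$, together with Lemma~\ref{lem:intersections2} and Lemma~\ref{lem:preprocessing2}, which reduce 3POL to the task of preprocessing the cells $A_i^* \times B_j^*$ so that, for each cell, one can binary search in $O(\log g)$ time among the real roots of the $g^2$ polynomials $F(a,b,z)$, $(a,b) \in A_i \times B_j$. Exactly as in the explicit case, the idea is to tabulate all possible combinatorial \emph{certificates} describing this sorted order, pay the cost of discovering which certificate each cell matches in a preprocessing phase, and answer queries by a tabulated binary search.

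A certificate here must encode strictly more data than in the explicit setting. I would record (i) the number $\Delta_{k,\ell} \in \{0,1,\ldots,\deg(F)\}$ of real roots of each $F(a_{i,k}, b_{j,\ell}, z)$ for $(k,\ell) \in [g]^2$, and (ii) an interleaving, in the sense of Appendix~\ref{sec:algo:implicit:nonuniform}, of the $\sum_{k,\ell} \Delta_{k,\ell}$ roots, expressed as a sequence of ``$<$''/``$=$'' comparisons between consecutive roots. The total number of certificates is bounded by $(g^2 \deg(F))! \cdot 2^{O(g^2)} \cdot (\deg(F)+1)^{g^2} = 2^{O(g^2 \log g)}$, which is $n^{o(1)}$ for the target choice $g = \Theta(\sqrt{\log n / \log \log n})$. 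As in \S\ref{sec:algo:explicit:uniform}, I would prune equivalent certificates generated by ``$=$'' symbols using the same lexicographic rule, so that each cell is tagged exactly once.

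For each certificate I would enumerate the cells $A_i \times B_j$ that satisfy it via the PDR machinery of \S\ref{sec:algo:dominance}. By the analysis of Appendix~\ref{sec:algo:implicit:nonuniform}, every atomic predicate composing a certificate --- the root-count conditions read off the $\delta_b$ and $\delta_{b'}$ structure, and each root comparison read off $\gamma_{b,b'}$ --- is equivalent to a sign condition on a bivariate polynomial (a resultant, computed in constant time, cf.\ Lemma~\ref{lem:dual:implicit}) whose degree is bounded by a constant depending only on $\deg(F)$, evaluated at the point $(a_{i,k},a_{i,k'})$ with coefficients determined by $(b_{j,\ell},b_{j,\ell'})$. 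This is exactly an instance of PDR with $N = n/g$ tuples of length $k = O(g^2)$. Applying Lemma~\ref{lem:dominance} and summing over all certificates yields preprocessing time $2^{O(g^2 \log g)} (n/g)^{2 - \Omega(1) + \varepsilon}$, where the hidden constant in $\Omega(1)$ depends only on $\deg(F)$. For $g = c_{\deg(F)} \sqrt{\log n / \log \log n}$ this preprocessing is $o(n^2/(\log n)^{1/2})$, and the overall cost is dominated by the query phase $O(n^2 \log g / g) = O(n^2 (\log\log n)^{3/2} / (\log n)^{1/2})$, matching the bound for the explicit case.

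The main obstacle is the certificate bookkeeping: one must verify that every relevant event --- a root appearing or disappearing, two roots coalescing or splitting, a cell lying on $\gamma_{b,b'}$ or on a $\delta$-line --- is captured by one of the polynomial sign conditions fed to PDR, and that equivalent certificates are pruned so that the PDR outputs partition the cells. Once this is in place, the choice of $g$ and the asymptotic accounting are essentially identical to the proof of Theorem~\ref{thm:explicit:uniform}, which proves Theorem~\ref{thm:implicit:uniform}.
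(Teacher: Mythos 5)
Your proposal follows the paper's approach almost exactly: reuse the $A\times B$ grid with Lemma~\ref{lem:intersections2} and Lemma~\ref{lem:preprocessing2}, define certificates encoding both the root counts $\rho(k,\ell)$ and an interleaving $\pi$ of the union of real roots (the paper's $\Psi_\rho$ and $\Phi_{\rho,\pi}$), translate the atomic predicates into polynomial sign conditions via the $\delta_b$-/$\delta_{b'}$-line and $(b,b')$-cell structure, feed these to the PDR machinery of \S\ref{sec:algo:dominance}, and balance against the $O(n^2 \log g / g)$ query cost with $g = c_{\deg F}\sqrt{\log n/\log\log n}$. Two points are glossed over compared to the paper's argument. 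First, the claim that ``every atomic predicate composing a certificate \ldots{} is equivalent to a sign condition on a bivariate polynomial'' is not quite right: a root-count condition for a single $(k,\ell)$ is a \emph{disjunction} over a constant number of intervals and breaking points of $\delta_{b_{j,\ell}}$, and a root-comparison condition is a disjunction over a constant number of $(b,b')$-cells, each described by a constant-size \emph{conjunction} of sign conditions. PDR accepts only conjunctions, so the certificate must be expanded into DNF; this multiplies the number of PDR calls (equivalently, the number of $q_j$ vectors) by $2^{O(g^2)}$, which the paper tracks explicitly and which is ultimately absorbed in the $2^{O(g^2\log g)}$ certificate count, but your phrasing hides the step. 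Second, the paper separately handles the degenerate certificate $\Omega$ --- the case where some $F(a_{i,k},b_{j,\ell},z)$ is identically zero, so that every $c$ is a solution --- in $O(n\log n)$ time before the main loop; your proposal never addresses this edge case, which is necessary because the interleaving/root-count framework does not cleanly describe a polynomial with infinitely many roots. Neither issue changes the final bound, but both need to be addressed to make the argument complete.
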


\subparagraph{Idea}
In the uniform algorithm for explicit 3POL of
\S\ref{sec:algo:explicit:uniform}, we partition the set $A \times B$
into very small sets $A_i \times B_j$, sort the sets $f(A_i \times B_j)$ using
the dominance reporting algorithm of \S\ref{sec:algo:dominance} then
binary search on those sorted sets in order to find a matching $c$.
Here we reuse a similar scheme with the only difference that the sets to sort
are the unions of the real roots of the univariate polynomials
$F(a,b,z)\in\mathbb{R}[z]$ over all $(a,b) \in A_i \times B_j$.
The main difficulty resides in implementing the
equivalent of the certificates of \S\ref{sec:algo:explicit:uniform} to reuse
the dominance reporting algorithm of \S\ref{sec:algo:dominance}. We show how to
implement those certificates using the $\gamma_{b,b'}$ and $\delta_b$ curves
defined in Appendix~\ref{sec:algo:implicit:nonuniform}.

\subparagraph{$A \times B$ partition}
We use the same partitioning scheme as all previous
algorithms, hence Lemma~\ref{lem:intersections2} and
Lemma~\ref{lem:preprocessing2} hold. We apply the same certificate verification
scheme as in \S\ref{sec:algo:explicit:uniform}, hence, the dominance
reporting algorithm of \S\ref{sec:algo:dominance} and the analysis
in \S\ref{sec:algo:explicit:uniform} still apply.

\subparagraph{Preprocessing}
The preprocessing algorithm is essentially the same as
Algorithm~\ref{algo:sfaixbj} with more complex certificates. We explain how to
construct those new certificates. The first part of the explanation consists in
generalizing the definition of a certificate. The rest of the
explanation focuses on the implementation of the verification of those
certificates via Polynomial Dominance Reporting.

\subparagraph{The certificates}
For a fixed pair $(a,b)$, $F(a,b,z) \in \reals[z]$ is a polynomial in $z$ of
degree at most $\deg(F)$. Hence, $F(a,b,z)$ has at most $\deg(F)$ real roots.
For each cell $A_i^* \times B_j^*$, let
\begin{displaymath}
	A_i\times B_j=\{\,
		(a_{i,1},b_{j,1}),
		(a_{i,1},b_{j,2}),
		\ldots,
		(a_{i,2},b_{j,1}),
		(a_{i,2},b_{j,2}),
		\ldots,
		(a_{i,g},b_{j,g})
	\,\}.
\end{displaymath}
Let $\rho
\colon {[g]}^2 \to \{\,0,1,\ldots,\deg(F)\,\}$ be a function that maps a pair $(k,l)$ to the
number of real roots of $F(a_{i,k},b_{j,l},z)$. Let $\Sigma_\rho = \sum_{(i,j)
\in {[g]}^2} \rho(i,j) \le \deg(F) g^2$.
Given a function $\rho$, let
$\pi\colon\,[\Sigma_\rho]\to {[g]}^2 \times
\{\,0,1,\ldots,\deg(F)\,\}$ be a permutation of the union of the real roots of all
$g^2$ polynomials $
F(a_{i,1},b_{j,1},z),
F(a_{i,1},b_{j,2},z),
\ldots,
F(a_{i,2},b_{j,1},z),
F(a_{i,2},b_{j,2},z),
\ldots,
F(a_{i,g},b_{j,g},z)$, where the number of real roots of each polynomial is
prescribed by $\rho$.
Decompose $\pi = (\pi_r,\pi_c,\pi_s)$ into
row,
column and real root number functions $\pi_r,\pi_c\colon\,[\Sigma_\rho]\to[g]$,
and $\pi_s\colon\,[\Sigma_\rho]\to\{\,0,1,\ldots,\deg(F)\,\}$.
Let $\diamond(a,b,s)$ denote the $s$th real root of $F(a,b,z)$.
To fix the permutation of the union of the real roots of all
$g^2$ polynomials, we define the following
interleaving certificate with $\Sigma_\rho - 1$ inequalities, for each possible
function $\rho$ and permutation $\pi$
\begin{displaymath}
	\Phi_{\rho,\pi} :=
	\diamond(a_{i,\pi_r(1)},b_{j,\pi_c(1)},\pi_s(1))
	\le
	\cdots
	\le
	\diamond(a_{i,\pi_r(\Sigma_\rho)},b_{j,\pi_c(\Sigma_\rho)},\pi_s(\Sigma_\rho)).
\end{displaymath}
To fix the number of real roots each of the $g^2$ polynomials can have, we
define the following cardinality certificate for each function $\rho$
\begin{displaymath}
	\Psi_{\rho} :=
	\bigwedge_{(k,l)\in{[g]}^2}\,\,
	F(a_{i,k},b_{j,l},z)\,\,\text{has $\rho(k,l)$ real roots}.
\end{displaymath}
For each possible function $\rho$ and permutation $\pi$ we define the
certificate $\Upsilon_{\rho,\pi} := \Psi_\rho \land \Phi_{\rho,\pi}$ that
fixes both the number of real roots each polynomial has and the permutation of
those real roots.
The total number of certificates $\Upsilon_{\rho,\pi}$ is
$\sum_{\rho\colon {[g]}^2 \to \{\,0,1,\ldots,\deg(F)\,\}} {\Sigma_\rho!}$
which is of the order of ${(g^2)}^{O(g^2)}$.

Finally, we need to handle the edge cases where a polynomial $F(a,b,z)$ is the
zero polynomial. In that case, $F(a,b,z)$ cancels for all $z \in
\mathbb{R}$. Hence, all planar curves $F(x,y,c)=0$ go through $(a,b)$ and we
can immediately accept the 3POL instance. To capture
those edge cases, we will check the following certificate before running the main
algorithm
\begin{displaymath}
	\Omega := \bigvee_{(k,l)\in{{[g]}^2}} F(a_{i,k},b_{j,l},z)\,\,\text{is the zero polynomial}.
\end{displaymath}

We can check if $\Omega$ holds for any cell $A_i \times B_j$ in $O(n \log n)$ time.
For each $b \in B$ binary search for a $a \in A$ that lies on a vertical line
component of $\delta_b$.

If this certificate is verified we accept and halt. Otherwise we can safely run
the main algorithm.

\subparagraph{$A \times A$ $(b,b')$-partitions}
For each $B_j$ and for each $(b,b') \in B_j^2$ compute a partition of the
$A \times A$ grid according to the $(b,b')$-cells defined by $\Gamma_{b,b'}$
--- see Appendix~\ref{sec:algo:implicit:nonuniform}. For each $(b,b')$-cell of
that partition, pick a sample point $(a,a')$, compute the interleaving
$\mathcal{I}((F(a,b,z),F(a',b',z)))$. Store that information for future lookup.
All this takes $O(ng)$ time.

\subparagraph{PDR instance for $\Psi_\rho$}
For a fixed pair $(a,b)$, suppose $F(a,b,z)$ has $r$ real roots. Then $a$ must
lie in one of the open intervals or be one of the breaking points defined by
the VTP, SIP and DL of $\delta_b$ that fixes the number of real roots of
$F(a,b,z)$ to $r$.
Hence $\Psi_\rho$ can be rewritten as follows
\begin{displaymath}
	\Psi_{\rho} =
	\bigwedge_{(k,l)\in{[g]}^2}\,\,
	\left(\bigvee_{[u,v]\in \mathcal{I}_{\rho(k,l)}} u < a_{i,k} < v\right)
	\bigvee
	\left(\bigvee_{w\in \mathcal{B}_{\rho(k,l)}} a_{i,k} = w\right)
\end{displaymath}
where $\mathcal{I}_{\rho(k,l)}$ denotes the set of intervals fixing the number
of real roots of $F(a_{i,k},b_{j,l},z)$ to $\rho(k,l)$, and
$\mathcal{B}_{\rho(k,l)}$ denotes the set of breaking points fixing the number
of real roots of $F(a_{i,k},b_{j,l},z)$ to $\rho(k,l)$.

The PDR algorithm can only check conjunctions of polynomial inequalities.
However, we can transform $\Psi_{\rho}$ into disjunctive normal form~(DNF) by
splitting the certificate into distinct branches, each consisting of a
conjunction of polynomial inequalities.
Since the number of intervals and breaking points
considered above is constant for each pair $(k,l)$, the number of branches to
test is $2^{O(g^2)}$.

For each $A_i$ we have thus a single vector of reals
\begin{displaymath}
	p_i = (
		a_{i,1} , a_{i,1},
		a_{i,2}, a_{i,2},
		\ldots,
		a_{i,g} , a_{i,g}
	),
\end{displaymath}
and for each $B_j$ we have $2^{O(g^2)}$ vectors of linear inequalities
\begin{displaymath}
	q_j = (
		x \sigma_{u_{1,1}} u_{1,1}, x \sigma_{v_{1,1}} v_{1,1},
		x \sigma_{u_{1,2}} u_{1,2}, x \sigma_{v_{1,2}} v_{1,2},
		\ldots,
		x \sigma_{u_{g,g}} u_{g,g}, x \sigma_{v_{g,g}} v_{g,g},
	),
\end{displaymath}
where each $(\sigma_{u_{k,l}}, u_{k,l}, \sigma_{v_{k,l}}, v_{k,l})$ is an element of
\begin{displaymath}
	\{\, ( > , u , < , v ) \st (u,v) \in \mathcal{I}_{\rho(k,l)}\,\}
	\cup
	\{\, ( = , w , = , w ) \st w \in \mathcal{B}_{\rho(k,l)}\,\}.
\end{displaymath}

For a fixed function $\rho$,
the sets of vectors $p_i$ and $q_j$ is a valid PDR instance
of size $N = ng^{-1} 2^{O(g)}$ and with parameter $k = 2g^2$
that will
output all cells $A^*_i \times B^*_j$ such that $F(a_{i,k},b_{j,l},z) \in
\mathbb{R}[z]$ has exactly $\rho(k,l)$ real roots for all $(a_{i,k},a_{j,l}) \in A_i
\times B_j$.

\subparagraph{PDR instance for $\Phi_{\rho,\pi}$}

For fixed pairs $(a,b)$ and $(a',b')$, suppose the $s$-th real root of
$F(a,b,z)$ is smaller or equal to the $q$-th real root of $F(a,b,z)$. Then,
$(a,a')$ must lie in a $(b,b')$-cell that orders the $s$-th root of $F(x,b,z)$
before the $q$-th root of $F(y,b',z)$ for all points $(x,y)$ in that cell.

Hence $\Phi_{\rho,\pi}$ can be rewritten as follows
\begin{displaymath}
	\Phi_{\rho,\pi} =
	\bigwedge_{t\in[\Sigma_\rho-1]}\,\,
	\bigvee_{C \in \mathcal{C}_{\rho,\pi,t}}
	( a_{i,\pi_r(t)}, a_{i,\pi_r(t+1)} ) \in C
\end{displaymath}
where $\mathcal{C}_{\rho,\pi,t}$ denotes the set of $(b,b')$-cells
fixing the number of real roots of $F(a_{i,\pi_r(t)},b_{j,\pi_c(t)},z)$
to $\rho(\pi_r(t),\pi_c(t))$,
fixing the number of real roots of $F(a_{i,\pi_r(t+1)},b_{j,\pi_c(t+1)},z)$
to $\rho(\pi_r(t+1),\pi_c(t+1))$,
and ordering the $\pi_s(t)$-th root of $F(a_{i,\pi_r(t)},b_{j,\pi_c(t)},z)$
before the $\pi_s(t+1)$-th root of $F(a_{i,\pi_r(t+1)},b_{j,\pi_c(t+1)},z)$.

The PDR algorithm can only check conjunctions of polynomial inequalities.
However, we can transform $\Phi_{\rho,\pi}$ in DNF as we did for $\Psi_{\rho}$.
Again the number of cells considered above is constant for each $t$, the
description of each cell is constant, hence, the number of branches to test is
$2^{O(g^2)}$.

For each $A_i$ we have thus a single vector of $2$-dimensional points
\begin{displaymath}
	p_i = (
		\underbrace{%
			(a_{i,\pi_r(1)},a_{i,\pi_r(2)}),
			\ldots,
			(a_{i,\pi_r(1)},a_{i,\pi_r(2)})
		}_{w},
		\ldots,
		\underbrace{%
			(a_{i,\pi_r(\Sigma_\rho-1)},a_{i,\pi_r(\Sigma_\rho)}),
			\ldots,
			(a_{i,\pi_r(\Sigma_\rho-1)},a_{i,\pi_r(\Sigma_\rho)})
		}_{w}
	),
\end{displaymath}
where $w$ is the size of the largest description of a $(b,b')$-cell $C$,
and for each $B_j$ we have $2^{O(g^2)}$ vectors of polynomial inequalities,
\begin{displaymath}
	q_j = (
		h_{1,1}(x,y) \sigma_{1,1} 0,
		\ldots,
		h_{1,w}(x,y) \sigma_{1,w} 0,
		\ldots,
		h_{\Sigma_\rho-1,1}(x,y) \sigma_{\Sigma_\rho-1,1} 0,
		\ldots,
		h_{\Sigma_\rho-1,w}(x,y) \sigma_{\Sigma_\rho-1,w} 0
	),
\end{displaymath}
where each $(h_{t,1}(x,y) \sigma_{t,1} 0, \ldots,
h_{t,w}(x,y) \sigma_{t,w} 0)$ is an element of
$\{\, \text{desc}(C) \st C \in \mathcal{C}_{\rho,\pi,t}\,\}$,
where $\text{desc}(C)$ is the description of the cell $C$ given as a
certificate of belonging to $C$ in the form of a Tarski sentence.
The description of each $(b,b')$-cell is padded with
its last component so that it has length $w$.

For a fixed function $\rho$,
for a fixed function $\pi$,
the sets of vectors $p_i$ and $q_j$ is a valid PDR instance
of size $N = ng^{-1} 2^{O(g)}$ and with parameter $k = \Theta(g^2)$
that will
output all cells $A^*_i \times B^*_j$ such that
the number of real roots of $F(a_{i,\pi_r(t)},b_{j,\pi_c(t)},z)$
is $\rho(\pi_r(t),\pi_c(t))$,
the number of real roots of $F(a_{i,\pi_r(t+1)},b_{j,\pi_c(t+1)},z)$
is $\rho(\pi_r(t+1),\pi_c(t+1))$,
and the $\pi_s(t)$-th root of $F(a_{i,\pi_r(t)},b_{j,\pi_c(t)},z)$
comes before the $\pi_s(t+1)$-th root of
$F(a_{i,\pi_r(t+1)},b_{j,\pi_c(t+1)},z)$,
for all $t \in [\Sigma_\rho-1]$.

\subparagraph{PDR instance for $\Upsilon_{\rho,\pi}$}

We can combine the certificates given above for $\Psi_\rho$ and
$\Phi_{\rho,\pi}$ to obtain the ones for $\Upsilon_{\rho,\pi}$: concatenate the
$p_i$ and $q_j$ together (add a dummy $y$ variable for the $p_i$ and $q_j$ of
$\Psi_\rho$).
For a fixed function $\rho$,
for a fixed function $\pi$,
the sets of vectors $p_i$ and $q_j$ is a valid PDR instance
of size $N = ng^{-1} 2^{O(g)}$ and with parameter $k = \Theta(g^2)$
that will
output all cells $A^*_i \times B^*_j$ such that
$F(a_{i,k},b_{j,l},z) \in
\mathbb{R}[z]$ has exactly $\rho(k,l)$ real roots for all $(a_{i,k},a_{j,l}) \in A_i
\times B_j$,
and the $\pi_s(t)$-th root of $F(a_{i,\pi_r(t)},b_{j,\pi_c(t)},z)$
comes before the $\pi_s(t+1)$-th root of $F(a_{i,\pi_r(t+1)},b_{j,\pi_c(t+1)},z)$
for all $t \in [\Sigma_\rho-1]$.
The rest of the analysis in \S\ref{sec:algo:explicit:uniform} applies.
This proves Theorem~\ref{thm:implicit:uniform}.

\section{Applications}\label{sec:applications}

To illustrate the expressive power of 3POL, we give a few applications.

\subsection{General position testing for points on curves}


The following is a corollary of Theorem~\ref{thm:rsz15:col} in Raz, Sharir and de Zeeuw~\cite{RSZ15}
\begin{corollary}[Raz, Sharir and de Zeeuw~\cite{RSZ15}]
	Any $n$ points on an irreducible algebraic curve of degree $d$ in
	$\mathbb{C}^2$ determine
	$\tilde{O_d}(n^{\frac{11}{6}})$ proper collinear triples, unless the curve is a line or a cubic.
\end{corollary}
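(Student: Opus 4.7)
The plan is to derive the corollary by a direct specialization of Theorem~\ref{thm:rsz15:col}. Given an irreducible algebraic curve $C \subset \mathbb{C}^2$ of degree $d$ and a set $S \subset C$ of $n$ points, I would set $C_1 = C_2 = C_3 = C$ and $S_1 = S_2 = S_3 = S$. Since $C$ is irreducible, the condition ``$C_1 \cup C_2 \cup C_3$ is a line or a cubic'' of the theorem reduces exactly to ``$C$ is a line or a cubic,'' matching the hypothesis of the corollary. Applying Theorem~\ref{thm:rsz15:col} under this choice then yields an upper bound on the number of proper collinear triples in $S \times S \times S$, namely on ordered triples $(p_1, p_2, p_3)$ of pairwise distinct collinear points drawn from $S$.

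The next step is the arithmetic simplification. The first term of the bound becomes $n^{1/2} \cdot n^{2/3} \cdot n^{2/3} = n^{1/2 + 4/3} = n^{11/6}$, and the second term becomes $n^{1/2}(n^{1/2} + n + n) = O(n^{3/2})$, which is absorbed by $n^{11/6}$ since $3/2 < 11/6$. Hence Theorem~\ref{thm:rsz15:col} applied to the single curve $C$ gives $O_d(n^{11/6})$ ordered proper collinear triples. Finally, each unordered collinear triple of distinct points of $S$ corresponds to at most $3! = 6$ such ordered triples, so the count of unordered proper collinear triples determined by the $n$ points is also $O_d(n^{11/6})$, which in particular lies in $\tilde{O}_d(n^{11/6})$.

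There is no real obstacle beyond checking that the specialization is legitimate. The only subtle point to verify is that the notion of ``proper collinear triple'' used in Theorem~\ref{thm:rsz15:col} (ordered triples of distinct collinear points coming from three distinguished sets) and the notion used in the corollary (unordered triples of distinct collinear points in $S$) agree up to the constant factor of $6$ already absorbed in the $O_d$. Once this is observed, the corollary follows immediately, and no further machinery (such as incidence bounds or the Elekes--Szab\'o machinery that underlies Theorem~\ref{thm:rsz15:col} itself) needs to be invoked.
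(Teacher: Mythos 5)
Your proof is correct and coincides with the natural (and only sensible) derivation: the paper itself offers no explicit argument, simply noting that the statement is a corollary of Theorem~\ref{thm:rsz15:col}, and your specialization $C_1=C_2=C_3=C$, $S_1=S_2=S_3=S$ together with the exponent arithmetic $n^{1/2}\cdot n^{2/3}\cdot n^{2/3}=n^{11/6}$ and absorption of the $O(n^{3/2})$ secondary term is exactly what that reduction amounts to. The observations that irreducibility collapses the ``union is a line or cubic'' exception to ``$C$ is a line or a cubic,'' and that the $O_d$ bound trivially sits inside the paper's stated $\tilde{O}_d$, are the right sanity checks.
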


An interesting application of our results is the existence of subquadratic
nonuniform and uniform algorithms for the computational version of
this corollary.
\begin{problem}[GPT on curves]
	Let $C_1, C_2, C_3$ be three (not necessarily distinct) parameterized
	constant-degree polynomial curves in $\mathbb{R}^2$, so that each
	$C_i$ can be written $(g_i(t),h_i(t))$ for some polynomials of
	constant degree $g_i,h_i$.
	Given three $n$-sets $S_1 \subset C_1, S_2 \subset C_2, S_3
	\subset C_3$, decide whether there exist any collinear
	triple of points in $S_1 \times S_2 \times S_3$.
\end{problem}
\begin{theorem}\label{thm:gpt-to-3pol}
	 GPT on curves reduces linearily to 3POL\@.
\end{theorem}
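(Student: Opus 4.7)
The plan is to exhibit an explicit trivariate polynomial $F$ of constant degree and three $n$-sets $A_1,A_2,A_3 \subset \mathbb{R}$ such that a YES-answer to the 3POL instance $(F,A_1,A_2,A_3)$ corresponds exactly to the existence of a collinear triple in $S_1 \times S_2 \times S_3$. Both the preprocessing and the size of the produced instance will be linear in $n$, which yields the linear reduction.

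First I would recover parameters. For each point $P \in S_i$, I solve the constant-size system $(g_i(t), h_i(t)) = P$ using Collins' CAD (or resultants) in $O(1)$ real-RAM time; since $g_i,h_i$ have constant degree, there are only $O(1)$ roots $t$, all of which I add to $A_i$. This produces sets $A_i$ of size $O(n)$ in $O(n)$ total time, with the property that $P \in S_i$ if and only if $P = (g_i(t), h_i(t))$ for some $t \in A_i$.

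Next, collinearity of $(x_1,y_1),(x_2,y_2),(x_3,y_3)$ is the vanishing of the affine determinant
\begin{equation*}
(x_1 - x_3)(y_2 - y_3) - (x_2 - x_3)(y_1 - y_3) = 0.
\end{equation*}
Substituting the parametrizations $(x_i,y_i) = (g_i(t_i), h_i(t_i))$ gives
\begin{equation*}
F(t_1,t_2,t_3) \;:=\; \bigl(g_1(t_1) - g_3(t_3)\bigr)\bigl(h_2(t_2) - h_3(t_3)\bigr) - \bigl(g_2(t_2) - g_3(t_3)\bigr)\bigl(h_1(t_1) - h_3(t_3)\bigr),
\end{equation*}
which is a polynomial in $\mathbb{R}[t_1,t_2,t_3]$ of constant degree. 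By construction, $F(t_1,t_2,t_3) = 0$ for some $(t_1,t_2,t_3) \in A_1 \times A_2 \times A_3$ if and only if the corresponding three points of $S_1 \times S_2 \times S_3$ are collinear. Feeding $(F, A_1, A_2, A_3)$ into a 3POL solver therefore answers the GPT instance.

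The main subtlety I expect concerns degenerate configurations. If $C_1 \cup C_2 \cup C_3$ lies in a single line, then $F$ is identically zero and the 3POL instance trivially accepts whenever the sets are nonempty, which is in fact the correct answer. More delicate is the case where the problem demands \emph{proper} collinear triples (three distinct points): a 3POL solution with $t_1 = t_2 = t_3$ (on equal curves) gives a trivially collinear but improper triple. I would address this by splitting into the constant number of cases depending on which of $C_1, C_2, C_3$ coincide and, in each case, augmenting the certificate with constant-size side conditions excluding the repeated-point witnesses; since these conditions are Tarski formulae of constant size, they can be handled uniformly by the 3POL algorithm at no asymptotic cost. This finishes the reduction.
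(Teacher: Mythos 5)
Your proof is correct and follows essentially the same route as the paper: parametrize each point by its preimage under the curve parametrization, and observe that collinearity becomes the vanishing of the constant-degree determinant polynomial in the three parameters (your $2\times 2$ expansion is the same determinant the paper writes in $3\times 3$ form). You supply some extra detail the paper elides — how to recover the parameter set via CAD in $O(1)$ per point, and a discussion of degenerate cases (coincident curves, improper collinear triples) — but these are refinements of the same argument rather than a different approach.
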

\begin{proof}
For each set $S_i$, construct the set $T_i = \{\,t
\st p \in S_i, p = (g_{i}(t),h_{i}(t))\,\}$.
Testing whether there exists a collinear triple
$((g_1(t_1),h_1(t_1)),(g_2(t_2),h_2(t_2)),(g_3(t_3),h_3(t_3)))
\in S_1 \times S_2 \times S_3$ amounts to testing whether any determinant
\begin{displaymath}
\begin{vmatrix}
g_1(t_1)&h_1(t_1)&1\\
g_2(t_2)&h_2(t_2)&1\\
g_3(t_3)&h_3(t_3)&1
\end{vmatrix}
\end{displaymath}
equals zero.
This determinant is a trivariate constant-degree polynomial in
$\mathbb{R}[t_1,t_2,t_3]$.
Solving the original problem amounts thus to deciding whether this polynomial
cancels for any triple $(t_1,t_2,t_3) \in T_1 \times T_2 \times T_3$.
\end{proof}

Note that a similar polynomial predicate exists for testing collinearity in
higher dimension.
\begin{lemma}
Let $p = (p_1,p_2,\ldots,p_d)$, $q = (q_1,q_2,\ldots,q_d)$,
and $r=(r_1,r_2,\ldots,r_d)$ be three points in $\mathbb{R}^d$, then $p$, $q$, and
$r$ are collinear if and only if
\begin{displaymath}
{\left[\sum_{i=1}^{d}(p_i-r_i)(q_i-p_i)\right]}^2 -
\left[\sum_{i=1}^{d}{(p_i-r_i)}^2\right]\left[\sum_{i=1}^{d}{(q_i-p_i)}^2\right] =
0.
\end{displaymath}
\end{lemma}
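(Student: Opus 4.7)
The plan is to recognize the left-hand side as the Cauchy--Schwarz discriminant applied to the two displacement vectors $u = p-r$ and $v = q-p$ in $\mathbb{R}^d$. Indeed, $\sum_{i=1}^d (p_i-r_i)(q_i-p_i) = \langle u,v\rangle$, $\sum_{i=1}^d (p_i-r_i)^2 = \|u\|^2$, and $\sum_{i=1}^d (q_i-p_i)^2 = \|v\|^2$, so the displayed expression is exactly $\langle u,v\rangle^2 - \|u\|^2\|v\|^2$.

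By the Cauchy--Schwarz inequality, this quantity is always nonpositive, and it vanishes if and only if the vectors $u$ and $v$ are linearly dependent. The second step is to translate this linear dependence into the geometric statement: $u = p-r$ and $v = q-p$ are linearly dependent precisely when the three points $p, q, r$ are collinear. Indeed, if $p, q, r$ lie on a common line $L$, both $u$ and $v$ are direction vectors of $L$ (or zero), hence linearly dependent. Conversely, if $u$ and $v$ are linearly dependent, then either one of them is zero (in which case two of the three points coincide, making the triple trivially collinear), or there is a scalar $\lambda$ with $q-p = \lambda(p-r)$, placing $q$ on the affine line through $p$ and $r$.

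The degenerate cases $p=r$ and $p=q$ (where the Cauchy--Schwarz argument is vacuous because one of the vectors is zero) must be checked separately, but in both situations the displayed expression is identically zero and the three points are collinear. There is no serious obstacle: the entire statement is a short reformulation of the equality case of Cauchy--Schwarz, and the proof is essentially a one-line identification of $u$ and $v$ followed by a citation of that classical inequality.
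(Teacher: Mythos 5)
Your argument is correct, and it takes a slightly different and arguably cleaner route than the paper's. You identify the displayed quantity as $\langle u,v\rangle^2-\|u\|^2\|v\|^2$ with $u=p-r$ and $v=q-p$, invoke the equality case of Cauchy--Schwarz to reduce to linear dependence of $u$ and $v$, and then observe that such dependence is exactly collinearity of $p,q,r$. The paper instead derives the same polynomial from scratch: it writes the collinearity condition as $r=p+\lambda(q-p)$ for some $\lambda$, sums the squares of the $d$ componentwise equations to obtain a single quadratic $A\lambda^2+B\lambda+C=0$ (which, being a sum of squares, is nonnegative as a function of $\lambda$), and argues that a real $\lambda$ exists iff the discriminant $B^2-4AC$ vanishes, yielding the stated formula after dividing by $4$. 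The two arguments are two sides of the same coin --- the paper's discriminant calculation is precisely the textbook proof of the Cauchy--Schwarz inequality --- so you are quoting as a black box what the paper re-derives inline. One advantage of your phrasing is that it handles the degenerate cases uniformly: the zero vector is linearly dependent with everything, so $p=q$ or $p=r$ needs no special treatment, whereas the paper's derivation passes through $\lambda=\frac{-B\pm\sqrt{B^2-4AC}}{2A}$, which is undefined when $p=q$ (so $A=0$) even though the final identity still holds in that case.
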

\begin{proof}
Let $a = (a_1,a_2,\ldots,a_d)$, $b = (b_1,b_2,\ldots,b_d)$, and
$c=(c_1,c_2,\ldots,c_d)$ be three points in $\mathbb{R}^d$. The points $a$, $b$,
and $c$ are collinear if and only if $c = a + \lambda (b-a)$ for some unique
$\lambda \in \mathbb{R}$ that is
\begin{gather*}
	(a-c) + \lambda (b-a) = 0\\
	(a_i-c_i) + \lambda (b_i-a_i) = 0, \forall i \in [d]\\
	\sum_{i=1}^{d}{\left[(a_i-c_i) + \lambda (b_i-a_i)\right]}^2 = 0\\
\sum_{i=1}^{d}\left[{(b_i-a_i)}^2 \lambda^2 + 2(a_i-c_i)(b_i-a_i) \lambda +
{(a_i-c_i)}^2\right] = 0\\
\underbrace{\left[\sum_{i=1}^{d}{(b_i-a_i)}^2\right]}_{A} \lambda^2 +
\underbrace{\left[2\sum_{i=1}^{d}(a_i-c_i)(b_i-a_i)\right]}_{B} \lambda +
\underbrace{\left[\sum_{i=1}^{d}{(a_i-c_i)}^2\right]}_{C} = 0\\
\lambda = \frac{-B \pm \sqrt{B^2-4AC}}{2A}
\end{gather*}

For $\lambda$ to exist and be unique $B^2-4AC$ must be zero. Hence, $a$, $b$, and $c$ are collinear if and only if

\begin{displaymath}
	{\left[2\sum_{i=1}^{d}(a_i-c_i)(b_i-a_i)\right]}^2
	- 4 \left[\sum_{i=1}^{d}{(a_i-c_i)}^2\right]\left[\sum_{i=1}^{d}{(b_i-a_i)}^2\right]
	= 0
\end{displaymath}
\end{proof}

Moreover, the improvement that we obtain in the time complexity of 3POL
can be exploited to boost the number of curves we pick the points from.
\begin{theorem}
	Let $C_1,C_2,\ldots,C_k$ be $k=o({(\log n)}^\frac{1}{6}/{(\log \log n)}^\frac{1}{2})$
	(not necessarily distinct)
	constant-degree polynomial curves in $\mathbb{R}^d$.
	Given $k$ $n$-sets $S_1 \subset C_1, S_2 \subset C_2, \ldots, S_k
	\subset C_k$, deciding whether there exists any collinear
	triple of points in any triple of sets $S_{i_1} \times S_{i_2} \times
	S_{i_3}$ can be solved in subquadratic time.
\end{theorem}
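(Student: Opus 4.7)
The plan is to reduce the problem to $O(k^3)$ separate instances of 3POL and invoke Theorem~\ref{thm:implicit:uniform-bis} on each. First, I would generalize Theorem~\ref{thm:gpt-to-3pol} from the planar case to arbitrary dimension $d$ by using the higher-dimensional collinearity predicate from the lemma above. For each ordered triple $(i_1, i_2, i_3) \in [k]^3$, let $C_{i_j}$ be parameterized coordinatewise by constant-degree polynomials $(g_{i_j,1}(t), \ldots, g_{i_j,d}(t))$. Substituting $p = (g_{i_1,\ell}(t_1))_\ell$, $q = (g_{i_2,\ell}(t_2))_\ell$, $r = (g_{i_3,\ell}(t_3))_\ell$ into the collinearity predicate produces a trivariate polynomial $F_{i_1,i_2,i_3} \in \mathbb{R}[t_1,t_2,t_3]$ whose degree depends only on $d$ and on the degrees of the $g_{i,\ell}$, hence is constant. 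Pulling back the sets $S_{i_j}$ to their parameter values $T_{i_j} \subset \mathbb{R}$ of size $n$, deciding whether $S_{i_1} \times S_{i_2} \times S_{i_3}$ contains a collinear triple is then equivalent to deciding whether $F_{i_1,i_2,i_3}(a,b,c) = 0$ for some $(a,b,c) \in T_{i_1} \times T_{i_2} \times T_{i_3}$, which is an instance of 3POL of size $n$.

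Next, I would iterate over all $O(k^3)$ triples $(i_1,i_2,i_3)$ and solve each 3POL instance with the algorithm of Theorem~\ref{thm:implicit:uniform-bis}, which runs in time $O(n^2 (\log \log n)^{3/2}/(\log n)^{1/2})$. If any call returns a positive answer, we report the existence of a collinear triple; otherwise we report none. Summing over all triples yields a total running time of
\begin{displaymath}
O\!\left(k^3 \cdot \frac{n^2 (\log \log n)^{3/2}}{(\log n)^{1/2}}\right).
\end{displaymath}

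Finally, substituting the hypothesized bound $k = o((\log n)^{1/6}/(\log \log n)^{1/2})$ gives $k^3 (\log \log n)^{3/2}/(\log n)^{1/2} = o(1)$, so the total running time is $o(n^2)$, which is subquadratic. There is no real obstacle: the only subtle check is that the reduction preserves constant polynomial degree across all $O(k^3)$ instances, which holds because $d$ and the parameterization degrees are constants, so each $F_{i_1,i_2,i_3}$ is a constant-degree polynomial to which Theorem~\ref{thm:implicit:uniform-bis} directly applies.
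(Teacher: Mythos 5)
Your proposal is correct and matches the paper's approach exactly: reduce to $O(k^3)$ instances of 3POL via the collinearity predicate, solve each with the subquadratic 3POL algorithm, and observe that $k^3$ is small enough that the total remains $o(n^2)$. You spell out the reduction in more detail (including the generalization to $\mathbb{R}^d$ via the higher-dimensional collinearity lemma), but the argument is the same.
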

\begin{proof}
	Solve a 3POL problem for each choice of $S_{i_1} \times S_{i_2} \times
	S_{i_3}$. There are $o({(\log n)}^\frac{1}{2}/{(\log \log n)}^\frac{3}{2})$
	such choices.
\end{proof}

\subsection{Incidences on unit circles}

Raz, Sharir and Solymosi~\cite{RSS15} mention the following problem as a
special case of the framework they introduce.
Let $p_1,p_2,p_3$ be three distinct points in the plane, and, for $i=1,2,3$,
let $\mathcal{C}_i$ be a family of $n$ unit circles (a circle of radius $1$)
that pass through $p_i$.  Their goal is to obtain an upper bound on the number
of \emph{triple points}, which are points that are incident to a circle of each
family.
They prove:
\begin{theorem}
	Let $p_1,p_2,p_3$ be three distinct points in the plane, and, for $i=1,2,3$,
	let $\mathcal{C}_i$ be a family of $n$ unit circles that pass through $p_i$.
	Then the number of points incident to a circle of each family is $O(n^{11/6})$.
\end{theorem}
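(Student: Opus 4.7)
The plan is to reduce the counting of triple points to an instance of the Elekes-Szabó problem on a Cartesian product, and then invoke the Raz-Sharir-de Zeeuw bound. First I would parameterize each family: any unit circle through $p_i$ is determined by its center $c_i$, which lies on the unit circle around $p_i$, so I can write $c_i = p_i + (\cos t_i, \sin t_i)$ and represent $\mathcal{C}_i$ as an $n$-element set of angles $T_i \subset [0, 2\pi)$. A triple point $q$ determines, for each $i$, at most two circles of $\mathcal{C}_i$ through $q$ (the unit circles through both $p_i$ and $q$), so at most $O(1)$ parameter triples per triple point; conversely, three unit circles meet in at most two points. Hence, up to constant factors, it suffices to bound the number of triples $(t_1, t_2, t_3) \in T_1 \times T_2 \times T_3$ for which the three corresponding circles are concurrent.

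Next I would cast concurrency as a polynomial equation in the parameters. A common point of three unit circles with centers $c_1, c_2, c_3$ must be equidistant from all three centers at distance $1$, so it is the circumcenter of $\triangle c_1 c_2 c_3$ and the circumradius must equal $1$. Combining the formula $R = \frac{abc}{4K}$ with Heron's formula yields a polynomial relation in the coordinates of the $c_i$, hence in $\sin t_i$ and $\cos t_i$. The rational substitution $u_i = \tan(t_i / 2)$ then turns this into a constant-degree polynomial equation $G(u_1, u_2, u_3) = 0$ with $G \in \mathbb{R}[u_1, u_2, u_3]$, and the sets $T_i$ transport to $n$-element sets $U_i \subset \mathbb{R}$. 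Applying the Raz-Sharir-de Zeeuw theorem to $G$ immediately yields the desired $O(n^{11/6})$ bound on the number of solutions in $U_1 \times U_2 \times U_3$, provided $G$ is not of group-related form.

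The main obstacle will be verifying this non-degeneracy, i.e.\ ruling out that $G$ is group-related. I would attack this by analyzing the geometry directly: if $G$ were group-related, the surface $\{G=0\}$ would support a rigid one-parameter algebraic group action, which I would rule out by exhibiting generic $(t_1, t_2)$ for which the slice $\{t_3 : G(t_1,t_2,t_3) = 0\}$ depends non-trivially on both coordinates and does not factor through any one-dimensional algebraic group, whether $(\mathbb{R}, +)$, $(\mathbb{R}^*, \cdot)$, or an elliptic curve group law. Since $p_1, p_2, p_3$ are distinct and positioned independently, the relative distances $|p_i - p_j|$ enter $G$ non-trivially, which heuristically prevents any hidden group structure; but turning this into a rigorous verification of the Elekes-Szabó non-degeneracy hypothesis is where the technical work lies.
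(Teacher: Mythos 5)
Your proposal targets a statement that the paper quotes from Raz, Sharir, and Solymosi~\cite{RSS15} but does not itself prove; there is therefore no in-paper proof to compare against. That said, the route you sketch is essentially the one the cited authors take, and it is also consistent with the reduction this paper does give for the algorithmic analogue (the UCSPTUC argument): parameterize centers by points on unit circles about the $p_i$, apply the tangent half-angle substitution $u_i=\tan(t_i/2)$ to obtain rational coordinates, and express concurrency via $R=1$ together with Heron's formula --- which, in the squared-distance variables $X,Y,Z$, gives precisely the polynomial $X^2+Y^2+Z^2-2XY-2XZ-2YZ+XYZ$ the paper writes down. Your multiplicity bookkeeping (at most two circles of each family through a fixed triple point, at most two common points of three distinct unit circles) is also sound.

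The genuine gap is the non-degeneracy verification, and your plan for it is not yet an argument. ``Group related form'' in the Elekes--Szab\'o / Raz--Sharir--de~Zeeuw sense is not literally that the zero surface of $G$ carries a rigid one-parameter group action; it is the existence of local analytic (or algebraic) reparameterizations $\varphi_1,\varphi_2,\varphi_3$ under which $\{G=0\}$ becomes $\varphi_1(u_1)+\varphi_2(u_2)+\varphi_3(u_3)=0$ on an open subset. Ruling this out requires an argument tied to the explicit structure of $G$ --- for instance, verifying the second-order differential (mixed partial) criterion of Elekes--Szab\'o or the equivalent one in Raz--Sharir--de~Zeeuw, or reducing to an already-classified non-special surface --- and a heuristic that the distances $|p_i-p_j|$ ``enter non-trivially'' does not accomplish this; additive and multiplicative normal forms can and do arise from polynomials whose coefficients depend non-trivially on geometric data. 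As written this step is missing, and it is exactly where the cited proof does its real work.
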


They observe that the following dual formulation is equivalent to their
original problem:
\begin{theorem}
	Let $C_1,C_2,C_3$ be three unit circles in $\mathbb{R}^2$, and, for each
	$i=1,2,3$, let $S_i$ be a set of $n$ points lying on $C_i$. Then the number of
	unit circles, spanned by triples of points in $S_1 \times S_2 \times S_3$, is
	$O(n^{11/6})$.
\end{theorem}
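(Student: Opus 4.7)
The plan is to derive this bound by exhibiting an explicit geometric duality between this setting and that of the preceding theorem of Raz, Sharir and Solymosi. The key observation is the symmetry of unit-circle incidence: a unit circle centered at a point $c$ passes through a point $p$ if and only if $|c-p|=1$, a relation symmetric in $c$ and $p$. Interchanging the roles of centers and passing points will yield the duality.

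Concretely, let $p_i$ denote the center of $C_i$, and for each $s \in S_i$, let $C(s)$ denote the unit circle centered at $s$. Since $|s - p_i| = 1$, the circle $C(s)$ passes through $p_i$. Therefore $\mathcal{C}_i := \{\,C(s) \st s \in S_i\,\}$ is a family of $n$ unit circles through the fixed point $p_i$, which is precisely the input to the preceding theorem. Moreover, the three points $p_1,p_2,p_3$ are distinct as soon as the three input circles $C_1,C_2,C_3$ are distinct; the case of coincident $C_i$ is handled by the same argument with a trivial modification.

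Next, I identify the quantity being counted on each side. A triple $(s_1, s_2, s_3) \in S_1 \times S_2 \times S_3$ spans a unit circle if and only if there exists a center $q \in \mathbb{R}^2$ with $|q - s_i| = 1$ for $i=1,2,3$, equivalently $q \in C(s_1) \cap C(s_2) \cap C(s_3)$, that is, $q$ is a \emph{triple point} for the families $\mathcal{C}_1, \mathcal{C}_2, \mathcal{C}_3$. Distinct unit circles spanned correspond to distinct centers $q$, and every such $q$ is a triple point of the dual configuration. Thus the number of distinct unit circles spanned is bounded above by the number of triple points, which by the preceding theorem is $O(n^{11/6})$.

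The only subtlety is a small multiplicity accounting: each spanned unit circle $U$ may arise from multiple spanning triples, but at most a constant number, since any two unit circles meet in at most two points and hence $|U \cap S_i| \le 2$ for every $i$. Consequently, the map sending each spanning triple to the center of the unit circle it spans is at most $8$-to-$1$, and the asymptotic bound is preserved. I do not foresee any real obstacle beyond checking this constant-factor bookkeeping; the argument essentially makes precise the ``equivalence'' observation already made by the authors of the cited paper.
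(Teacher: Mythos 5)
The paper does not actually prove this theorem; it cites it verbatim from Raz, Sharir and Solymosi~\cite{RSS15}, merely noting that those authors observe its equivalence to the primal formulation (unit circles through three fixed points) via a point--circle duality. Your proof correctly spells out that duality: replacing each $s \in S_i$ by the unit circle centered at $s$ turns the dual instance into a primal one with the centers $p_i$ of the $C_i$ as the three fixed points, and a unit circle spanned by $(s_1,s_2,s_3)$ corresponds, via its center $q$, to a triple point of the dual families; this map on distinct spanned circles is injective, so the $O(n^{11/6})$ bound on triple points transfers immediately. The argument is sound and is precisely the observation the paper alludes to. One small remark: the closing multiplicity paragraph is unnecessary for the stated conclusion, which counts distinct spanned unit circles rather than spanning triples --- injectivity of the center map already gives the bound. (Your $8$-to-$1$ claim would also need a caveat when the spanned circle coincides with one of the $C_i$, but since that case is not needed here it does not affect the proof.)
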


Our new algorithms indeed allow us to solve the decision version of their
problems in subquadratic time.
\begin{problem}[Unit Circles Spanned by Points on Three Unit Circles (UCSPTUC)]
	Let $C_1,C_2,C_3$ be three unit circles in $\mathbb{R}^2$ with centers
	$c_1,c_2,c_3$, and, for each $i=1,2,3$, let $S_i =
	\{\,(x_{i,1},y_{i,1}),(x_{i,2},y_{i,2}),\ldots,(x_{i,n},y_{i,n})\,\}$ be a
	set of $n$ points lying on $C_i$. Decide whether any triple of point
	$(a,b,c) \in S_1 \times S_2 \times S_3$ spans a unit circle.
\end{problem}

\begin{theorem}
	UCSPTUC can be solved in
	$O(n^2 {(\log \log n)}^\frac{3}{2} / {(\log n)}^\frac{1}{2})$ time.
\end{theorem}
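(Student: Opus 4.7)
The plan is to reduce UCSPTUC to an instance of 3POL in linear time, and then apply Theorem~\ref{thm:implicit:uniform-bis}. First I would rationally parameterize each of the three circles $C_i$ centered at $c_i = (\alpha_i,\beta_i)$ by $t \mapsto \bigl(\alpha_i + \tfrac{1-t^2}{1+t^2},\, \beta_i + \tfrac{2t}{1+t^2}\bigr)$. For each $i$, I transform the input $n$-set $S_i$ into a set $T_i$ of $n$ reals where $t_{i,j} \in T_i$ is the parameter corresponding to the input point $(x_{i,j},y_{i,j}) \in S_i$ (the unique antipodal point, if it occurs, can be handled as a constant-sized side case in $O(n)$ time). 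This step takes $O(n)$ time and replaces the geometric input with three $n$-sets of real numbers.

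Next I would exhibit a constant-degree polynomial $F \in \mathbb{R}[t_1,t_2,t_3]$ such that three points on $C_1,C_2,C_3$ with parameters $(t_1,t_2,t_3)$ span a unit circle if and only if $F(t_1,t_2,t_3)=0$. The condition ``the three points lie on a common unit circle'' is equivalent to saying that the circumradius $R$ of the triangle they define equals $1$, i.e., $|p_1-p_2|^2 |p_2-p_3|^2 |p_3-p_1|^2 = 16\,\mathrm{Area}(p_1,p_2,p_3)^2$, where both sides are polynomials of constant degree in the six coordinates of the three points. Substituting the rational parameterization of each $C_i$ and clearing the denominators $(1+t_i^2)$ gives a polynomial equation $F(t_1,t_2,t_3) = 0$ whose degree is bounded by an absolute constant (independent of $n$ and of the centers $c_i$). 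The three auxiliary degenerate subcases (two parameters giving the same geometric point; one of the three points being the omitted antipode) contribute only constantly many extra 3POL-style checks and an $O(n)$ preprocessing.

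Having built $F$ and the sets $T_1,T_2,T_3 \subset \mathbb{R}$ each of size $n$, the question ``do any $t_1\in T_1,\,t_2\in T_2,\,t_3\in T_3$ satisfy $F(t_1,t_2,t_3)=0$?'' is exactly an instance of 3POL. Applying Theorem~\ref{thm:implicit:uniform-bis} solves this in $O\!\bigl(n^2 (\log\log n)^{3/2}/(\log n)^{1/2}\bigr)$ time, and since the preprocessing and the handling of degeneracies cost only $O(n)$, the total running time matches the claimed bound.

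The only substantive step is exhibiting $F$ and arguing that its degree does not grow with the input parameters; the main subtlety is bookkeeping the degenerate/antipodal cases correctly so that the reduction is complete and no spurious solutions are introduced. Both are routine once the rational parameterization is in place, so the reduction goes through cleanly and the theorem follows as a direct corollary of Theorem~\ref{thm:implicit:uniform-bis}.
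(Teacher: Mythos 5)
Your high-level plan matches the paper: rationally parameterize each circle, convert each input point to a scalar parameter, use the circumradius condition $|p_1-p_2|^2|p_2-p_3|^2|p_3-p_1|^2 = 16\,\mathrm{Area}^2$ (equivalent to the paper's $X^2+Y^2+Z^2-2XY-2XZ-2YZ+XYZ=0$ via Heron's formula), clear denominators to get a constant-degree trivariate $F$, and feed it to the uniform 3POL algorithm.

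There is, however, one substantive step that you wave away and that the paper treats carefully: \emph{how do you compute $t_{i,j}$ from $(x_{i,j},y_{i,j})$ in a real-RAM that only has $+,-,\times,\div$?} The paper inverts the parameterization via $t_{i,j}=\sqrt{(1-x_{i,j}+c_i^x)/(1+x_{i,j}-c_i^x)}$, observes that this square root is \emph{not} available in the model, and is therefore forced to introduce a ``Modified 3POL'' problem where the input sets contain $t_{i,j}^2$ instead of $t_{i,j}$, with the algorithm adjusted to existentially quantify $t_i$ with the side constraint $t_i^2=a_i$ inside every Tarski-sentence test. Your writeup never says how $t_{i,j}$ is obtained, and simply asserts ``this step takes $O(n)$ time.'' As written, that is a gap: if you implicitly use the paper's square-root formula, the step is not executable in the stated model and you need the Modified-3POL machinery.

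That said, your proposal is rescuable, and arguably in a cleaner way than the paper's own proof. Since the parameterization you wrote is the standard Weierstrass one, its inverse is rational: for $(x,y)$ on the circle with $x \ne \alpha-1$,
\begin{displaymath}
  t = \frac{y-\beta}{1+x-\alpha},
\end{displaymath}
which uses only a subtraction and a division (one can check that $\alpha + \frac{1-t^2}{1+t^2}=x$ and $\beta+\frac{2t}{1+t^2}=y$ by substituting $(x-\alpha)^2+(y-\beta)^2=1$). With this choice the parameters $t_{i,j}$ are honest real-RAM computable quantities, the reduction is a plain (not ``modified'') 3POL instance, and no case split into eight arcs is needed — only the single excluded point $(\alpha-1,\beta)$, which you already flag. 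If you make this inverse explicit, your proof goes through directly as a corollary of Theorem~\ref{thm:implicit:uniform-bis} and avoids the extra machinery the paper uses; if you do not, the claim that $T_i$ is computable in $O(n)$ time in this model is unsupported.
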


\begin{proof}
	Without loss of generality, assume all input points lie on the right
	$y$-monotone arc of their respective circle. All other seven cases can be
	handled similarly. We can also assume that no input point is the top or
	bottom vertex of its circle, rotating the plane if necessary.

	Given three points $p,q,r$, let
	\begin{displaymath}
		x=\lVert p - q \rVert,
		X=x^2,
		y=\lVert p - r \rVert,
		Y=y^2,
		z=\lVert q - r \rVert,
		Z=z^2
	\end{displaymath}
	Testing if the three points $a,b,c$ span a unit circle amounts to testing
	whether
	\begin{displaymath}
		X^2 + Y^2 + Z^2 - 2XY - 2XZ - 2YZ + XYZ = 0.
	\end{displaymath}

	The fact that the input points lie on the right $y$-monotone arc of unit
	circles of centers $c_1,c_2,c_3$ allows us to get down to a single variable
	per point. Let $c_i = (c_i^x,c_i^y)$ and
	$t_{i,j} = \sqrt{\frac{1 - x_{i,j} + c_i^x}{1 + x_{i,j} - c_i^x}}$. Then
	\begin{displaymath}
		(x_{i,j},y_{i,j}) = c_i + \left(\frac{1-t_{i,j}^2}{1+t_{i,j}^2},\frac{2
		t_{i,j}}{1+t_{i,j}^2}\right).
	\end{displaymath}

	Combining those two observations with some algebraic manipulations, one can
	show that there exists some trivariate polynomial $F$ of degree at most
	$24$ that cancels on $t_1,t_2,t_3$ when the points $c_1
	+\left(\frac{1-t_1^2}{1+t_1^2},\frac{2t_1}{1+t_1^2}\right)$, $c_2
	+\left(\frac{1-t_2^2}{1+t_2^2},\frac{2t_2}{1+t_2^2}\right)$, and $c_3
	+\left(\frac{1-t_3^2}{1+t_3^2},\frac{2t_3}{1+t_3^2}\right)$ span a unit
	circle.

	Hence, the sets
	$\{\,t_{1,1},t_{1,2},\ldots,t_{1,n}\,\}$,
	$\{\,t_{2,1},t_{2,2},\ldots,t_{2,n}\,\}$, and
	$\{\,t_{3,1},t_{3,2},\ldots,t_{3,n}\,\}$
	together with $F$ gives an instance of 3POL we can solve in subquadratic
	time with our new algorithms.

	Unfortunately, the computation $\sqrt{x}$ is not allowed in our model, and
	so, we cannot compute $t_{i,j}$.
	However, we can generalize the 3POL problem to make it fit:
	\begin{problem}[Modified 3POL]
		Let $F \in \mathbb{R}[x,y,z]$ be a trivariate polynomial of constant degree,
		given three sets $A$, $B$, and $C$, each containing $n$ real numbers, decide
		whether there exist $a \in A$, $b \in B$, and $c \in C$ such that
		\begin{displaymath}
			\exists t_1,t_2,t_3
			t_1^2 = a \land
			t_2^2 = b \land
			t_3^2 = c \land
			F(t_1,t_2,t_3) = 0.
		\end{displaymath}
	\end{problem}

	The sets of numbers (all computable in our models)
	$\{\,t_{1,1}^2,t_{1,2}^2,\ldots,t_{1,n}^2\,\}$,
	$\{\,t_{2,1}^2,t_{2,2}^2,\ldots,t_{2,n}^2\,\}$, and
	$\{\,t_{3,1}^2,t_{3,2}^2,\ldots,t_{3,n}^2\,\}$
	together with $F$ give an instance of this modified version of 3POL\@.

	We can tweak our algorithms so that they work for this new version of
	3POL\@.
	For each decision we make on the FOTR, we prefix it with an existential
	quantifier and a condition of the type $t_i^2=x$, with $x$ the square of
	$t_i$, when we want to reference $t_i$ in the formula we want to test.

	This new algorithm answers positively if and only if the original problem
	contains a triple of points spanning a unit circle.
\end{proof}

\subsection{Points spanning unit triangles}

A similar problem, namely counting the number of input point triples spanning
an area $S$ triangle (provided they lie on a few curves), can also easily be
reduced to 3POL\@.
The polynomial to look at in this case is
\begin{displaymath}
	F(x,y,z) = X^2 + Y^2 + Z^2 - 2XY - 2XZ - 2YZ + 16 S^2.
\end{displaymath}

Note that when the input points lie in the plane, the number of solutions is
more than quadratic~\cite{RS15,RSS15b}.

\end{document}